\newtheorem{theorem}{\textbf{Theorem}}
\newtheorem{lemma}{\textbf{Lemma}}
\newtheorem{proposition}{\textbf{Proposition}}
\newtheorem{corollary}{\textbf{Corollary}}
\newtheorem{remark}{\textbf{Remark}}
\newtheorem{definition}{\textbf{Definition}}
\newtheorem{notation}{\textbf{Notation}}
\newtheorem{remark*}{Remark}
\newtheorem{example}{\textbf{Example}}
\newtheorem{open}[theorem]{\textbf{Open problem}}
\newtheorem{conjecture}{\textbf{Conjecture}}
\journal{Theoretical Computer Science}
\newcommand{\N}{\mathbb{N}}
\newcommand{\Z}{\mathbb{Z}}
\newcommand{\B}{\mathbb{B}}
\newcommand{\1}{\mathtt{1}}
\newcommand{\0}{\mathtt{0}}
\newcommand{\bool}{\{\0,\1\}}
\renewcommand{\O}{\mathcal{O}}
\newcommand{\Poly}{{\mathsf{P}}}
\newcommand{\PPoly}{{\oplus\mathsf{P}}}
\newcommand{\ModkPoly}[1]{{\mathsf{Mod_{#1}P}}}
\newcommand{\ModPoly}{{\mathsf{ModP}}}
\newcommand{\SPoly}{{\#\mathsf{P}}}
\newcommand{\NP}{{\mathsf{NP}}}
\newcommand{\coNP}{{\mathsf{coNP}}}
\newcommand{\PSPACE}{{\mathsf{PSPACE}}}
\newcommand{\coPSPACE}{{\mathsf{coPSPACE}}}
\newcommand{\FPoPSPACE}{{\mathsf{FP}}^{\mathsf{PSPACE}}}
\newcommand{\entiers}[2][2]{\ifthenelse{\equal{#1}{2}}
{\llbracket #2\rrbracket}
{\ifthenelse{\equal{#1}{0}}{
\ifthenelse{\equal{11}{\the\catcode`#2}}
{\{0,\ldots,#2 -1\}}
{\newcounter{finaln}
\setcounter{finaln}{#2 -1} 
\{0,\ldots,\thefinaln \}}}
{\{1,\ldots,#2\}}}}
\renewcommand{\int}[1]{[{#1}]}
\newcommand{\mmjblock}[2]{{#1}_{(#2)}}
\newcommand{\mmjoblock}[2]{{#1}_{\{#2\}}}
\newcommand{\mmjoblocklimit}[2]{{#1}^{\Omega}_{\{#2\}}}
\newcommand{\Wl}{(W_i)_{i\in\entiers{\ell}}}
\newcommand{\Sk}{\{S_k\}_{k\in\entiers{s}}}
\newcommand{\ispe}{\hat{\imath}}
\newcommand{\jspe}{\hat{\jmath}}
\newcommand{\kspe}{\hat{k}}
\newcommand{\BPn}{\mathsf{BP}_n}
\newcommand{\BSn}{\mathsf{BS}_n}
\newcommand{\BS}[1]{\mathsf{BS}_{#1}}
\newcommand{\lcm}{\text{lcm}}
\renewcommand{\gcd}{\text{gcd}}
\newcommand{\decisionpb}[4]{\fbox{\parbox{.98\textwidth}{
  {#1} ({\bf #2})\\
  {Input:} #3\\
  {Question:} #4
}}}
\begin{document}

\begin{frontmatter}

\title{Foundations of block-parallel automata networks}

\author[lis,univpub]{K{\'e}vin Perrot} 
\author[lis,univpub]{Sylvain Sen{\'e}}
\author[lis]{L{\'e}ah Tapin}

\affiliation[lis]{organization={Aix-Marseille Univ., CNRS, LIS},
            city={Marseille},
            country={France}}

\affiliation[univpub]{organization={Université publique},
            city={Marseille},
            country={France}}

\begin{abstract}
  We settle the theoretical ground for the study of automata networks under block-parallel update schedules,
  which are somehow dual to the block-sequential ones,
  but allow for repetitions of automaton updates.
  This gain in expressivity brings new challenges,
  and we analyse natural equivalence classes of update schedules:
  those leading to the same dynamics, and to the same limit dynamics, for any automata network.
  Countings and enumeration algorithms are provided, for their numerical study.
  We also prove computational complexity bounds for many classical problems,
  involving fixed points, limit cycles, the recognition of subdynamics, reachability, \emph{etc}.
  The $\PSPACE$-completeness of computing the image of a single configuration lifts
  the complexity of most problems, but the landscape keeps some relief,
  in particular for reversible computations.
\end{abstract}



\begin{keyword}
  Automata networks\sep
  block-parallel\sep
  update modes\sep
  equivalence classes\sep
  counting\sep
  enumeration\sep
  computational\sep
  complexity.



\end{keyword}

\end{frontmatter}

\section{Introduction}

Since the seminal work of McCulloch and Pitts on neural networks~\cite{J-McCulloch1943},
distributed models of computation, where individual entities collectively
perform a global computation through local interactions,
received a great amount of attention.
Automata networks are such a model, which have successfully been employed
for the modelling of gene regulation
mechanisms~\cite{J-Kauffman1969,J-Thomas1973}.
In particular, the biological interpretation of the limit dynamics of automata networks
matches experimental results~\cite{J-Mendoza1998,C-Akutsu1999,J-Giacomantonio2010,J-Wooten2019}.

One can readily observe that distributed models of computation
are highly sensitive to variations in the update schedule among its entities.
Regarding Boolean automata networks,
despite the fact that fixed points obtained under the parallel update mode
are also fixed points for any other update schedule~\cite{B-Goles1990},
specific update modes may generate additional fixed points~\cite{J-Demongeot2020,BC-Pauleve2022}.
Limit cycles are also known to greatly depend on the update
mode~\cite{J-Demongeot2008,J-Goles2008,J-Aracena2009,C-Goles2010,J-Aracena2013}.

In this work, we propose to address a new family of update schedules,
namely the block-parallel update modes,
which are motivated by the discovery of the importance of chromatin dynamics
in regulatory networks~\cite{J-Demongeot2020}.
Indeed, it fits our current understanding of the temporality of mRNA transcriptional
machinery~\cite{J-Hansen1992,J-Benecke2006,J-Hubner2010,J-Fierz2019}.
Block-parallel update modes permit local update repetitions,
opening new doors towards the phenomenological modelling in systems biology.
Their novel features challenge intuitions erected upon
classical block-sequential ones, and 
our objective is to build solid theoretical foundations for their study.

On the one hand, we study the combinatorics of block-parallel modes,
and provide countings and enumeration algorithms
for various meaningful equivalence relations.
This is the grounding necessary to perform efficient numerical simulations.
On the other hand, we analyse the computational cost of standard
decision problems one may be willing to answer on the dynamics
of Boolean automata networks with these new update schedules
(related ot fixed points, limit cycles, reachability, \emph{etc}).
While block-parallel update modes seem to be more expressive,
in particular regarding the limit dynamics,
this gain of expressivity comes at a high cost in terms of simulation.
Indeed, most problems traditionally $\NP$-complete become
$\PSPACE$-complete, however there are notable exceptions.\\
Note: This article is a long version dedicated to put together basic results on 
block-parallel update modes in the framework of automata networks which have been published respectively in the proceedings of SOFSEM 2024 and SAND 2024~\cite{C-Perrot2024a,C-Perrot2024b}.


In Section~\ref{s:def}, we present the main definitions and notations.
This paper is divided in two main sections, the first one focused on the counting
and enumeration of block-parallel update modes and the second on complexity issues.
The first part, Section~\ref{s:contributions}, is split into five subsections.
Subsection~\ref{s:BSinterBP} serves as a sort of introduction, by dealing with the
intersection between block-sequential and block-parallel update modes.
The three following subsections each present a subset of block-parallel update modes,
by giving a formula for counting the elements of this subset, and an algorithm to enumerate
them.
Subsection~\ref{s:BP} deals with the whole set of block-parallel update modes,
Subsection~\ref{s:BP0} with the updates schedules up to dynamical equality, and 
Subsection~\ref{s:BPstar} with the update schedules up to dynamical isomorphism
on the limit set, which is the one we focused more on.
The last subsection of this part presents the results from our implementations of
the aforementioned algorithms.
Section~\ref{s:complexity} exposes our results regarding complexity problems involving
block-parallel update schedules.
In Subsection~\ref{ss:image},
we first characterize classical problems on computing images, preimages, fixed points
and limit cycles : they all jump from $\NP$ (under block-sequential update modes) to 
$\PSPACE$ (under block-parallel update modes).
In Subsection~\ref{ss:reach},
we then prove a general bound on the recognition of functional subdynamics.
Regarding global properties, recognizing bijective dynamics remains $\coNP$-complete,
and recognizing constant dynamics becomes $\PSPACE$-complete.
The case of identity recognition is much subtler, and we provide three incomparable bounds:
a trivial $\coNP$-hardness one, a tough $\ModPoly$-hardness, and a $\FPoPSPACE$-completeness
result derived from the recent literature.
Finally, we summarize our results and expose some perspectives in Section~\ref{s:conclusion}.

\section{Definitions and state of the art}
\label{s:def}

We denote the set of integers by $\entiers{n} = \entiers[0]{n}$,
the Booleans by $\B = \bool$,
the $i$-th component of a vector $x \in \B^n$ by $x_i \in \B$,
and the restriction of $x$ to domain $I\subset\entiers{n}$ by $x_I \in \B^{|I|}$.
Let $e_i$ be the $i$-th base vector,
and $\forall x, y \in \B^n$, let $x+y$ denote the bitwise addition modulo two.
Let $\sigma^i$ denote the circular-shift of order $i\in\Z$ on sequences
(shifting the element at position $0$ towards position $i$).
For two graphs $G = (V(G),A(G))$ and $H = (V(H),A(H))$,
we denote by $G \sim H$ when they are isomorphic,
i.e.,~when there is a bijection
$\pi : V(G) \to V(H)$ such that $(x,y) \in A(G) \iff (\pi(x), \pi(y)) \in A(H)$.
We denote by $G\sqsubset H$ when $G$ is a subgraph of $H$,
i.e.,~when $G'$ such that $G'\sim G$ can be obtained from $H$ by vertex and arc deletions.

\noindent
\textbf{Boolean automata network}\quad
A \emph{Boolean automata network} (BAN) is a discrete dynamical system on $\B^n$.
A configuration $x\in\B^n$ associates to each of the $n$ automata among $\entiers{n}$ a Boolean state among $\B$.
The individual dynamics of a each automaton $i\in\entiers{n}$
is described by a local function $f_i:\B^n\to\B$ giving its new state according to the current configura\-tion.
To get a dynamics, one needs to settle the order in which the automata update their state
by application of their local function.
That is, an \emph{update schedule} must be given.
The most basic is the parallel update schedule,
where all automata update their state synchronously at each step,
formally as $f:\B^n\to\B^n$ defined by $\forall x\in\B^n:f(x)=(f_0(x),f_1(x),\dots,f_{n-1}(x))$.
In this work, we concentrate on the block-parallel update schedule,
motivated by the biological context of gene regulatory networks,
where each automaton is a gene and the dynamics give clues on cell phenotypes.
Not all automata will be update simultaneously as in the parallel update mode.
They will instead be grouped by subsets.
For simplicity in defining the local functions of a BAN,
we extend the $f_i:\B^n\to\B$ notation to
subsets $I\subseteq\entiers{n}$ as $f_I:\B^n\to\B^{|I|}$.
We also denote $\mmjblock{f}{I}:\B^n\to\B^n$ the update of automata from subset $I$, defined as:
\[
  \forall i \in \entiers{n}:
  \mmjblock{f}{I}(x)_i =
  \begin{cases}
    f_i(x) & \text{if } i \in I\\
    x_i & \text{otherwise.}
  \end{cases}
\]

\noindent
\textbf{Block-sequential update schedule}\quad
A \emph{block-sequential} update schedule is an \emph{ordered partition} of $\entiers{n}$,
given as a sequence of subsets $\Wl$ where $W_i \subseteq \entiers{n}$
is a \emph{block}.
The automata within a block are updated simultaneously,
and the blocks are updated sequentially.
During one iteration (\emph{step}) of the network,
the state of each automaton is updated exactly once.
The update of each block is called a \emph{substep}.
This update mode received great attention on many aspects.
The concept of the \emph{update digraph} is introduced in~\cite{J-Aracena2011}
and characterized in~\cite{J-Aracena2013b}
to capture equivalence classes of block-sequential update schedules (leading to the same dynamics).
Conversions between block-sequential and parallel update schedules are investigated
in~\cite{C-Perrotin2023} (how to parallelize a block-sequential update schedule),
\cite{C-Goles2010} (the preservation of cycles throughout the parallelization process),
and~\cite{C-Bridoux2017} (the cost of sequentialization of a parallel update schedule).

\medskip\noindent
\textbf{Block-parallel update schedule}\quad
A \emph{block-parallel} update schedule is a \emph{partitioned order} of $\entiers{n}$,
given as a set of subsets $\mu=\Sk$ where $S_k = (i^k_0, \dots, i^k_{n_{k}-1})$
is a sequence of $n_k>0$ elements of $\entiers{n}$ for all $k \in \entiers{s}$, called an \emph{o-block}
(shortcut for \emph{ordered-block}).
Each automaton appears in exactly one o-block.
It follows an idea dual to the block-sequential update mode:
the automata within an o-block are updated sequentially,
and the o-blocks are updated simultaneously.
The o-block sequences are taken circularly at each substep,
until we reach the end of each o-block simultaneously
(which happens after the least common multiple ($\lcm$) of their sizes).
The set of block-parallel update modes of size $n$ is denoted $\BPn$.
Formally, the update of $f$ under $\mu\in\BPn$ is given by
$\mmjoblock{f}{\mu} : \B^n \to \B^n$ defined, with $\ell=\lcm(n_1,\dots,n_s)$, as
  $\mmjoblock{f}{\mu}(x) = 
  \mmjblock{f}{W_{\ell-1}} \circ
  \dots \circ
  \mmjblock{f}{W_1} \circ
  \mmjblock {f}{W_0}(x)$,
where for all $i \in \entiers{\ell}$ we define 
$W_i = \{i^k_{i \mod n_k} \mid k \in \int{s}\}$.
In order to compute the set of automata updated at each substep,
it is possible to convert a block-parallel update schedule into
a sequence of blocks of length $\ell$
(which is usually not a block-sequential update schedule,
because repetitions of automaton update may appear).
We defined this map as $\varphi$:
\[
  \varphi(\Sk) = 
  \Wl \text{ with } W_i = 
  \{i^k_{i \mod n_k} \mid k \in \int{s}\}\text{.}
\]
An example is given on Figure~\ref{fig:example}.
The parallel update schedule corresponds to the
block-parallel update schedule $\mu_\texttt{par}=\{(i) \mid i \in \entiers{n}\} \in \BPn$,
with $\varphi(\mu_\texttt{par})=(\entiers{n}$),
i.e., a single block containing all automata is updated at each step (there is only one substep).

\begin{figure}
  \centering
  \includegraphics[scale=.97]{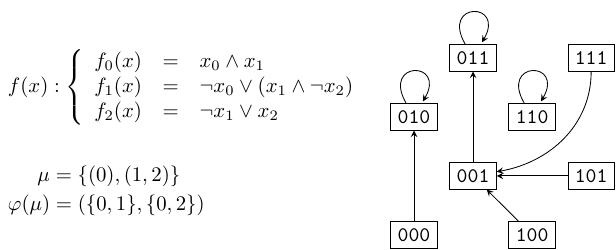}
  \caption{
    Example of an automata network of size $n=3$ with a block-parallel update mode $\mu\in\BPn$.
    Local functions (upper left), conversion of $\mu$ to a sequence of blocks (lower left),
    and dynamics of $\mmjoblock{f}{\mu}$ on configuration space $\B^3$ (right).
    One step is composed of two substeps: the first substep updates the block $\{0,1\}$,
    the second substep updates the block $\{0,2\}$.
    As an example, in computing the image of configuration $\1\1\1$,
    the first substep (update of automata $0$ and $1$) gives $\1\0\1$,
    and the second substep (update of automata $0$ and $2$) gives $\0\0\1$.
  }
  \label{fig:example}
\end{figure}

Block-parallel update schedules have been introduced in~\cite{J-Demongeot2020},
motivated by applications to gene regulatory networks, and 
their ability to generate new stable configurations
(compared to block-sequential update schedules).

\medskip\noindent
\textbf{Fixed point and limit cycle}\quad
A BAN $f$ of size $n$ under block-parallel update schedule $\mu\in\BPn$
defines a deterministic discrete dynamical system $\mmjoblock{f}{\mu}$ on configuration space $\B^n$.
Since the space is finite, the orbit of any configuration
is ultimately periodic.
For $p\geq 1$, a sequence of configurations $x^0,\dots,x^{p-1}$ is a \emph{limit cycle} of length $p$
when $\forall i\in\entiers{p}:\mmjoblock{f}{\mu}(x^i)=x^{i+1\mod p}$.
For $p=1$ we call $x\in\B^n$ such that $\mmjoblock{f}{\mu}(x)=x$ a \emph{fixed point}.

\medskip\noindent
\textbf{Complexity}\quad
To be given as input to a decision problem, a BAN is encoded as a tuple of
$n$ Boolean circuits, one for each local function $f_i:\B^N\to\B$ for $i\in\entiers{n}$.
This encoding can be seen as Boolean formulas for each automaton,
and easily implements high-level descriptions with if-then-else statements
(used intensively in our constructions).

The computational complexity of finite discrete dynamical systems has been explored
on the related models of finite cellular automata~\cite{J-Sutner1995}
and reaction networks~\cite{J-Dennunzio2019}.
Regarding automata networks, fixed points received early attention in~\cite{J-Alon1985}
and~\cite{J-Floreen1989}, with existence problems complete for $\NP$.
Because of the fixed point invariance for block-sequential update schedules~\cite{B-Robert1986},
the focus switched to limit cycles~\cite{J-Aracena2013,C-Bridoux2021},
with problems reaching the second level of the polynomial hierarchy.
The interplay of different update schedules has been investigated in~\cite{J-Aracena2013}.
Finaly, let us mention the general complexity lower bounds,
established for any first-order question on the dynamics,
under the parallel update schedule~\cite{C-Gamard2021}.


\section{Counting and enumerating block-parallel update modes}
\label{s:contributions}

For the rest of this section,
let $p(n)$ denote the number of integer partitions of $n$
(multisets of integers summing to $n$),
let $d(i)$ be the maximal part size in the $i$-th partition of $n$,
let $m(i,j)$ be the multiplicity of the part of size $j$ in the $i$-th partition 
of $n$.
As an example, let $n = 31$ and assume the $i$-th partition is 
$(2,2,3,3,3,3,5,5,5)$, we have $d(i) = 5$ and $m(i,1) = 0$, $m(i,2) = 2$, 
$m(i,3) = 4$, $m(i,4) = 0$, $m(i,5) = 3$.
A partition will be the support of a partitioned order, where each part is an 
o-block.
In our example, we can have:
\[
  \begin{array}{l}
    \{(0,1),(2,3),(4,5,6),(7,8,9),(10,11,12),(13,14,15),\\
    (16,17,18,19,20),(21,22,23,24,25),(26,27,28,29,30)\}\text{,}
  \end{array}
\]
and we picture it as the following \emph{matrix-representation}:
\[
  \begin{pmatrix}
    0&1\\
    2&3
  \end{pmatrix}
  \begin{pmatrix}
    4&5&6\\
    7&8&9\\
    10&11&12\\
    13&14&15
  \end{pmatrix}
  \begin{pmatrix}
    16&17&18&19&20\\
    21&22&23&24&25\\
    26&27&28&29&30
  \end{pmatrix}
  \text{.}
\]
We call \emph{matrices} the elements of size $j \cdot m(i,j)$ and denote them 
$M_1, \ldots, M_{d(i)}$, where $M_j$ has $m(i,j)$ \emph{rows} and $j$ 
\emph{columns} ($M_j$ is empty when $m(i,j) = 0$).
The partition defines the matrices' dimensions, and each row is an o-block.\medskip

For the comparison, the block-sequential update modes (ordered partitions of 
$\entiers{n}$) are given by the ordered Bell numbers, sequence 
A000670 of OEIS~\cite{oeisA000670,ns11}.
A closed formula for it is:
\[
  |\BS{n}|=
	  \sum_{i = 1}^{p(n)}
  		\frac{n!}{\prod_{j = 1}^{d(i)} 
  			(j!)^{m(i,j)}}
  	\cdot
	  \frac{\left( \sum_{j = 1}^{d(i)} m(i,j) \right)!}
		  {\prod_{j = 1}^{d(i)} m(i,j)!}\text{.}
\]
Intuitively, an ordered partition of $n$ 
gives a support to construct a block-sequential update mode:
place the elements of $\entiers{n}$ up to permutation within the blocks.
This is the left fraction: $n!$ divided by $j!$ for each block of size $j$,
taking into account multiplicities.
The right fraction corrects the count because we sum on $p(n)$ the (unordered) 
partitions of $n$:
each partition of $n$ can give rise to different 
ordered partitions of $n$, 
by ordering all blocks (numerator, where the sum of multiplicities is the number 
of blocks) up to permutation within blocks of the same size which have no effect 
(denominator).
The first ten terms are ($n = 1$ onward):
\[
	1, 3, 13, 75, 541, 4683, 47293, 545835, 7087261, 102247563\text{.}
\]


\subsection{Intersection of block-sequential and block-parallel modes}
\label{s:BSinterBP}

In order to be able to compare block-sequential with block-parallel update
modes, both of them will be written here under their sequence of blocks form
(the classical form for block-sequential update modes and the rewritten form for 
block-parallel modes).

First, we know that $\varphi(\textsf{BP}_n) \cap \textsf{BS}_n$ is not empty, 
since it contains at least 
\[
  \mu_\textsf{par}
  = (\entiers{n})
  = \varphi(\{(0), (1), \ldots, (n-1)\})\text{.}
\]
However, neither $\BSn \subseteq \varphi(\BPn)$ nor $\varphi(\BPn) \subseteq
\BSn$ are true.
Indeed, $\mu_s = (\{0, 1\}, \{2\}) \in \mathsf{BS}_3$ but $\mu_s \notin
\varphi(\mathsf{BP}_3)$ since a block-parallel cannot have blocks of different 
sizes in its sequential form. 
Symmetrically, $\mu_p = \varphi(\{(1, 2), (0)\}) = 
(\{0,1\}, \{0, 2\}) \in \mathsf{BP}_3$ but $\mu_p \notin \mathsf{BS}_3$ since
automaton $0$ is updated twice.
Despite this, we can precisely define the intersection 
$\BSn \cap \varphi(\BPn)$.

\begin{lemma}
	\label{lemma:BPequivBS}
  Let $\mu$ be an update mode written as a sequence of blocks of elements in 
  $\entiers{n}$.  
  Then $\mu \in (\BSn \cap \varphi(\BPn))$ if and only if $\mu$ is 
  an ordered partition and all of $\mu$'s blocks are of the same size.
\end{lemma}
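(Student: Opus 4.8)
The plan is to prove the two directions of the equivalence separately, working with the map $\varphi$ that converts a block-parallel update schedule (a partitioned order) into a sequence of blocks of length $\ell = \lcm(n_1,\dots,n_s)$.

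\medskip\noindent
\textbf{($\Leftarrow$) Sufficiency.}
Suppose $\mu = (W_0,\dots,W_{\ell-1})$ is an ordered partition of $\entiers{n}$ in which every block has the same size, say $q$, so $\ell q = n$. I would construct an explicit preimage under $\varphi$ witnessing $\mu \in \varphi(\BPn)$. The natural candidate is to read, for each automaton, the unique substep in which it is updated (well-defined since $\mu$ is a partition) and to group automata into o-blocks according to these timestamps: for each $j$ in some index set, let the $j$-th o-block be the sequence whose element at position $i$ is the unique automaton of $W_i$ assigned to "track $j$". Concretely, since each $W_i$ has exactly $q$ elements, pick any bijection between the $q$ elements of $W_i$ and $\entiers{q}$ for every $i$; this yields $q$ o-blocks each of length $\ell$, whose union over $\entiers{n}$ is a partitioned order $\nu$ with all o-block sizes equal to $\ell$, so $\lcm$ of the sizes is $\ell$, and by the definition of $\varphi$ one checks $\varphi(\nu)_i = \{i^k_{i \bmod \ell} \mid k\} = W_i$ for all $i$. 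Since $\mu$ is already an ordered partition, it is in $\BSn$, hence $\mu \in \BSn \cap \varphi(\BPn)$. I expect this direction to be essentially bookkeeping.

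\medskip\noindent
\textbf{($\Rightarrow$) Necessity.}
Suppose $\mu \in \BSn \cap \varphi(\BPn)$. Being in $\BSn$ means $\mu$ is an ordered partition by definition, so only the "equal sizes" claim needs argument. Write $\mu = \varphi(\nu)$ for some $\nu = \{S_1,\dots,S_s\} \in \BPn$ with o-block sizes $n_1,\dots,n_s$ and $\ell = \lcm(n_1,\dots,n_s)$; then $\mu$ has exactly $\ell$ blocks $W_0,\dots,W_{\ell-1}$, each $W_i = \{i^k_{i \bmod n_k} \mid k \in \int{s}\}$. The key counting observation is that $\sum_{i=0}^{\ell-1}|W_i| = \sum_k \ell = s\ell$ when counted with multiplicity, because each o-block $S_k$ of size $n_k$ contributes its $n_k$ elements exactly $\ell/n_k$ times over the $\ell$ substeps; but since $\mu$ is also an ordered \emph{partition}, each automaton appears exactly once across all blocks, so $\sum_i |W_i| = n$ without multiplicity, and moreover no repetitions occur — forcing $\ell/n_k = 1$, i.e.\ $n_k = \ell$ for every $k$. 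Hence $|W_i| = s$ for all $i$, so all blocks have the same size $s$. The main obstacle, and the point needing care, is making the "no repetition forces $n_k = \ell$" step rigorous: one must argue that if some $n_k < \ell$ then automaton $i^k_0$ reappears in $W_0$ and $W_{n_k}$ (distinct substeps since $0 < n_k \le \ell - 1$ when $s \ge 1$ and $\ell > n_k$), contradicting the partition property — with a separate trivial check that the degenerate case $\ell = 1$ still yields equal (singleton) blocks.
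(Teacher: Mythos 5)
Your proof is correct, and the ($\Leftarrow$) direction is essentially the paper's own construction: number the elements of each block, and let the $k$-th o-block collect the $k$-th element of every block, giving $q$ o-blocks of length $\ell$ whose image under $\varphi$ is $\mu$. Where you diverge is the ($\Rightarrow$) direction. The paper disposes of it in one line: by the definition of $\varphi$, every block of $\varphi(\nu)$ consists of exactly one element taken from each o-block, and since the o-blocks are pairwise disjoint each block has size exactly $s$; so the equal-size conclusion follows from membership in $\varphi(\BPn)$ alone, without ever invoking the partition property of $\mu$. You instead use the partition property (no repeated updates) to show via the reappearance of $i^k_0$ in $W_0$ and $W_{n_k}$ that every o-block size $n_k$ must equal $\ell$, and only then conclude $|W_i|=s$. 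Your route is sound and in fact proves more --- it recovers the content of Corollary~\ref{cor:BPequivBS}, that for such $\mu$ all o-blocks have length equal to the number of blocks --- at the cost of an argument the statement does not require; the paper's observation is the more economical one. One cosmetic slip: in the degenerate case $\ell=1$ the sequence $\mu$ consists of a single block equal to $\entiers{n}$ (not singleton blocks), but the equal-size claim is of course trivially true there.
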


\begin{proof}
  Let $n \in \N$.
  
  $(\Longrightarrow)$ Let $\mu \in (\BSn \cap \varphi(\BPn))$. 
  Since $\mu \in \BSn$, $\mu$ is an ordered partition.
  Furthermore, $\mu \in \varphi(\BPn)$ so all the $\mu$'s blocks are of the same 
  size.

  $(\Longleftarrow)$ Let $\mu = \Wl$ be an ordered partition of $\entiers{n}$ with all its blocks 
  having the same size, denoted by $s$.
  Since $\mu$ is an ordered partition, $\mu \in \BSn$.
  For each $\ell \in \entiers{p}$, we can number arbitrarily the elements of $W_\ell$ 
  from $0$ to $s-1$ as $W_\ell=\{W_\ell^0,\dots,W_\ell^{s-1}\}$.
  Now, let us define the set of sequences $\Sk$ the following way: 
  $\forall k \in \entiers{s}, S_k = 
  \{ W_\ell^k \mid \ell\in\entiers{p} \}$.
  It is a partitioned order such that $\varphi(\Sk) = \mu$, which means
  that $\mu \in \varphi(\BPn)$.
\end{proof}

\begin{corollary}
	\label{cor:BPequivBS}
	If $\mu \in \BPn$ and is composed of $s$ o-blocks of size $p$, then 
	$\varphi(\mu) \in \BSn$ and is composed of $p$ blocks of size $s$.	
\end{corollary}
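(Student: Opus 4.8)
The plan is to unfold the definition of $\varphi$ directly on such a $\mu$ and read off the dimensions of the resulting block sequence; alternatively one may invoke the construction in the backward direction of Lemma~\ref{lemma:BPequivBS}. Write $\mu = \Sk$ with every o-block of the form $S_k = (i^k_0,\dots,i^k_{p-1})$, so that $n_k = p$ for all $k \in \entiers{s}$. Then $\ell = \lcm(n_1,\dots,n_s) = p$, hence $\varphi(\mu) = (W_0,\dots,W_{p-1})$ where, for $i \in \entiers{p}$, we have $W_i = \{i^k_{i \bmod p} \mid k \in \entiers{s}\} = \{i^k_i \mid k \in \entiers{s}\}$ since $0 \le i < p$. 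In particular $\varphi(\mu)$ is a sequence of exactly $p$ blocks, which already settles the count of blocks.

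First I would check that $|W_i| = s$ for every $i$. Because $\mu$ is a partitioned order, each automaton occurs in exactly one o-block; in particular, within a single o-block $S_k$ the entries $i^k_0,\dots,i^k_{p-1}$ are pairwise distinct, and entries coming from two distinct o-blocks are distinct as well. Therefore $W_i = \{i^0_i, i^1_i,\dots,i^{s-1}_i\}$ consists of $s$ pairwise distinct elements, i.e.\ $|W_i| = s$.

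Next I would verify that $(W_0,\dots,W_{p-1})$ is an ordered partition of $\entiers{n}$, which is exactly the definition of a block-sequential update mode. Every $j \in \entiers{n}$ appears in exactly one o-block $S_k$, at a unique position $i \in \entiers{p}$, so $j \in W_i$, and $j \notin W_{i'}$ for $i' \ne i$ (otherwise $j$ would occur twice in $S_k$). Hence the $W_i$ are nonempty, pairwise disjoint, and cover $\entiers{n}$; presented as a sequence they form an ordered partition, so $\varphi(\mu) \in \BSn$, composed of $p$ blocks each of size $s$ by the previous paragraph.

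There is no genuine obstacle here: the statement is essentially a reading of the definition of $\varphi$ combined with the equal-size hypothesis, which is what forces $\ell = p$. The only point at which the hypothesis that $\mu$ is a partitioned order is actually used is the verification that $|W_i| = s$, namely that no repetition occurs inside an o-block.
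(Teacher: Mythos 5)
Your proof is correct and follows exactly the route the paper intends: the corollary is stated without an explicit proof, being regarded as immediate from unfolding the definition of $\varphi$ (dually to the backward construction in Lemma~\ref{lemma:BPequivBS}), which is precisely what you carry out, including the one point worth checking, namely that the partitioned-order property forces $|W_i|=s$ and pairwise disjointness of the blocks.
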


As a consequence of Lemma~\ref{lemma:BPequivBS} and Corollary~
\ref{cor:BPequivBS}, given $n \in \N$, the set $\mathrm{SEQ}_n$ of sequential 
update modes such that every automaton is updated exactly once by step and only 
one automaton is updated by substep, is a subset of $(\BSn \cap \varphi(\BPn))$.

Moreover, we can state the following proposition which counts the number of 
sequences of blocks which belongs to both $\BSn$ and $\varphi(\BPn)$.

\begin{proposition}
	\label{prop:countBSinterBP}
	Given $n \in \N$, we have:
	\[
		|\BSn\cap \varphi(\BPn)| =
			\sum_{d | n}\frac{n!}{(\frac{n}{d}!)^d}\text{.}
	\]
\end{proposition}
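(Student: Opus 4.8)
The plan is to count elements of $\BSn \cap \varphi(\BPn)$ by invoking Lemma~\ref{lemma:BPequivBS}: such a sequence of blocks is exactly an ordered partition of $\entiers{n}$ into blocks all of the same size. So I would first observe that if all blocks have common size $s$, then $s$ must divide $n$, and there are exactly $n/s$ blocks; conversely every divisor $s$ of $n$ yields such configurations. Reindexing by $d = n/s$ (the number of blocks), the sum ranges over divisors $d \mid n$, and for each $d$ the block size is $n/d$.

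Next, for a fixed divisor $d$ of $n$, I would count the ordered partitions of $\entiers{n}$ into $d$ ordered blocks each of size $n/d$. The standard multinomial argument: distribute the $n$ elements into an ordered list of $d$ labelled slots of size $n/d$ each, giving the multinomial coefficient $\binom{n}{n/d,\,\dots,\,n/d} = \dfrac{n!}{((n/d)!)^d}$. Since the blocks of a block-sequential update mode are themselves ordered (it is a \emph{sequence} of subsets), the slots are genuinely distinguishable, so no further division by $d!$ is needed — this is the key point that makes the formula come out without a $1/d!$ correction, in contrast to the unordered-partition count. I would spell this out carefully, since it is the one place where a reader might expect an extra symmetry factor.

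Finally, I would sum over all divisors $d$ of $n$, noting that distinct divisors give disjoint families (the block size $n/d$ differs), so there is no double counting, yielding
\[
  |\BSn \cap \varphi(\BPn)| = \sum_{d \mid n} \frac{n!}{((n/d)!)^d}\text{,}
\]
which matches the claimed expression after writing $\frac{n}{d}!$ for $(n/d)!$. The main obstacle — really the only subtle point — is justifying that the count for each fixed block size is precisely the plain multinomial coefficient with no symmetry correction; this rests on the fact that a block-sequential update mode is an \emph{ordered} partition, so permuting the blocks gives genuinely different update modes, together with Lemma~\ref{lemma:BPequivBS} guaranteeing that every such ordered partition with equal block sizes indeed lies in $\varphi(\BPn)$.
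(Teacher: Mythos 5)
Your proposal is correct and takes essentially the same route as the paper: use Lemma~\ref{lemma:BPequivBS} to identify $\BSn \cap \varphi(\BPn)$ with the ordered partitions of $\entiers{n}$ whose blocks all have equal size, and then count these, divisor by divisor. The only difference is presentational: the paper simply cites OEIS sequence A061095 (labeled items into labeled boxes with equally many items per box), whereas you derive the per-divisor count $\frac{n!}{((n/d)!)^{d}}$ explicitly as a multinomial coefficient, correctly noting that no $1/d!$ symmetry correction is needed because the blocks are ordered — which makes your argument self-contained but not a genuinely different proof.
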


\begin{proof}
	The proof derives directly from the sequence A061095 of 
	OEIS~\cite{oeisA061095}, which counts the \emph{number of ways of dividing $n$ 
	labeled items into labeled boxes with an equal number of items in each box}.
	In our context, the ``items'' are the automata, and the ``labeled boxes'' are 
	the blocks of the ordered partitions.
\end{proof}

\subsection{Partitioned orders}
\label{s:BP}

A block-parallel update mode is given as a partitioned order, \emph{i.e.}~an 
(unordered) set of (ordered) sequences.
This concept is recorded as sequence A000262 of OEIS~\cite{oeisA000262}, 
described as the \emph{number of ``sets of lists''}.
A nice closed formula for it is:
\[
  |\BPn| = 
  	\sum_{i = 1}^{p(n)} 
  		\frac{n!}{\prod_{j=1}^{d(i)} 
  			m(i,j)!}\text{.}
\]
Intuitively, for each partition, fill all the matrices
($n!$ ways to place the elements of $\entiers{n}$)
up to permutation of the 
rows within each matrix
(matrix $M_j$ has $m(i,j)$ rows).
Another closed formula is presented in Proposition~\ref{prop:cardBPn}.
This formula is particularly useful to generate all the block-parallel update 
modes.

\begin{proposition}
	\label{prop:cardBPn}
  For any $n \geq 1$ we have:
  \[
  |\BPn| =
	  \sum_{i = 1}^{p(n)} 
	  	\prod_{j = 1}^{d(i)}
	  		\binom{n - \sum_{k = 1}^{j - 1} k \cdot m(i,k)}{j \cdot m(i,j)}
  				\cdot \frac{(j \cdot m(i,j))!}{m(i,j)!}\text{.}
  \]
\end{proposition}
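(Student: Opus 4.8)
The plan is to prove the identity summand by summand against the closed formula
$|\BPn| = \sum_{i=1}^{p(n)} \frac{n!}{\prod_{j=1}^{d(i)} m(i,j)!}$ recalled just above (sequence A000262): for each fixed partition $i$ of $n$, I will show that the product $\prod_{j=1}^{d(i)} \binom{n - \sum_{k=1}^{j-1} k\cdot m(i,k)}{j\cdot m(i,j)} \cdot \frac{(j\cdot m(i,j))!}{m(i,j)!}$ equals $\frac{n!}{\prod_{j=1}^{d(i)} m(i,j)!}$, after which summing over $i$ from $1$ to $p(n)$ finishes the proof. Equivalently, the same product can be read directly as an enumeration of the block-parallel update modes whose multiset of o-block sizes is the $i$-th partition of $n$, processing the part sizes $j = 1, 2, \dots, d(i)$ in increasing order; both readings will be spelled out.

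For the algebraic identity, fix a partition $i$ and set $a_j = n - \sum_{k=1}^{j} k\cdot m(i,k)$ for $0 \le j \le d(i)$, so that $a_0 = n$ and $a_{d(i)} = 0$ because partition $i$ sums to $n$. Then $n - \sum_{k=1}^{j-1} k\cdot m(i,k) = a_{j-1}$, and using $\binom{N}{r}\,r! = \frac{N!}{(N-r)!}$ each factor rewrites as
\[
  \binom{a_{j-1}}{j\cdot m(i,j)} \cdot (j\cdot m(i,j))! \cdot \frac{1}{m(i,j)!}
  = \frac{a_{j-1}!}{a_j!} \cdot \frac{1}{m(i,j)!}\text{.}
\]
Taking the product over $j = 1, \dots, d(i)$, the factorials telescope, $\prod_{j=1}^{d(i)} \frac{a_{j-1}!}{a_j!} = \frac{a_0!}{a_{d(i)}!} = \frac{n!}{0!} = n!$, so the $i$-th summand equals $\frac{n!}{\prod_{j=1}^{d(i)} m(i,j)!}$, which is exactly the $i$-th summand of the A000262 formula.

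To justify that the formula directly mirrors an enumeration of $\BPn$ (which is why it is convenient for generation): a partitioned order in $\BPn$ induces a partition of $n$, namely the multiset of lengths of its o-blocks, and conversely each partition $i$ contributes the update modes built on the matrices $M_1, \dots, M_{d(i)}$ of dimensions $m(i,j)\times j$. Going through the part sizes $j$ in increasing order, the factor $\binom{a_{j-1}}{j\cdot m(i,j)}$ chooses which $j\cdot m(i,j)$ of the not-yet-placed automata fill $M_j$; the factor $(j\cdot m(i,j))!$ arranges them into its $m(i,j)$ rows of length $j$ read consecutively (orderings inside a row matter, since an o-block is a sequence); and dividing by $m(i,j)!$ quotients by the reorderings of these $m(i,j)$ rows, which are o-blocks of equal size $j$ and hence interchangeable inside the unordered set $\mu$. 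Distinct partitions yield disjoint families whose union is all of $\BPn$, matching the outer sum. The only point needing care is this last bookkeeping step — identifying exactly which row permutations leave the partitioned order unchanged: o-blocks of different sizes are automatically distinguishable and must not be quotiented, whereas the $m(i,j)$ equal-size rows of a single matrix must be, and all o-blocks are genuinely distinct as sequences because each automaton occurs exactly once. The telescoping computation is otherwise routine and uses only that partition $i$ sums to $n$.
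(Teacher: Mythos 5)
Your proposal is correct and follows essentially the same route as the paper: the same per-matrix combinatorial reading (binomial coefficient to choose the entries of $M_j$, the factorial to arrange them into rows, division by $m(i,j)!$ to quotient by permutations of the equal-size rows), together with the same telescoping of the binomials to recover the $\sum_i n!/\prod_j m(i,j)!$ formula of A000262. Your explicit check that the row-permutation quotient is free (all o-blocks are distinct as sequences) is a nice touch, but the argument is otherwise the paper's own.
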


\begin{proof}
	Each partition is a support to generate different partitioned orders (sum on 
	$i$), by considering all the combinations, for each matrix (product on $j$),
	of the ways to choose the $j \cdot m(i,j)$ elements of $\entiers{n}$
	it contains (binomial coefficient, chosen among the remaining elements),
	and all the ways to order them up to permutation of the rows (ratio of 
	factorials).
	Observe that developing the binomial coefficients with
	${\binom{x}{y}} = \frac{x!}{y! \cdot (x-y)!}$ gives
	\[
  	\prod_{j = 1}^{d(i)}
  		\binom{n - \sum_{k}}{j \cdot m(i,j)} \cdot(j \cdot m(i,j))!
		= \prod_{j = 1}^{d(i)}
			\frac{(n - \sum_{k})!}{(n - \sum_{k} - j \cdot m(i,j))!}
		= \frac{n!}{0!}
		= n!\text{,}
	\]
	where $\sum_{k}$ is a shorthand for $\sum_{k = 1}^{j - 1} k \cdot m(i,k)$, 
	which leads to retrieve the OEIS formula.
\end{proof}

The first ten terms are ($n = 1$ onward):
\[
	1, 3, 13, 73, 501, 4051, 37633, 394353, 4596553, 58941091\text{.}
\]

The formula from Proposition~\ref{prop:cardBPn} can give us an algorithm to
enumerate the partitioned orders of size $n$.
For each partition $i$ of $n$, it enumerates all the block-parallel update modes as a list of 
matrices as presented in the introduction of this section, with the matrices
being filled one-by-one (the product on $j$ in the formula).
For each value of $j$, we choose a set of $j \cdot m(i,j)$ elements of $\entiers{n}$
(that haven't been placed in a previous matrix) to put in matrix $M_j$.
We then enumerate all the different ways to agence these numbers,
up to permutation of the rows.

\subsection{Partitioned orders up to dynamical equality}
\label{s:BP0}

As for block-sequential update modes, given an AN $f$ and two block-parallel 
update modes $\mu$ and $\mu'$, the dynamics of $f$ under $\mu$ can be the same 
as that of $f$ under $\mu'$. 
To go further, in the framework of block-parallel update modes,
there exist pairs of update modes $\mu,\mu'$ such that for any AN $f$,
the dynamics $\mmjoblock{f}{\mu}$ is the exact same as $\mmjoblock{f}{\mu'}$.
As a consequence, in order to perform exhaustive searches among
the possible dynamics, it is not necessary to generate all of them.
We formalize this with the following equivalence relation.

\begin{definition}
  \label{def:dyn_eq}
  For $\mu,\mu'\in\BPn$, we denote $\mu\equiv_0\mu'$
  when $\varphi(\mu)=\varphi(\mu')$.
\end{definition}

The following Lemma shows that this equivalence relation is necessary and 
sufficient in the general case of ANs of size $n$.

\begin{lemma}
  For any $\mu, \mu' \in \BPn$, we have $\mu \equiv_0 \mu' \iff 
  \forall f: X \to X, \mmjoblock{f}{\mu} = \mmjoblock{f}{\mu'}$.
\end{lemma}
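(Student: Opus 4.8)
The plan is to prove the two implications separately. The forward direction $(\Longrightarrow)$ is immediate from the definitions: if $\mu \equiv_0 \mu'$ then $\varphi(\mu) = \varphi(\mu')$, i.e.,~$\mu$ and $\mu'$ convert to the very same sequence of blocks $\Wl$ with the same length $\ell$. Since $\mmjoblock{f}{\mu}$ is defined \emph{purely} as the composition $\mmjblock{f}{W_{\ell-1}} \circ \dots \circ \mmjblock{f}{W_0}$, which only depends on the sequence of blocks and not on the partitioned-order presentation, we get $\mmjoblock{f}{\mu} = \mmjoblock{f}{\mu'}$ for every $f$.

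The interesting direction is the contrapositive of $(\Longleftarrow)$: assuming $\varphi(\mu) \neq \varphi(\mu')$, I need to exhibit a single automata network $f$ on which the two dynamics differ. Write $\varphi(\mu) = (W_0, \dots, W_{\ell-1})$ and $\varphi(\mu') = (W'_0, \dots, W'_{\ell'-1})$. The plan is to find a ``witness'' configuration and local functions that record, in the final state, enough information about the sequence of substeps to distinguish the two. A clean choice is to let each automaton act as a toggle that flips its state exactly when it is updated: take $f_i(x) = 1 + x_i$ for all $i$ (independent of the other coordinates). Then for any block sequence $(W_0,\dots,W_{\ell-1})$, starting from $x = \mathtt{0}^n$, the resulting image has coordinate $i$ equal to $1$ iff automaton $i$ is updated an odd number of times over the $\ell$ substeps. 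This already distinguishes $\varphi(\mu)$ from $\varphi(\mu')$ whenever they disagree on the \emph{parity} of the number of occurrences of some automaton.

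The remaining case is when $\varphi(\mu)$ and $\varphi(\mu')$ give every automaton the same occurrence parity but still differ as sequences --- for instance they could differ only in the \emph{order} of the blocks, or an automaton could occur (say) once in one and three times in the other. To handle order and multiplicity I would enrich the construction so that the dynamics is sensitive to \emph{when} updates happen, not just their parity. One robust approach: pick the first index $t$ at which the two block sequences differ; without loss of generality there is an automaton $a$ with, say, $a \in W_t$ but $a \notin W'_t$ (the case $\ell \neq \ell'$ is subsumed by padding/the analysis at the first discrepancy). Build $f$ so that automaton $a$'s update at substep $t$ has a visible, irreversible effect on some dedicated ``sink'' automaton $b$ that otherwise never changes --- e.g.,~$f_b$ reads a subset of coordinates engineered so that the states of the automata in $W_0,\dots,W_{t-1}$ after those substeps encode the substep counter, and $f_a$ feeds into $b$ only at the right moment. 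Concretely this is easiest via if-then-else circuits (as the paper emphasizes is allowed): use a handful of counter automata that are forced to be updated at every substep by both schedules (only automata appearing in every block) so that they faithfully encode $t$ in both dynamics, and let the discrepancy at step $t$ latch $b$.

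The main obstacle is precisely this bookkeeping: ensuring the construction is uniform across \emph{all} ways $\varphi(\mu)$ and $\varphi(\mu')$ can differ (different length, different multiset of occurrences for some automaton, or same multiset but different order), while only using that $\mu,\mu' \in \BPn$. I expect the cleanest route is actually the parity trick combined with a second, independent gadget: run $n$ toggles $f_i = 1 + x_i$ to catch any parity discrepancy, and in parallel (on disjoint extra coordinates) run a ``timestamp latch'' that, for each automaton $i$, records the \emph{index of its first update}; two block sequences with identical per-automaton parities but differing as sequences must differ either in some first-update index or in the relative order of two automata, both of which such a latch can detect. Verifying that these gadgets are well-defined ANs and that $\varphi(\mu) \neq \varphi(\mu')$ forces an observable difference is the routine-but-delicate core of the argument.
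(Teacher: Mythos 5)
Your forward direction is fine and is the same one-line argument as the paper's. The gap is in the converse. The toggle network $f_i(x)=\1+x_i$ only detects a discrepancy in the \emph{parity} of the number of times some automaton is updated, and your plan for the residual case does not go through as described. First, you cannot introduce ``dedicated sink'' automata, ``a handful of counter automata'', or disjoint extra coordinates: $\mu,\mu'\in\BPn$ are schedules over exactly the automata $\entiers{n}$, so the witness network must live on those $n$ automata with the (adversarially given) schedules, and no automaton need be updated at every substep under both modes --- that would require it to lie in a singleton o-block of both $\mu$ and $\mu'$, which nothing guarantees. Second, the dichotomy you rely on (same per-automaton parities but different sequences implies a differing first-update index or a swapped relative order of first updates) is false. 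Take $n=8$, $\mu=\{(0,1),(2,3,4,5,6,7)\}$ and $\mu'=\{(2,3),(0,1,4,5,6,7)\}$: every automaton has the same first-update substep under both modes (its position inside its o-block), every automaton is updated an odd number of times ($3$ or $1$) under both, and yet $\varphi(\mu)\neq\varphi(\mu')$, since the block of substep $2$ is $\{0,4\}$ for $\mu$ but $\{2,4\}$ for $\mu'$. On this pair, both of your gadgets --- the toggles started from $\0^n$ and a latch recording first-update indices --- produce identical outputs, so the ``routine-but-delicate core'' you defer is precisely what is missing: one needs a construction sensitive to later substeps, e.g.\ to which automata are synchronized after their first updates or to update counts beyond parity (here $f_0(x)=x_2$, $f_2(x)=\neg x_2$, identity elsewhere, already yields $\mmjoblock{f}{\mu}\neq\mmjoblock{f}{\mu'}$).

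For comparison, the paper's proof of this direction uses a single two-automaton gadget ($f_i(x)=x_i\lor x_j$, $f_j(x)=x_i$, started from $x_i=\0$, $x_j=\1$) keyed to a pair $(i,j)$ whose first updates occur in opposite relative order under $\mu$ and $\mu'$; your ``relative order'' intuition is in that spirit, but neither the parity argument nor a first-update timestamp reduces the general case $\varphi(\mu)\neq\varphi(\mu')$ to such a pair --- the example above has identical first-update data under both modes --- so a complete proof along your lines must add a genuinely new gadget (copy/negation pairs as above, or counting modulo more than $2$) handling sequences that agree on all first updates and all parities.
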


\begin{proof}
	Let $\mu$ and $\mu'$ be two block-parallel update modes of $\BPn$.\medskip
	
  \noindent $(\Longrightarrow)$ Let us consider that $\mu\equiv_0\mu'$, 
  and let $f : X \to X$ be an AN. 
  Then, we have $\mmjoblock{f}{\mu} = \mmjblock{f}{\varphi(\mu)} =
  \mmjblock{f}{\varphi(\mu')} = \mmjoblock{f}{\mu'}$.\medskip

  \noindent $(\Longleftarrow)$ Let us consider that 
  $\forall f : X \to X, \mmjoblock{f}{\mu} = \mmjoblock{f}{\mu'}$. 
  Let us assume for the sake of contradiction that $\varphi(\mu) \neq
  \varphi(\mu')$. 
  For ease of reading, we will denote as $t_{\mu, i}$ the substep at which
  automaton $i$ is updated for the first time with update mode $\mu$. 
  Then, there is a pair of automata $(i,j)$ such that 
  $t_{\mu, i} \leq t_{\mu, j}$, but $t_{\mu', i} > t_{\mu',j}$. 
  Let $f : \B^n \to \B^n$ be a Boolean AN such that $f(x)_i = x_i \lor
	x_j$ and $f(x)_j = x_i$, and $x \in \B^n$ such that $x_i = 0$ and $x_j = 1$. 
	We will compare $\mmjoblock{f}{\mu}(x)_i$ and $\mmjoblock{f}{\mu'}(x)_i$, in 
	order to prove a contradiction.
	Let us apply $\mmjoblock{f}{\mu}$ to $x$. 
	Before step $t_{\mu, i}$ the value of automaton $i$ is still $0$ and, most 
	importantly, since $t_{\mu, i} \leq t_{\mu,j}$, the value of $j$ is still $1$. 
	This means that right after step $t_{\mu,i}$, the value of automaton $i$ is 1, 
	and will not change afterwards. 
	Thus, we have $\mmjoblock{f}{\mu}(x)_i = 1$.
	Let us now apply $\mmjoblock{f}{\mu'}$ to $x$. 
	This time, $t_{\mu', i} > t_{\mu', j}$, which means that automaton $j$ is 
	updated first and takes the value of automaton $i$ at the time, which is $0$ 
	since it has not been updated yet. 
	Afterwards, neither automata will change value since $ 0 \lor 0$ is still $0$.
	This means that $\mmjoblock{f}{\mu'}(x)_i = 0$.
	Thus, we have $\mmjoblock{f}{\mu} \neq \mmjoblock{f}{\mu'}$, which contradicts
	our earlier hypothesis.
\end{proof}

Let $\BPn^0=\BPn/\equiv_0$ denote the corresponding quotient set, 
\emph{i.e.}~the set of block-parallel update modes to generate for computer 
analysis of all the possible dynamics in the general case of 
ANs
of size $n$.

\begin{theorem}
  \label{theorem:BPn0}
  For any $n\geq 1$, we have:
	\begin{align}
		|\BPn^0| 
			& \;=\; 
				\sum_{i = 1}^{p(n)} 
					\frac{n!}{\prod_{j = 1}^{d(i)} \left( m(i,j)! \right)^j}
					\label{BPn0_eq1}\\
			& \;=\; 
				\sum_{i = 1}^{p(n)} 
					\prod_{j = 1}^{d(i)} 
						\prod_{\ell = 1}^{j}
							\binom{n - \sum_{k = 1}^{j - 1} 
								k \cdot m(i,k) - (\ell - 1) \cdot m(i,j)}{m(i,j)}
								\label{BPn0_eq2}\\
			& \;=\; 
				\sum_{i = 1}^{p(n)} 
					\prod_{j = 1}^{d(i)} 
						\left( 
							\binom{n - \sum_{k = 1}^{j - 1} k \cdot m(i,k)}{j \cdot m(i,j)}
							\cdot 
							\prod_{\ell = 1}^{j} 
								\binom{(j - \ell + 1) \cdot m(i,j)}{m(i,j)}
						\right)\label{BPn0_eq3}\text{.}
	\end{align}
\end{theorem}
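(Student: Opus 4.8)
The plan is to use the fact that, by Definition~\ref{def:dyn_eq}, $\BPn^0=\BPn/\!\equiv_0$ is in natural bijection with the image $\varphi(\BPn)$, to show this image is counted by the right-hand side of \eqref{BPn0_eq1}, and then to obtain \eqref{BPn0_eq2} and \eqref{BPn0_eq3} as rewritings of the multinomial coefficient occurring there.

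First I would pin down the fibers of $\varphi$ in terms of the matrix representation. Fix $\mu\in\BPn$ with underlying partition $i$ of $n$, so that $\mu$ is a filling of the matrices $M_1,\dots,M_{d(i)}$, where $M_j$ has $m(i,j)$ rows and $j$ columns. Writing $\varphi(\mu)=(W_0,\dots,W_{\ell-1})$ with $\ell$ the $\lcm$ of the parts, the defining formula $W_t=\{i^k_{t\bmod n_k}\mid k\}$ shows that the contribution of $M_j$ to $W_t$ is exactly the \emph{set} of entries in column $t\bmod j$ of $M_j$: how those entries are distributed among the rows of $M_j$ is irrelevant. Since each automaton lies in exactly one o-block, the element sets of distinct matrices are disjoint, so $W_t$ is the disjoint union over $j$ of the $(t\bmod j)$-th column set of $M_j$. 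From this I extract two facts. (a) For an automaton $a$ in column $c$ of $M_j$, the substeps where $a$ occurs in $\varphi(\mu)$ form the arithmetic progression $\{c,c+j,\dots\}\cap\entiers{\ell}$, so both the matrix index $j$ and the column index $c=t_{\mu,a}$ are recoverable from $\varphi(\mu)$; hence $\varphi(\mu)=\varphi(\mu')$ forces $\mu$ and $\mu'$ to share the same partition and, for every $j$ and every column, the same column set. (b) Conversely, any choice of column sets is realizable (pair the columns into rows arbitrarily) and determines $\varphi(\mu)$ completely. Consequently $\varphi(\BPn)$ is in bijection with the data consisting of a partition $i$ of $n$ together with, for each $j$ with $m(i,j)>0$ and each $c\in\entiers{j}$, a subset $C^{(j)}_c\subseteq\entiers{n}$ of size $m(i,j)$, these subsets being pairwise disjoint (hence partitioning $\entiers{n}$, since $\sum_j j\,m(i,j)=n$).

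Next, for a fixed partition $i$, counting such data amounts to distributing $n$ labelled items into the $\sum_{j=1}^{d(i)}j$ labelled boxes $(j,c)$ with box $(j,c)$ receiving $m(i,j)$ items, which is the multinomial coefficient $n!/\prod_{j=1}^{d(i)}(m(i,j)!)^{j}$ (the $j$ equal factors $m(i,j)!$ coming from the $j$ columns of $M_j$, the factor being harmlessly $1$ when $m(i,j)=0$). Summing over the $p(n)$ partitions gives \eqref{BPn0_eq1}. Finally, \eqref{BPn0_eq2} and \eqref{BPn0_eq3} follow by expanding this multinomial coefficient into an iterated product of binomial coefficients: processing the matrices in the order $j=1,\dots,d(i)$ and, inside $M_j$, choosing its $j$ columns one after the other, each time selecting $m(i,j)$ elements among those not yet placed, yields \eqref{BPn0_eq2}; grouping, for each $j$, the choice of the $j\,m(i,j)$ elements of $M_j$ (a single binomial $\binom{n-\sum_{k<j}k\,m(i,k)}{j\,m(i,j)}$) separately from their distribution into columns ($\prod_{\ell=1}^{j}\binom{(j-\ell+1)\,m(i,j)}{m(i,j)}$) yields \eqref{BPn0_eq3}; the equality of the three expressions is the standard identity $N!/(a_1!\cdots a_r!)=\prod_{\ell=1}^{r}\binom{N-a_1-\cdots-a_{\ell-1}}{a_\ell}$ with the $a_\ell$ taken to be the column sizes.

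The main obstacle is the fiber characterization of $\varphi$ in the second step: one must argue carefully that $\varphi(\mu)$ remembers neither the pairing of columns into rows nor anything beyond the ordered tuple of column sets of each matrix, while it \emph{does} remember the partition $i$, so that distinct $\equiv_0$-classes never span two partitions and the sum over $i$ is a genuine partition of $\BPn^0$ into blocks. Once this is in place, the remainder is routine bookkeeping with multinomial and binomial coefficients.
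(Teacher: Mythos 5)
Your proposal is correct and follows essentially the same route as the paper: you identify $\BPn^0$ with matrix fillings up to permutation within columns (equivalently, the ordered tuples of column sets per matrix), count them by the multinomial $n!/\prod_j (m(i,j)!)^j$ per partition to get \eqref{BPn0_eq1}, and obtain \eqref{BPn0_eq2} and \eqref{BPn0_eq3} by the standard telescoping/grouping expansion of that multinomial into binomials, which is exactly the paper's argument (your explicit recovery of the partition and column indices from $\varphi(\mu)$ is a slightly more careful justification of the fiber characterization that the paper states briefly).
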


\begin{proof}
	$|\BPn^0|$ can be viewed as three distinct formulas. 
	We will show here that Formula~\ref{BPn0_eq1} counts $|\BPn^0|$,
	and the proof that these three formulas are equal will be in \ref{a:proofs}.

  For any pair $\mu, \mu' \in \BPn$, we have $\mu \equiv_0 \mu'$ if and only if
  their matrix-representations are the same up to a permutation of the elements 
  within columns (the number of equivalence classes is then counted by Formula~\ref{BPn0_eq1}).
  In the definition of $\varphi$, each block is a set constructed by taking one 
  element from each o-block. 
  Given that $n_k$ in the definition of $\varphi$ corresponds to $j$ in the 
  statement of the theorem, one matrix corresponds to all the o-blocks that have 
  the same size $n_k$.
  Hence, the $\ell \mod n_k$ operations in the definition of $\varphi$
  amounts to considering the elements of these o-blocks which are in the 
  same column in their matrix representation.
  Since blocks are unordered, the result follows.
\end{proof}

The first ten terms of the sequence $(|\BPn^0|)_{n\geq 1}$ are:
\[
	1, 3, 13, 67, 471, 3591, 33573, 329043, 3919387, 47827093\text{.}
\]
They match the sequence A182666 of OEIS~\cite{oeisA182666},
and the next lemma proves that they are indeed the same sequence
(defined by its exponential generating function on OEIS).
The \emph{exponential generating function of a sequence}
$\left(a_n\right)_{n\in\N}$ is $f(x) = \sum_{n\geq 0}a_n\frac{x^n}{n!}$.

\begin{lemma}
  \label{l:egf}
  The exponential generating function of $(|\BPn^0|)_{n \in \N}$ is
  $\prod_{j \geq 1} \sum_{k \geq 0} \left( \frac{x^k}{k!} \right)^j$.
\end{lemma}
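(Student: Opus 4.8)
The plan is to derive the claim directly from Formula~\eqref{BPn0_eq1} of Theorem~\ref{theorem:BPn0}, by reorganizing the double sum over pairs (an integer $n$, a partition of $n$) as a single sum over multiplicity vectors, and then factoring it as an infinite product. Write $a_n = |\BPn^0|$ for $n \geq 1$ and set $a_0 = 1$ (the empty network, which is consistent with Formula~\eqref{BPn0_eq1} read with the empty partition). First I would note that the right-hand side $\prod_{j \geq 1} \sum_{k \geq 0} (x^k/k!)^j$ is a well-defined formal power series: in the $j$-th factor the term $k = 0$ contributes $1$, and every term with $k \geq 1$ contributes a monomial of degree $jk \geq j$; hence the coefficient of $x^n$ is affected by only finitely many factors (those with $j \leq n$) and by finitely many terms within each.

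Next I would substitute Formula~\eqref{BPn0_eq1}. Since the $i$-th partition of $n$ has multiplicities $m(i,j)$ with $\sum_{j \geq 1} j \cdot m(i,j) = n$, we may write $x^n = \prod_{j \geq 1} x^{j \cdot m(i,j)}$, so
\[
  \sum_{n \geq 0} a_n \frac{x^n}{n!}
  \;=\; \sum_{n \geq 0} \sum_{i = 1}^{p(n)} \prod_{j \geq 1} \frac{x^{j \cdot m(i,j)}}{\big( m(i,j)! \big)^{j}}\text{.}
\]
The combinatorial heart of the argument is that the map sending a pair $(n,i)$, where $i$ indexes an unordered partition of $n$, to the finitely supported sequence $(m(i,j))_{j \geq 1} \in \N^{\N}$ is a bijection onto the set of all such sequences (the inverse recovers $n = \sum_{j} j \cdot m_j$). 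Re-indexing the outer summation along this bijection turns the expression into $\sum_{(m_j)_j} \prod_{j \geq 1} x^{j m_j}/(m_j!)^j$, ranging over all finitely supported sequences $(m_j)_{j \geq 1}$ of nonnegative integers.

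It then remains to distribute this locally finite sum of products over a product of sums — legitimate exactly because, by the degree estimate above, each coefficient is a finite sum on both sides — obtaining
\[
  \sum_{n \geq 0} a_n \frac{x^n}{n!}
  \;=\; \prod_{j \geq 1} \sum_{m \geq 0} \frac{x^{j m}}{(m!)^{j}}
  \;=\; \prod_{j \geq 1} \sum_{k \geq 0} \left( \frac{x^{k}}{k!} \right)^{j}\text{,}
\]
using $x^{jm}/(m!)^j = (x^m/m!)^j$ and renaming $m$ as $k$; this is the announced generating function. The only real obstacle is this formal-power-series bookkeeping: one must check that the re-indexing genuinely is a bijection and that interchanging the infinite sum with the infinite product is valid. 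Both reduce to the observation that a monomial $\prod_j x^{j m_j}$ has total degree $\sum_j j m_j$, so that only finitely many sequences $(m_j)_j$ contribute to any fixed degree and no analytic convergence is involved.
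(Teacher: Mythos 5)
Your proof is correct and is essentially the paper's argument run in the opposite direction: the paper expands the product $\prod_{j \geq 1} \sum_{k \geq 0} (x^k/k!)^j$ into a sum over finitely supported maps $j \mapsto m(j)$ and identifies those with weighted sum $n$ as the partitions of $n$ to recover Formula~\eqref{BPn0_eq1}, whereas you start from Formula~\eqref{BPn0_eq1} and factor the same re-indexed sum back into the product. The underlying bijection and the identity $x^{jm}/(m!)^j = (x^m/m!)^j$ are identical, so this is the same proof, with your formal-power-series bookkeeping making explicit what the paper leaves implicit.
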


The proof of this lemma can be found in \ref{a:proofs}.

As with the previous enumerating formula, we can also extrapolate an enumeration
algorithm from this one. It starts out similar to the previous one, but differs once the
contents of $M_j$ are chosen. In the previous algorithm, we needed to enumerate
every matrix up to permutation of the rows. This time, we need to enumerate
every matrix up to permutation within the columns. This means that, for each column
of the matrix, we just choose the content separately, similarly to how we chose the content of each matrix.

\subsection{Partitioned orders up to dynamical isomorphism on the limit set}
\label{s:BPstar}

The following equivalence relation defined over block-parallel update modes
turns out to capture exactly the notion of having isomorphic limit dynamics.
It is analogous to $\equiv_0$, except that a circular shift of order $i$
may be applied on the sequences of blocks.

\begin{definition}
\label{def:dyn_iso}
  For $\mu, \mu' \in \BPn$, we denote $\mu \equiv_\star \mu'$ when 
  $\varphi(\mu) = \sigma^i(\varphi(\mu'))$ for some $i \in 
  \entiers{|\varphi(\mu')|}$ called the \emph{shift}.
\end{definition}

\begin{remark}\normalfont
  Note that $\mu \equiv_0 \mu'$ corresponds to the particular case $i = 0$ of 
  $\equiv_\star$.
  Thus, $\mu \equiv_0 \mu' \implies \mu \equiv_\star \mu'$.
\end{remark}

\begin{notation}
  Given $\mmjoblock{f}{\mu} : X \to X$, let
	$\Omega_{\mmjoblock{f}{\mu}} = \bigcap_{t \in \N} \mmjoblock{f}{\mu}(X)$
  denote its \emph{limit set} (abusing the notation of $\mmjoblock{f}{\mu}$ to 
  sets of configurations), 
  and $\mmjoblocklimit{f}{\mu} : \Omega_{\mmjoblock{f}{\mu}} \to
  \Omega_{\mmjoblock{f}{\mu}}$ its restriction to its limit set.
  Observe that, since the dynamics is deterministic, $\mmjoblocklimit{f}{\mu}$ 
  is bijective.
\end{notation}

The following Lemma shows that, if one is generally interested in the limit 
behavior of ANs under block-parallel updates, then studying a representative 
from each equivalence class of the relation $\equiv_\star$ is necessary and 
sufficient to get the full spectrum of possible limit dynamics.

\begin{lemma}
  For any $\mu, \mu' \in \BPn$, we have $\mu \equiv_\star \mu' \iff
  \forall f : X \to X, \mmjoblocklimit{f}{\mu} \sim \mmjoblocklimit{f}{\mu'}$.
\end{lemma}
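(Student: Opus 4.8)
The plan is to treat the two implications with different tools: for $(\Longrightarrow)$ I use the elementary fact that the limit dynamics of a composition is invariant under cyclic rotation of its factors, while for $(\Longleftarrow)$ I argue by contraposition and build, for a given pair $\mu\not\equiv_\star\mu'$, an automata network whose limit dynamics are non-isomorphic; this second part is where the real work lies.

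$(\Longrightarrow)$ Assume $\mu\equiv_\star\mu'$ and fix $f:X\to X$. Write $\varphi(\mu')=(W_0,\dots,W_{\ell-1})$; by Definition~\ref{def:dyn_iso} there is a shift $i$ with $\varphi(\mu)=\sigma^i(\varphi(\mu'))$. Cut the composition defining $\mmjoblock{f}{\mu'}$ at the index dictated by $i$, i.e.\ set $h=\mmjblock{f}{W_{\ell-1}}\circ\cdots\circ\mmjblock{f}{W_{\ell-i}}$ and $k=\mmjblock{f}{W_{\ell-i-1}}\circ\cdots\circ\mmjblock{f}{W_0}$, so that $\mmjoblock{f}{\mu'}=h\circ k$ while the shift gives $\mmjoblock{f}{\mu}=k\circ h$. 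Then $h\circ\mmjoblock{f}{\mu}=\mmjoblock{f}{\mu'}\circ h$ and $k\circ\mmjoblock{f}{\mu'}=\mmjoblock{f}{\mu}\circ k$, hence $k\circ(\mmjoblock{f}{\mu'})^t=(\mmjoblock{f}{\mu})^t\circ k$ for all $t$; taking images and intersecting over $t\in\N$ gives $k(\Omega_{\mmjoblock{f}{\mu'}})\subseteq\Omega_{\mmjoblock{f}{\mu}}$, and symmetrically $h(\Omega_{\mmjoblock{f}{\mu}})\subseteq\Omega_{\mmjoblock{f}{\mu'}}$. On the limit sets $\mmjoblock{f}{\mu'}=h\circ k$ and $\mmjoblock{f}{\mu}=k\circ h$ are bijections, so $k$ restricted to $\Omega_{\mmjoblock{f}{\mu'}}$ is injective (because $h\circ k$ is) and surjects onto $\Omega_{\mmjoblock{f}{\mu}}$ (because $k\circ h$ is, factoring through $h$), hence a bijection; as it conjugates $\mmjoblocklimit{f}{\mu'}$ to $\mmjoblocklimit{f}{\mu}$ it is a dynamical isomorphism, so $\mmjoblocklimit{f}{\mu}\sim\mmjoblocklimit{f}{\mu'}$.

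$(\Longleftarrow)$ I prove the contrapositive. Assume $\mu\not\equiv_\star\mu'$, so $\varphi(\mu)$ is not a cyclic shift of $\varphi(\mu')$ — either because $\ell:=|\varphi(\mu)|\ne|\varphi(\mu')|=:\ell'$ (recall $\ell$ is the $\lcm$ of the o-block sizes of $\mu$), or because both have length $\ell$ but differ as cyclic sequences of blocks, in which case for every shift $i$ there is a pair of automata whose relative first-update order differs between $\varphi(\mu)$ and $\sigma^i(\varphi(\mu'))$, exactly as in the proof of the characterisation of $\equiv_0$. The idea is to reduce "isomorphism of limit dynamics" to "equality of permutations up to cycle type": I restrict attention to networks $f_i(x)=x_i\oplus g_i(x)$ where $g_i$ depends only on the coordinates smaller than $i$, so that every substep map $\mmjblock{f}{W}$ is triangular, hence a bijection of $\B^n$; consequently $\mmjoblock{f}{\nu}$ is a permutation of $\B^n$ for every $\nu$, $\Omega_{\mmjoblock{f}{\nu}}=\B^n$, and $\mmjoblocklimit{f}{\nu}$ coincides with $\mmjoblock{f}{\nu}$ itself. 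It then suffices to design such an $f$ — guided by the order-discrepancy witness, or by the length mismatch — for which $\mmjoblock{f}{\mu}$ and $\mmjoblock{f}{\mu'}$ have different cycle types (say one has a fixed point and the other not, or a $2$-cycle becomes a $4$-cycle), the $g_i$ being a small controlled-XOR gadget on two or three coordinates that reveals the relative update order reversibly, arranged so that the effect survives composition into the full one-step permutation and rules out all $\ell$ shifts of $\varphi(\mu')$ at once.

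The main obstacle is precisely this last step. Unlike in the $\equiv_0$ setting, making the one-step maps differ is not enough — an ordering discrepancy may be flushed out on passing to the limit — so the construction must keep tight control of the global cycle structure; enforcing reversibility of every substep (which collapses "limit dynamics" onto "the permutation itself") is the key device that makes this feasible. Verifying that a concrete gadget does simultaneously defeat all $\ell$ cyclic shifts, together with handling the case $\ell\ne\ell'$ (where one instead makes the limit dynamics read off the number of substeps), is the bulk of the argument.
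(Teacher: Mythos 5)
Your $(\Longrightarrow)$ direction is correct and is essentially the paper's argument: the paper also conjugates by a prefix of the substep composition (its $\pi=\mmjblock{f}{W_0,\ldots,W_{\ispe-1}}$ plays the role of your $k$), and your "rotation of a composition" reasoning for bijectivity between the limit sets is a clean way to phrase what the paper checks by hand.

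The $(\Longleftarrow)$ direction, however, is not a proof but a plan, and the part you defer ("the bulk of the argument") is exactly where the paper's proof lives. From $\mu\not\equiv_\star\mu'$ you must actually exhibit an $f$ with non-isomorphic limit dynamics, and this requires a case analysis that your sketch does not identify: the paper distinguishes (1) some automaton updated a different number of times per step in $\mu$ and $\mu'$ (note this can happen even when $|\varphi(\mu)|=|\varphi(\mu')|$, so your dichotomy ``$\ell\neq\ell'$ versus same length but different cyclic sequences'' is not the right one); (2.1) a pair of automata synchronized in some block of $\varphi(\mu)$ but never synchronized in $\varphi(\mu')$; (2.2.1) three automata of the same matrix whose columns appear in a non-circular relative order; and (2.2.2) a circular column permutation with no block in common, which is reduced back to (2.1). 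Each case needs its own gadget (a modular counter at the discrepant automaton; a timed bit-exchange between the two never-synchronized automata, where a $2$-cycle appears only under the synchronizing schedule; a $3$-automaton copy cycle whose outcome even depends on the parity of the number of updates), and verifying that the limit sets — not just the one-step maps — differ is precisely the delicate point you acknowledge but do not carry out. Your proposed device of restricting to Boolean triangular (XOR) networks so that the limit dynamics is the whole permutation adds a further unaddressed difficulty rather than removing one: the statement quantifies over all $f:X\to X$, and the paper's constructions for cases (1) and (2.1) use non-Boolean local state spaces (e.g.\ a counter over $\entiers{\alpha}$, which cannot be replaced by a single Boolean automaton when $\alpha$ and $\alpha'$ have the same parity); whether Boolean reversible gadgets alone can always separate the cycle types, uniformly against all shifts, is exactly what would have to be proved, and no such gadget is exhibited. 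As it stands the hard implication is missing.
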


\begin{proof}
	Let $\mu$ and $\mu'$ be two block-parallel update modes of $\BPn$.\medskip
	
  \noindent $(\Longrightarrow)$ Let $\mu,\mu'$ be such that $\mu \equiv_\star 
  \mu'$ of shift $\ispe \in \entiers{p}$, with $\varphi(\mu) = \Wl$, 
  $\varphi(\mu') = (W'_\ell)_{\ell \in \entiers{p}}$ and $p = |\varphi(\mu)| = 
  |\varphi(\mu')|$.
  It means that $\forall i \in \entiers{p}$, we have $W'_i = 
  W_{i + \ispe \mod p}$, and for any AN $f$, we deduce that 
  $\pi = \mmjblock{f}{W_0, \ldots, W_{\ispe-1}}$ is the desired isomorphism from 
  $\Omega_{\mmjoblock{f}{\mu}}$ to $\Omega_{\mmjoblock{f}{\mu'}}$.
  Indeed, we have $\mmjoblock{f}{\mu}(x) = y$ if and only if 
  $\mmjoblock{f}{\mu'}(\pi(x)) = \pi(y)$ because
  \[
    \mmjoblock{f}{\mu'} \circ \pi
    =
    \mmjblock{f}{W_0, \ldots, W_{\ispe-1}, W'_0, \ldots, W'_p}
    =
    \mmjblock{f}{W'_{p-\ispe}, \ldots, W'_p} \circ \mmjoblock{f}{\mu}
    =
    \pi \circ \mmjoblock{f}{\mu}\text{.}
  \]
  Note that $\pi^{-1} = \mmjoblock{f}{\mu'}^{(q-1)} \circ 
  \mmjblock{f}{W'_{\ispe} \ldots W'_{p-1}}$ with $q$ the least common multiple 
  of the limit cycle lengths, and $\pi^{-1} \circ \pi$ (\emph{resp.} 
  $\pi \circ\pi^{-1}$)  is the identity on $\Omega_{\mmjoblock{f}{\mu}}$
  (\emph{resp.} $\Omega_{\mmjoblock{f}{\mu'}}$).\medskip

  \noindent $(\Longleftarrow)$ We prove the contrapositive, from $\mu \not 
  \equiv_\star \mu'$, by case disjunction.\smallskip
  
  \begin{enumerate}
  \item[(1)] If in $\varphi(\mu)$ and $\varphi(\mu')$, there is an automaton 
	  $\ispe$ which is not updated the same number of times $\alpha$ and $\alpha'$ 
  	in $\mu$ and $\mu'$ respectively, then we assume without loss of generality 
	  that $\alpha > \alpha'$ and consider the AN $f$ such that:
  	\begin{itemize}[nosep]
    \item $X_{\ispe} = \entiers{\alpha}$ and $X_i = \{0\}$ for all 
    	$i \neq \ispe$; and
    \item $f_{\ispe}(x) = (x_{\ispe} + 1) \mod \alpha$ and $f_i(x) = x_i$ for 
    	all $i \neq \ispe$.
	  \end{itemize}
  	It follows that $\mmjoblocklimit{f}{\mu}$ has only fixed points since 
  	$+1 \mod \alpha$ is applied $\alpha$ times, whereas 
  	$\mmjoblocklimit{f}{\mu'}$ has no fixed point because $\alpha' < \alpha$.
	  We conclude that $\mmjoblocklimit{f}{\mu} \not \sim 
	  \mmjoblocklimit{f}{\mu'}$.\smallskip
	  
	\item[(2)] If in $\varphi(\mu)$ and $\varphi(\mu')$, all the automata are 
		updated the same number of times, then the transformation from $\mu$ to 
		$\mu'$ is a permutation on $\entiers{n}$ which preserves the matrices of 
		their matrix representations (meaning that any $i \in \entiers{n}$ is in an 
		o-block of the same size in $\mu$ and $\mu'$, which also implies that $\mu$ 
		and $\mu'$ are constructed from the same partition of $n$). 
		Then we consider subcases.
		
		\begin{enumerate}
		\item[(2.1)] If one matrix of $\mu'$ is not obtained by a permutation of the 
			columns from $\mu$, then there is a pair of automata $\ispe, \jspe$
			that appears in the $k$-th block of $\varphi(\mu)$ for some $k$,
			and does not appear in any block of $\varphi(\mu')$.
			Indeed, one can take $\ispe, \jspe$ to be in the same column in $\mu$ but 
			in different columns in $\mu'$.
			Let $S$ be the o-block of $\ispe$ and $S'$ be the o-block of $\jspe$.
			Let $p$ denote the least common multiple of o-blocks sizes in both $\mu$ 
			and $\mu'$.
			In this case we consider the AN $f$ such that:
			\begin{itemize}[nosep]
			\item $X_{\ispe} = \B \times \entiers{\frac{p}{|S|}}$,
	      $X_{\jspe} = \B \times \entiers{\frac{p}{|S'|}}$,
				and $X_i = \{0\}$ for all $i \notin \{\ispe,\jspe\}$.
				Given $x \in X$, we denote $x_{\ispe} = (x_{\ispe}^b, x_{\ispe}^\ell)$ 
				the state of $\ispe$ (and analogously for $\jspe$); and
			\item $f_{\ispe}(x) = \begin{cases}
        	(x_{\jspe}^b, x_{\ispe}^\ell + 1 \mod \frac{p}{|S|}) 
        		& \text{if } x_{\ispe}^\ell = 0\\
	        (x_{\ispe}^b, x_{\ispe}^\ell + 1 \mod \frac{p}{|S|}) 
  	      	& \text{otherwise}
    	  \end{cases}$,\\
	      $f_{\jspe}(x) = \begin{cases}
	        (x_{\ispe}^b, x_{\jspe}^\ell + 1 \mod \frac{p}{|S'|}) 
	        	& \text{if } x_{\jspe}^\ell = 0\\
        	(x_{\jspe}^b, x_{\jspe}^\ell + 1 \mod \frac{p}{|S'|}) 
        		& \text{otherwise}
	      \end{cases}$, and\\
	      $f_i(x) = x_i$ for all $i \notin \{\ispe,\jspe\}$.
 			\end{itemize}
			Note that $\ispe$ (resp.~$\jspe$) is updated $\frac{p}{|S|}$ 
			(resp.~$\frac{p}{|S'|}$) times during a step in both $\mu$ and $\mu'$.
			Therefore for any $x \in X$, its two images under $\mu$ and $\mu'$ verify
			$\mmjoblock{f}{\mu}(x)^\ell_{\ispe} = \mmjoblock{f}{\mu'}(x)^\ell_{\ispe} 
				= x^\ell_{\ispe}$ (and analogously for $\jspe$).
			Thus for the evolution of the states of $\ispe$ and $\jspe$ during a step,
			the second element is fixed and only the first element (in $\B$) may 
			change.
			We split $X$ into $X^{=} = \{x \in X \mid x_{\ispe}^b = x_{\jspe}^b\}$
			and $X^{\neq} = \{x \in X \mid x_{\ispe}^b \neq x_{\jspe}^b\}$,
			and observe the following facts by the definition of $f_{\ispe}$ and 
			$f_{\jspe}$:
			\begin{itemize}[nosep]
			\item Under $\mu$ and $\mu'$, all the elements of $X^=$ are fixed points
				(indeed, only $x_{\ispe}^b$ and $x_{\jspe}^b$ may evolve by copying the 
				other).
			\item Under $\mu$, let $m, m'$ be the respective number of times $\ispe,
				\jspe$ have been updated prior to the $k$-th block of $\varphi(\mu)$ in 
				which they are updated synchronously.
				Consider the configurations $x, y \in X^{\neq}$ with $x_{\ispe} = 
				(0, -m \mod \frac{p}{|S|})$, $x_{\jspe} = (1, -m' \mod \frac{p}{|S'|})$,
				$y_{\ispe} = (1, -m \mod \frac{p}{|S|})$ and $y_{\jspe} = (0, -m' \mod	
				\frac{p}{|S'|})$.
				It holds that $\mmjoblock{f}{\mu}(x) = y$ and 
				$\mmjoblock{f}{\mu}(y) = x$, because $x_{\ispe}^b$ and $x_{\jspe}^b$ are 
				exchanged synchronously when $x_{\ispe}^\ell = x_{\jspe}^\ell = 0$ 
				during the $k$-th block of $\varphi(\mu)$, and are not exchanged again 
				during that step by the choice of the modulo.
				Hence, $\mmjoblocklimit{f}{\mu}$ has a limit cycle of length two.
			\item Under $\mu'$, for any $x \in X^{\neq}$, there is a substep with 
				$x_{\ispe}^\ell = 0$ and there is a substep with $x_{\jspe}^\ell = 0$,
				but they are not the same substep (because $\ispe$ and $\jspe$ are never 
				synchronized in $\mu'$).
				As a consequence, $x_{\ispe}^b$ and $x_{\jspe}^b$ will end up having the 
				same value (the first to be updated copies the bit from the second, then 
				the second copies its own bit), \emph{i.e.}~$\mmjoblock{f}{\mu'}(x) \in 
				X^{=}$, and therefore $\mmjoblocklimit{f}{\mu'}$ has only fixed points.
		  \end{itemize}
			We conclude in this case that $\mmjoblocklimit{f}{\mu} \not \sim
			\mmjoblocklimit{f}{\mu'}$, because one has a limit cycle of length two, 
			whereas the other has only fixed points.

		\item[(2.2)] If the permutation preserves the columns within the matrices
	  (meaning that the automata within the same column in $\mu$ are also in the 
	  same column in $\mu'$), then we consider two last subcases:
	  	\begin{enumerate}
	  	\item[(2.2.1)] Moreover, if the permutation of some matrix is not 
	  		circular (meaning that there are three columns which are not in the same 
	  		relative order in $\mu$ and $\mu'$), then there are three automata 
	  		$\ispe$, $\jspe$ and $\kspe$ in the same matrix such that in $\mu$, 
	  		automaton $\ispe$ is updated first, then $\jspe$, then $\kspe$; whereas
				in $\mu'$, automaton $\ispe$ is updated first, then $\kspe$, then 
				$\jspe$. 
				Let us consider the automata network $f$ such that:
				\begin{itemize}[nosep]
				\item $X = \B^n$;
				\item $f_{\ispe}(x) = x_{\kspe}$, $f_{\jspe}(x) = x_{\ispe}$ and 
					$f_{\kspe}(x) = x_{\jspe}$; and
				\item $f_i(x) = x_i$ if $i \notin \{\ispe, \jspe, \kspe\}$.
				\end{itemize}
				If the three automata are updated in the order $\ispe$ then $\jspe$ 
				then $\kspe$, as it is the case with $\mu$, then after any update, 
				they will all have taken the same value.
				It implies that $\mmjoblock{f}{\mu}$ has only fixed points,
				precisely the set 
				$P = \{x \in \B^n \mid x_{\ispe} = x_{\jspe} = x_{\kspe}\}$.

				If they are updated in the order $\ispe$ then $\kspe$ then $\jspe$, as
				with $\mu'$, however, the situation is a bit more complex.
				We consider two cases, according to the number of times they are updated 
				during a period (recall that since they belong to the same matrix,
				they are updated repeatedly in the same order during the substeps):
				\begin{itemize}[nosep]
				\item If they are updated an odd number of times each, then automata $
					\ispe$ and $\jspe$ will take the initial value of automaton $\kspe$, 
					and automaton $\kspe$ will take the initial value of automaton 
					$\jspe$.
					In this case, $\mmjoblocklimit{f}{\mu'}$ has the fixed points $P$
					and limit cycles of length two.
				\item If they are updated an even number of times each, then the reverse 
					will occur: 
					automata $\ispe$ and $\jspe$ will take the initial value of automaton 
					$\jspe$, and automaton $\kspe$ will keep its initial value.
					In this case, $\mmjoblocklimit{f}{\mu'}$ has the fixed points 
					$Q = \{x \in \B^n \mid x_{\ispe} = x_{\jspe}\}$ which strictly 
					contains $P$ (\emph{i.e.}~$P \subseteq Q$ and $Q \setminus P \neq
					\emptyset$).
				\end{itemize}
				In both cases $\mmjoblocklimit{f}{\mu'}$ has more than the fixed points 
				$P$ in its limit set, hence we conclude that
				$\mmjoblocklimit{f}{\mu} \not \sim \mmjoblocklimit{f}{\mu'}$.
	  	\item[(2.2.2)] Moreover, if the permutation of all matrices is circular,
				then we first observe that when $\varphi(\mu)$ and $\varphi(\mu')$ have 
				one block in common, they have all blocks in common (because of the 
				circular nature of permutations), \emph{i.e.}~$\mu \equiv_\star \mu'$.
				Thus, under our hypothesis, we deduce that $\varphi(\mu)$ and 
				$\varphi(\mu')$ have no block in common.
				As a consequence, there exist automata $\ispe, \jspe$ with the property 
				from case~(2.1), namely synchronized in a block of 
				$\varphi(\mu)$ but never synchronized in any block of $\varphi(\mu')$,
			  and the same construction terminates this proof.
			\end{enumerate}
		\end{enumerate}
  \end{enumerate}
\end{proof}

Let $\BPn^\star = \BPn / \equiv_\star$ denote the corresponding quotient set.

\begin{theorem}
	Let $\lcm(i) = \lcm(\{j \in \llbracket 1, d(i) \rrbracket \mid 
	m(i,j) \geq 1\})$.
	For any $n \geq 1$, we have:
	\[
		|\BPn^\star| = 
			\sum_{i = 1}^{p(n)} 
				\frac{n!}{\prod_{j = 1}^{d(i)} 
					\left( 
						m(i,j)! 
					\right)^j}
				\cdot 
				\frac{1}{\lcm(i)}\text{.}
  \]
\end{theorem}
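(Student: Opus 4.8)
The plan is to stratify the count by the underlying integer partition of $n$ and divide each stratum by $\lcm(i)$. First I would record that $\varphi(\mu)$ determines the partition $i$ underlying $\mu$: the sequence $\varphi(\mu)$ has length $\ell=\lcm(i)$, and an automaton sitting in an o-block of size $j$ appears in exactly $\ell/j$ of the blocks $W_0,\dots,W_{\ell-1}$, so reading off, for each automaton, the number of blocks containing it recovers $j$, hence the whole partition. A circular shift merely permutes $W_0,\dots,W_{\ell-1}$, so it leaves these occurrence counts — and therefore the partition — unchanged; thus both $\equiv_0$ and $\equiv_\star$ refine the partition of $\BPn$ by underlying integer partition. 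By the proof of Formula~\ref{BPn0_eq1} in Theorem~\ref{theorem:BPn0} (where $\equiv_0$-classes correspond to matrix-representations up to permutation within columns), the number of $\equiv_0$-classes supported on the $i$-th partition is exactly the $i$-th summand $\frac{n!}{\prod_{j=1}^{d(i)}(m(i,j)!)^j}$.

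Next I would fix a partition $i$, set $\ell=\lcm(i)$, and identify the $\equiv_\star$-classes supported on $i$ with the orbits of the cyclic group $\Z/\ell\Z$ acting by the shifts $\sigma^0,\dots,\sigma^{\ell-1}$ on the set of block-sequences $\{\varphi(\mu)\mid \mu \text{ supported on } i\}$ — equivalently on the set of $\equiv_0$-classes supported on $i$. The action is well defined on this set: writing $\mu=\Sk$, cyclically rotating every o-block of $\mu$ by $r$ positions (legitimate since each o-block size divides $\ell$) yields a partitioned order $\mu'$, still supported on $i$, with $\varphi(\mu')=\sigma^r(\varphi(\mu))$. Hence the number of $\equiv_\star$-classes supported on $i$ is the number of such orbits, and once we show that no nontrivial shift fixes any such block-sequence (i.e.\ the action is \emph{free}), the orbit--stabilizer theorem gives that every orbit has size exactly $\ell=\lcm(i)$, so this number is $\frac{1}{\lcm(i)}\cdot\frac{n!}{\prod_{j=1}^{d(i)}(m(i,j)!)^j}$; summing over $i$ from $1$ to $p(n)$ then yields the claimed formula.

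The crux — and the step I expect to be the main obstacle — is the freeness of the shift action, i.e.\ that the circular block-word $\varphi(\mu)=(W_0,\dots,W_{\ell-1})$ is aperiodic. Suppose $\sigma^r(\varphi(\mu))=\varphi(\mu)$ with $0<r<\ell$; then $\sigma^d(\varphi(\mu))=\varphi(\mu)$ for $d=\gcd(r,\ell)$, so the sequence has period $d$ and in particular $W_0=W_d$. Fix any o-block $S=(i^S_0,\dots,i^S_{|S|-1})$ of $\mu$: by definition of $\varphi$, automaton $i^S_t$ lies in $W_m$ exactly when $m\equiv t \pmod{|S|}$, so the unique element of $W_0$ belonging to $S$ is $i^S_0$ and the unique element of $W_d$ belonging to $S$ is $i^S_{d \bmod |S|}$. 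Since the o-blocks of $\mu$ partition $\entiers{n}$ and $W_0=W_d$, these two elements coincide; as the entries of $S$ are pairwise distinct, this forces $|S|$ to divide $d$. Running this over all o-blocks shows that $\lcm(i)=\ell$ divides $d=\gcd(r,\ell)$, hence $\ell\mid r$, contradicting $0<r<\ell$. This establishes freeness and completes the proof.
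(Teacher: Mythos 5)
Your proposal is correct and follows essentially the same route as the paper: stratify by the underlying integer partition (shown to be invariant under $\equiv_\star$), observe that each $\equiv_\star$-class consists of exactly $\lcm(i)$ distinct $\equiv_0$-classes, and divide the per-partition count from Theorem~\ref{theorem:BPn0} by $\lcm(i)$. Your orbit--stabilizer formulation simply makes explicit two points the paper states tersely — that every shift of $\varphi(\mu)$ is realized by rotating the o-blocks, and that no nontrivial shift fixes $\varphi(\mu)$ (the paper's parenthetical ``all the blocks of $\varphi(\mu)$ are different'') — so it is a more detailed write-up of the same argument rather than a different proof.
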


\begin{proof}
	Let $\mu, \mu' \in \BPn$ two update modes such that $\mu \equiv \mu'$. 
	Then their sequential forms are of the same length, and each automaton 
	appears the same number of times in both of them. 
	This means that, if an automaton is in an o-block of size $k$ in $\mu$'s 
	partitioned order form, then it is also in an o-block of the same size in 
	$\mu'$'s. 
	We deduce that two update modes of size $n$ can only be equivalent as defined 
	in Definition~\ref{def:dyn_iso} if they are generated from the same partition 
	of $n$.

	Let $\mu \in \BPn^0$, generated from partition $i$ of $n$.
	Then $\varphi(\mu)$ is of length
	$\lcm(i)$. 
	Since no two elements of $\BPn^0$ have the same block-sequential form, the 
	equivalence class of $\mu$ in $\BPn^0$ contains exactly $\lcm(i)$ elements, 
	all generated from the same partition $i$ (all the blocks of $\varphi(\mu)$ 
	are different).
	Thus, the number of elements of $\BPn^\star$ generated from a partition $i$
	is the number of elements of $\BPn^0$ generated from partition $i$,
	divided by the number of elements in its equivalence class for $\BPn^\star$, 
	namely $\lcm(i)$.
\end{proof}

\begin{remark}\normalfont\label{remark:BPnstar}
  The formula for $|\BPn^\star|$ can actually be obtained from any formula in
  Theorem~\ref{theorem:BPn0} by multiplying by $\frac{1}{\lcm(i)}$ inside the 
  sum on partitions (from $i = 1$ to $p(n)$).
\end{remark}

While counting the elements of $\BPn^\star$ was pretty straightforward,
enumerating them by ensuring that no two partitioned orders are the same up 
to circular permutation of their block-sequential rewritings 
(Definition~\ref{def:dyn_iso}) is more challenging.
This is performed by Algorithm~\ref{algo:BPiso}.
It works much like the one enumareting the elements of $\BPn^0$, except for the following differences. 
In the first function, \texttt{EnumBPiso}:
right after choosing the partition, a list of coefficients $a[j]$ is determined, 
in a modified algorithm that computes $\lcm(i)$ inductively (lines 3-10). 
These coefficients are used in the second auxiliary function 
\texttt{EnumBlockIsoAux}, where the minimum $min_j$ of the matrix $M_j$ is 
forced to be in the first $a[j]$ columns of said matrix (lines 35-37, the 
condition is fulfilled when $min_j$ has not been chosen within the $a[j] - 1$ 
first columns, then it is placed in that column so only $m(i,j) - 1$ elements 
are chosen).

\begin{algorithm}[t!]
  {\scriptsize \caption{Enumeration of $\BPn^\star$}
  \label{algo:BPiso}
  \DontPrintSemicolon
  \SetKw{KwAnd}{and}
  \SetKwProg{Fn}{Function}{:}{}
  \SetKwFunction{FEnumBP}{EnumBPiso}
  \SetKwFunction{FEnumBPaux}{EnumBPisoAux}
  \SetKwFunction{FEnumBlock}{EnumBlockIso}
  \SetKwFunction{FEnumBlockAux}{EnumBlockIsoAux}
  \SetKwComment{Comment}{\# }{}
	\bigskip  
  
  \Fn{\FEnumBP{$n$}}{
    \ForEach{$i \in \emph{partitions}(n)$}{
      $a$ is a list of size $d(i)$\;
      $b \leftarrow 1$\;
      \For{$j \leftarrow d(i)$ \KwTo $1$}{
        \If{$m(i,j) > 0$}{
          $a[j] \leftarrow \gcd(b, j)$\;
          $b \leftarrow \lcm(b, j)$\;
        }
        \Else{
          $a[j] \leftarrow j$\;
        }
      }
      \FEnumBPaux{$n$, $i$, $1$, $a$}
    }
  }\medskip
  
  \Fn{\FEnumBPaux{$n$, $i$, $j$, $a$, $M_1$, \dots, $M_{j-1}$}}{
    \If{$d(i) < j$}{
      \texttt{enumerate(}$M$\texttt{)}\;
      \Return
    }
    \If{$m(i, j) > 0$}{ 
      \ForEach{combination $A$ of size $j \cdot m(i, j)$ among
				$\entiers{n} \setminus \bigcup_{k = 1}^{j - 1} M_k$}{
        $min_j \leftarrow min(A)$\;
        \ForEach{$M_j$ enumerated by \FEnumBlock{$A$, $j$, $m(i,j)$,
					$min_j$, $a \emph{[} j \emph{]}$}}{
          \FEnumBPaux{$n$, $i$, $j+1$, $a$, $M_1$, \dots, $M_{j-1}$, $M_j$}
        }
      }
    }
    \Else{
      \FEnumBPaux{$n$, $i$, $j+1$, $a$, $M_1$, \dots, $M_{j-1}$, $\emptyset$}
    }
  }\medskip

  \Fn{\FEnumBlock{$A$, $j$, $m$, $min_j$, $a_j$}}{
    \ForEach{$C$ enumerated by \FEnumBlockAux{$A$, $m$, $min_j$, $a_j$}}{
      \For{$k \leftarrow 1$ \KwTo $m$}{
        \For{$\ell \leftarrow 1$ \KwTo $j$}{
          $M_j[k][\ell] = C_\ell[k]$\;
        }
      }
      \texttt{enumerate(}$M_j$\texttt{)}\;
      \Return
    }
  }\medskip

  \Fn{\FEnumBlockAux{$A$, $j$, $m$, $min_j$, $a_j$, $C_1$, ..., $C_\ell$}}{
    \If{$A = \emptyset$}{
      \texttt{enumerate(}$C$\texttt{)}\;
      \Return
    }
    \Else{
      \If{$|A| = m \cdot(j - a_j + 1)$ \KwAnd $min_j \in A$}{
        \ForEach{combination $B$ of size $m - 1$ among $A \setminus \{min_j\}$}{
          \FEnumBlockAux{$A \setminus (B \cup \{min_j\})$, $m$, $min_j$, $a_j$,
						$C_1$, \dots, $C_\ell$, $(B \cup \{min_j\})$}\;
        }
      }
      \Else{
        \ForEach{combination $B$ of size $m$ among $A$}{
          \FEnumBlockAux{$A \setminus B$, $m$, $min_j$, $a_j$, $C_1$, \dots,
						$C_\ell$, $B$}\;
        }
      }
    }
  }}
\end{algorithm}

The proof of correction of Algorithm~\ref{algo:BPiso} can be found in \ref{a:proofs}.

\subsection{Implementations}
\label{s:implementations}

Proof-of-concept Python implementations of the three
enumeration algorithms mentioned
are available on the following repository:
\begin{center}
  \url{https://framagit.org/leah.tapin/blockpargen}.
\end{center}
It is archived by Software Heritage at the following permalink:
\begin{center}
  \href{https://archive.softwareheritage.org/browse/directory/f1b4d83c854a4d042db5018de86b7f41ef312a07/?origin_url=https://framagit.org/leah.tapin/blockpargen}{https://archive.softwareheritage.org/browse/directory/\\f1b4d83c854a4d042db5018de86b7f41ef312a07/?origin\_url=\\https://framagit.org/leah.tapin/blockpargen}.
\end{center}
We have conducted numerical experiments on a standard laptop,
presented on Figure~\ref{fig:xp}.

\begin{figure}
\begin{minipage}{0.4\textwidth}
 \centering
 \begin{tabular}{| c | c | c | c |}
  \hline
  $n$ & $\BPn$ & $\BPn^0$ & $\BPn^\star$ \\
  \hline
  1 & 1 & 1 & 1\\
  & - & - & -\\
  \hline
  2 & 3 & 3 & 2\\
  & - & - & -\\
  \hline
  3 & 13 & 13 & 6\\
  & - & - & -\\
  \hline
  4 & 73 & 67 & 24\\
  & - & - & -\\
  \hline
  5 & 501 & 471 & 120\\
  & - & - & -\\
  \hline
  6 & 4051 & 3591 & 795\\
  & - & - & -\\
  \hline
  7 & 37633 & 33573 & 5565\\
  & - & 0.103s & -\\
  \hline
  8 & 394353 & 329043 & 46060\\
  & 0.523s & 0.996s & 0.161s\\
  \hline
  9 & 4596553 & 3919387 & 454860\\
  & 6.17s & 12.2s & 1.51s\\
  \hline
  10 & 58941091 & 47827093 & 4727835\\
  & 1min24s & 2min40s & 16.3s\\
  \hline
  11 & 824073141 & 663429603 & 54223785\\
  & 21min12s & 38min31s & 3min13s\\
  \hline
  12 & 12470162233 & 9764977399 & 734932121\\
  & 5h27min38s & 9h49min26s & 45min09s\\
  \hline
  \end{tabular}
\end{minipage}\hfill
\begin{minipage}{0.42\textwidth}
\centering
\includegraphics[scale=0.43]{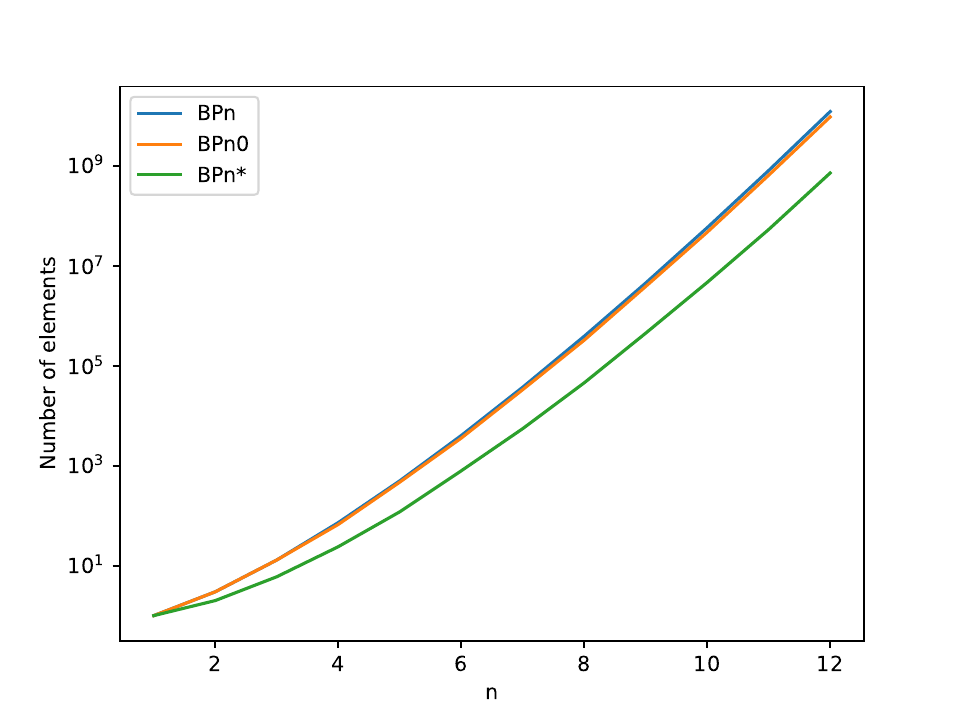}
\end{minipage}
\caption{
  Numerical experiments of our Python implementation of
  the three algorithms
  on a standard laptop (processor Intel-Core$^\text{TM}$ i7 @ 2.80 GHz).
  For $n$ from $1$ to $12$, the table (left) presents the size of
  $\BPn$, $\BPn^0$ and $\BPn^*$ and running time to enumerate
  their elements (one representative of each equivalence class;
  a dash represents a time smaller than $0.1$ second),
  and the graphics (right) depicts their respective sizes
  on a logarithmic scale.
  Observe that the sizes of $\BPn$ and $\BPn^0$ are comparable,
  whereas an order of magnitude is gained with $\BPn^*$,
  which may be significant for advanced numerical experiments
  regarding limit dynamics under block-parallel udpate modes.
}
\label{fig:xp}
\end{figure}


\section{Computational complexity under block-parallel updates}
\label{s:complexity}

Computational complexity is important to anyone willing to use algorithmic tools
in order to study discrete dynamical systems.
Lower bounds inform on the best worst case time or space one can expect
with an algorithm solving some problem.
The $n$ local functions of a BAN are encoded as Boolean circuits,
which is a convenient formalism corresponding to the high level descriptions one usually employs.
The update mode is given as a list of lists of integers, each of them being encoded either in unary or binary
(this makes no difference, because the encoding of local functions already has a size greater than $n$).

In this section we characterize the computational complexity of typical problems
arising in the framework of automata networks.
We will see that almost all problems reach $\PSPACE$-completeness.
The intuition behind this fact is that the description of a block-parallel update mode
may expend (through $\varphi$) to an exponential number of substeps,
during which a linear bounded Turing machine may be simulated via iterations of a circuit.
We first recall this folklore building block and present a general outline of our constructions
(Subsection~\ref{ss:construction}).
Then we start with results on computing images, preimages, fixed points and limit cycles
(Subsection~\ref{ss:image}),
before studying reachability and global properties of the function $\mmjoblock{f}{\mu}$
computed by an automata network $f$ under block-parallel update schedule $\mu$
(Subsection~\ref{ss:reach}).

\subsection{Outline of the $\PSPACE$-hardness constructions}
\label{ss:construction}

We will design polynomial time many-one reductions from the following
$\PSPACE$-complete decision problem, which appears for example in~\cite{J-Goles2016}.\\[.5em]
\decisionpb{Iterated Circuit Value Problem}{Iter-CVP}
{a Boolean circuit $C:\B^n\to\B^n$, a configuration $x\in\B^n$, and $i\in\entiers{n}$.}
{does $\exists t\in\N:C^t(x)_i=\1$?}

\begin{theorem}[folklore]
  {\normalfont \textbf{Iter-CVP}} is $\PSPACE$-complete.
\end{theorem}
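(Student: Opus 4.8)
The plan is to establish membership in $\PSPACE$ by a naive iteration with a bounded counter, and $\PSPACE$-hardness by a standard simulation of a space-bounded Turing machine.

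For the upper bound, note that the orbit $x, C(x), C^2(x), \dots$ stays in the finite set $\B^n$, hence becomes ultimately periodic within $2^n$ steps; therefore $\exists t\in\N:C^t(x)_i=\1$ holds if and only if it holds for some $t<2^n$. A polynomial-space procedure keeps the current configuration $y\in\B^n$ (initialised to $x$) together with an $n$-bit counter: at each iteration it evaluates $C$ on $y$, which takes space polynomial in $|C|$, accepts if $y_i=\1$, and otherwise overwrites $y$ by $C(y)$ and increments the counter, rejecting on overflow. This shows {\normalfont\textbf{Iter-CVP}}$\in\PSPACE$.

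For the lower bound, I would reduce from the acceptance problem of a deterministic Turing machine $M$ running in space $p(m)$ on inputs of length $m$, for a fixed polynomial $p$, which is $\PSPACE$-complete. Given an input $w$ with $s=p(|w|)$, encode a configuration of $M$ using at most $s$ cells as a bit string of length $N=\O(s)$: a constant number of bits per cell (its tape symbol and a flag telling whether the head scans it), plus $\O(\log|Q|)$ bits for the control state. Since the transition of $M$ is local — each cell's new content depends only on that cell, its neighbours, the head flags and the state — there is a Boolean circuit $C:\B^N\to\B^N$ of size polynomial in $s$, constructible from $M$ and $w$ in polynomial time, that maps the encoding of a configuration to that of its successor, with halting configurations sent to themselves. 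Take $x\in\B^N$ to be the encoding of the initial configuration of $M$ on $w$, and add to $C$ one extra output coordinate $i$ computing from the state bits the predicate ``the current state is the accepting state of $M$'' (a simple Boolean combination of $\O(\log|Q|)$ inputs). Because accepting configurations are fixed points of $C$, that coordinate becomes $\1$ at some time precisely when $M$ accepts, so $\exists t\in\N:C^t(x)_i=\1$ if and only if $M$ accepts $w$, which is the desired reduction.

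The only step demanding care — and it is entirely routine — is writing out the circuit $C$ that performs a single transition of $M$ and checking that it has polynomial size and is produced in polynomial time. There is no genuine obstacle here, which is precisely why the statement is folklore.
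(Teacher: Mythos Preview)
The paper does not prove this theorem at all: it is stated as folklore, cited to~\cite{J-Goles2016}, and used as a black box for the subsequent reductions. Your argument is correct and is exactly the standard one the folklore label points to --- membership by iterating $C$ for at most $2^n$ steps in polynomial space, and hardness by encoding configurations of a fixed polynomial-space machine and letting a polynomial-size circuit compute the successor configuration, with one coordinate flagging the accepting state.
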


Before presenting the general outline of our constructions,
we need a technical lemma related to the generation of primes (proof in \ref{a:proofs}).

\begin{lemma}\label{lem:primes}
  For all $n \geq 2$, a list of distinct prime integers
  $p_1,p_2,\dots,p_{k_n}$ such that $2\leq p_i < n^2$ and $2^n < \prod_{i=1}^{k_n} p_i < 2^{2n^2}$
  can be computed in time $\O(n^2)$, with $k_n=\lfloor\frac{n^2}{2\ln(n)}\rfloor$.
\end{lemma}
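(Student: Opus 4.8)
The plan is to generate the list with a prime sieve and then use classical estimates on the prime counting function $\pi$ to control both its length and the product of its entries. First I would run a linear-time prime sieve (the Euler sieve) on the integers up to $n^2$: as it marks every composite exactly once, it performs $\O(n^2)$ operations on integers of $\O(\log n)$ bits, hence runs in time $\O(n^2)$ in the word-RAM model. Let $p_1<p_2<\cdots$ be the primes it returns in increasing order; the algorithm then computes $k_n=\lfloor n^2/(2\ln n)\rfloor$ from a rational approximation of $\ln n$ (in negligible time) and outputs $p_1,\dots,p_{k_n}$.

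Correctness reduces to three checks. (i) \emph{The list is well defined, with $2\leq p_i<n^2$:} one needs $\pi(n^2)\geq k_n$, which follows for $n\geq 5$ from the Rosser--Schoenfeld bound $\pi(x)>x/\ln x$ (valid for $x\geq 17$), since $n^2/\ln(n^2)=n^2/(2\ln n)\geq k_n$; the three cases $n\in\{2,3,4\}$ are settled by hand ($\pi(4)=2$, $\pi(9)=4$, $\pi(16)=6$, each at least the corresponding $k_n\in\{2,4,5\}$). Moreover $n^2$ is composite for $n\geq 2$, so every returned prime is $<n^2$. (ii) \emph{Upper bound:} from $p_i<n^2$ and $k_n\cdot\ln(n^2)\leq n^2$,
\[
  \prod_{i=1}^{k_n} p_i \;<\; (n^2)^{k_n} \;\leq\; (n^2)^{n^2/(2\ln n)} \;=\; e^{n^2} \;<\; 4^{n^2} \;=\; 2^{2n^2},
\]
using $e<4$ for the last step. (iii) \emph{Lower bound:} from the elementary inequality $n>2\ln n$ (valid for $n\geq 2$) we get $n^2/(2\ln n)>n$, hence $k_n\geq n$; since $p_1,\dots,p_{k_n}$ are the $k_n$ smallest primes, $p_1=2$ and $p_i\geq 3$ for $i\geq 2$, whence
\[
  \prod_{i=1}^{k_n} p_i \;\geq\; 2\cdot 3^{\,k_n-1} \;>\; 2^{k_n} \;\geq\; 2^{n}.
\]
Together with $2^n<2^{2n^2}$, this yields the announced bracketing of the product.

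The routine parts are the three small-$n$ verifications and the elementary inequalities $e<4$ and $n>2\ln n$. The step that genuinely needs care is securing $\pi(n^2)\geq k_n$ from a bound one can cite rather than reprove; I would lean on the Rosser--Schoenfeld estimate for that (a Chebyshev-type lower bound on $\pi$ would also do, at the price of checking a few more small cases). A last, purely bookkeeping, point is the exact evaluation of $\lfloor n^2/(2\ln n)\rfloor$, which needs $\ln n$ to sufficiently many bits; this is harmless, since $\O(\log n)$ bits suffice and it does not affect the $\O(n^2)$ bound.
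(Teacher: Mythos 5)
Your proposal is correct and follows essentially the same route as the paper: sieve the primes below $n^2$ in time $\O(n^2)$, take the first $k_n=\lfloor n^2/(2\ln n)\rfloor$ of them, bound the product from above by $(n^2)^{k_n}\leq 2^{2n^2}$ and from below by (roughly) $2^{k_n}\geq 2^n$. Your write-up is in fact a bit more careful than the paper's, which invokes the prime number theorem informally where you cite Rosser--Schoenfeld and check the small cases, and whose lower-bound chain $2^{k_n}>2^n$ is not strict at $n=2$, an edge case your $2\cdot 3^{k_n-1}>2^{k_n}\geq 2^n$ handles cleanly.
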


Our constructions of automata netwoks and block-parallel update schedules
for the computational complexity lower bounds are based on the following.

\begin{definition}\label{def:gn}
  For any $n\geq 2$, let $p_1,p_2,\dots,p_{k_n}$ be the $k_n$ primes given by Lemma~\ref{lem:primes},
  and denote $q_j=\sum_{i=1}^{j}p_i$ their cumulative series for $j$ from $0$ to $k_n$.
  Define the automata network $g_n$ on $q_{k_n}$ automata $\entiers{q_{k_n}}$ with constant $\0$ local functions,
  where the components are grouped in o-blocks of length $p_i$, that is with
  $\mu_n=\bigcup_{i\in\entiers{k_n}}\{(q_i,q_i+1,\dots,q_{i+1}-1)\}$.
\end{definition}

\begin{lemma}\label{lem:gn}
  For any $n\geq 2$, one can compute $g_n$ and $\mu_n$ in time $\O(n^4)$,
  and $|\varphi(\mu_n)|>2^n$.
\end{lemma}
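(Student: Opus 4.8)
The plan is to verify the two claims of Lemma~\ref{lem:gn} separately, relying on Lemma~\ref{lem:primes} for the arithmetic estimates.

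First I would bound the size of $g_n$ and $\mu_n$. By Lemma~\ref{lem:primes}, the primes $p_1,\dots,p_{k_n}$ satisfy $p_i<n^2$ and $k_n=\lfloor\frac{n^2}{2\ln(n)}\rfloor=\O(n^2)$, so the total number of automata is $q_{k_n}=\sum_{i=1}^{k_n}p_i<k_n\cdot n^2=\O(n^4)$. Each local function of $g_n$ is the constant $\0$ function, which has a Boolean circuit of size $\O(1)$ (or $\O(\log q_{k_n})$ if one insists on naming the output wire); summing over $q_{k_n}=\O(n^4)$ automata gives an encoding of size $\O(n^4)$. The update mode $\mu_n$ is the partition of $\entiers{q_{k_n}}$ into consecutive intervals of lengths $p_1,\dots,p_{k_n}$, which is produced directly from the list of primes in time linear in $q_{k_n}=\O(n^4)$. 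Since Lemma~\ref{lem:primes} provides the primes themselves in time $\O(n^2)$, the dominant cost is writing down the $\O(n^4)$ automata, so the whole construction takes time $\O(n^4)$.

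Next I would establish $|\varphi(\mu_n)|>2^n$. Recall from the definition of $\varphi$ that a partitioned order whose o-blocks have sizes $p_1,\dots,p_{k_n}$ unfolds into a sequence of blocks of length exactly $\lcm(p_1,\dots,p_{k_n})$. Since the $p_i$ are distinct primes, their least common multiple equals their product, so $|\varphi(\mu_n)|=\prod_{i=1}^{k_n}p_i$. By the lower bound in Lemma~\ref{lem:primes}, $\prod_{i=1}^{k_n}p_i>2^n$, which is exactly the claim.

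The argument is essentially a bookkeeping exercise once Lemma~\ref{lem:primes} is in hand; the only point requiring a moment's care is the observation that pairwise-distinct primes have $\lcm$ equal to their product, so that the length of $\varphi(\mu_n)$ is genuinely $\prod p_i$ rather than something smaller — but this is immediate. The main conceptual content (the number-theoretic estimate guaranteeing both a polynomial number of small primes and a product exceeding $2^n$) has already been isolated into Lemma~\ref{lem:primes}, so there is no real obstacle remaining in this lemma itself.
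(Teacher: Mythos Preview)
Your argument is correct and follows exactly the same line as the paper's own proof: bound $q_{k_n}=\sum_i p_i<k_n\cdot n^2=\O(n^4)$ for the time bound, and use that $|\varphi(\mu_n)|=\lcm(p_1,\dots,p_{k_n})=\prod_i p_i>2^n$ since the $p_i$ are distinct primes. The paper's version is simply terser, but the content is identical.
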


\begin{proof}
  The time bound comes from Lemma~\ref{lem:primes} and the fact that $q_{k_n}$ is in $\O(n^4)$.
  The number of blocks in $\varphi(\mu_n)$ is the least common multiple of its o-block sizes,
  which is the product $\prod_{i=1}^{k_n} p_i$, hence from Lemma~\ref{lem:primes}
  we conclude that it is greater than $2^n$.
\end{proof}

The general idea is now to add some automata to $g_n$
and place them within singletons in $\mu_n$, i.e., each of them in a new o-block of length $1$.
We propose an example implementing a binary counter on $n$ bits.

\begin{example}\label{ex:counter}
  Given $n\geq 2$, consider $g_n$ and $\mu_n$ given by Lemma~\ref{lem:gn}.
  Construct $f$ from $g_n$ by adding $n$ Boolean components $\{q_{k_n},\dots,q_{k_n+n}\}$,
  whose local functions increment a binary counter on those $n$ bits,
  until it freezes to $2^n-1$ (all bits in state $\1$).
  Construct $\mu'$ from $\mu_n$ as $\mu'=\mu_n\cup\bigcup_{i\in\entiers{n}}\{(q_{k_n}+i)\}$,
  so that the counter components are updated at each substep.
  Observe that the pair $f,\mu'$ can be still be computed from $n$ in time $\O(n^4)$.
  Figure~\ref{fig:counter} illustrates an example of orbit for $n=3$,
  and one can notice that $\mmjoblock{f}{\mu'}$ is a constant function sending any $x\in\B^n$ to $\0^{q_{k_n}}\1^n$.
  \begin{figure}
    \centering
    \includegraphics{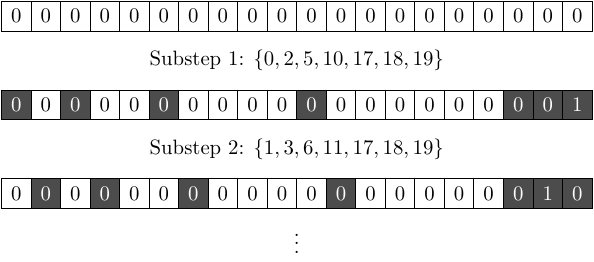}
    \caption{
      Substeps leading to the image of configuration $\0^{q_{k_n}}\0\1\0$ in $\mmjoblock{f}{\mu'}$
      from Example~\ref{ex:counter} for $n=3$ ($k_n=4$ and $q_{k_n}=2+3+5+7=17$).
      The last $3$ bits implement a binary counter, freezing at $7$ ($\1\1\1$).
      Above each substep the block of updated automata is given.
    }
    \label{fig:counter}
  \end{figure}
\end{example}

Remark that we will prove complexity lower bounds by reduction from \textbf{Iter-CVP},
where $n$ will be the number of inputs and outputs of the circuit to be iterated,
hence the integer $n$ itself will be encoded in unary.
As a consequence, the construction of Example~\ref{ex:counter} is computed in polynomial time.

\subsection{Images, preimages, fixed points and limit cycles}
\label{ss:image}

We start the study of the computational complexity of automata networks under block-parallel update schedules
with the most basic problem of computing the image $\mmjoblock{f}{\mu}(x)$
of some configuration $x$ through $\mmjoblock{f}{\mu}$
(i.e., one step of the evolution),
which is already $\PSPACE$-hard.
We conduct this study as decision problems.
It is actually hard to compute even a single bit of $\mmjoblock{f}{\mu}(x)$.
The fixed point verification problem is a particular case of computing an image, which is still $\PSPACE$-hard
(unlike block-sequential update schedules for which this problem is in $\Poly$).
Recall that the encoding of $\mu$ (with integers in unary or binary) has no decisive influence on the input size,
this latter being characterized by the circuits sizes and in particular their number of inputs,
denoted $n$, which is encoded in unary.
\\[.5em]
\decisionpb{Block-parallel step bit}{BP-Step-Bit}
{$(f_i:\B^n\to\B)_{i\in\entiers{n}}$ as circuits, $\mu\in\BPn$, $x\in\B^n$, $j\in\entiers{n}$.}
{does $\mmjoblock{f}{\mu}(x)_j=1$?}
\\[.5em]
\decisionpb{Block-parallel step}{BP-Step}
{$(f_i:\B^n\to\B)_{i\in\entiers{n}}$ as circuits, $\mu\in\BPn$, $x,y\in\B^n$.}
{does $\mmjoblock{f}{\mu}(x)=y$?}
\\[.5em]
\decisionpb{Block-parallel fixed point verification}{BP-Fixed-Point-Verif}
{$(f_i:\B^n\to\B)_{i\in\entiers{n}}$ as circuits, $\mu\in\BPn$, $x\in\B^n$.}
{does $\mmjoblock{f}{\mu}(x)=x$?}\\

This first set of problems is related to the image of a given configuration $x$,
which allows the reasonings to concentrate on the dynamics of substeps for that single configuration $x$,
regardless of what happens for other configurations.
Note that $n$ will be the size of the \textbf{Iter-CVP} instance,
while the size of the automata network will be $q_{k_n}+\ell'+n+1$.

\begin{theorem}\label{thm:image}
  {\normalfont \textbf{BP-Step-Bit}}, {\normalfont \textbf{BP-Step}} and {\normalfont \textbf{BP-Fixed-Point-Verif}}
  are $\PSPACE$-complete.
\end{theorem}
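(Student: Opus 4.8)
The plan is to prove membership in $\PSPACE$ once for all three problems, then $\PSPACE$-hardness by a single reduction from \textbf{Iter-CVP} that simultaneously handles all three. For membership, observe that given $f$ as circuits and $\mu\in\BPn$, one can compute $\varphi(\mu)=\Wl$ where $p=\lcm$ of the o-block sizes; although $p$ may be exponential in the input size, each block $W_i=\{i^k_{i\bmod n_k}\mid k\in\int{s}\}$ is computable in polynomial time from $i$ (given in binary), and $i$ ranges over $\entiers{p}$ which needs only polynomially many bits. So we iterate: maintain a current configuration (polynomial space), loop $i$ from $0$ to $p-1$, and at each step compute $\mmjblock{f}{W_i}$ of the current configuration by evaluating the circuits. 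This runs in polynomial space (the exponential time is irrelevant). Reading off bit $j$ of the result, or comparing to $y$, or comparing to $x$, then decides \textbf{BP-Step-Bit}, \textbf{BP-Step} and \textbf{BP-Fixed-Point-Verif} respectively; all three are in $\PSPACE$.

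For hardness, I would reduce from \textbf{Iter-CVP} with instance $(C:\B^n\to\B^n,\,x_0\in\B^n,\,i_0\in\entiers{n})$. Take $g_n$ and $\mu_n$ from Lemma~\ref{lem:gn}, so that $\varphi(\mu_n)$ has more than $2^n$ blocks. To $g_n$ we adjoin $n$ ``tape'' components holding a configuration of $\B^n$, initialized (via the input configuration we will feed the reduction) to $x_0$, plus one extra ``flag'' component initialized to $\0$; put each of these $n+1$ new components in its own singleton o-block (as in Example~\ref{ex:counter}), so they are updated at every substep. The local functions of the tape components apply $C$ once per substep, i.e.\ at each substep the tape holds $C^t(x_0)$ after $t$ substeps; the flag component is set to $\1$ (and frozen at $\1$) as soon as the tape's $i_0$-th bit becomes $\1$. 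Since there are more than $2^n$ substeps and $C^t(x_0)$ is eventually periodic with period at most $2^n$, the flag ends at $\1$ in the image iff $\exists t:C^t(x_0)_{i_0}=\1$. This gives the reduction for \textbf{BP-Step-Bit} directly (query the flag bit). For \textbf{BP-Step}, query whether the image equals the configuration that is $\0$ on the $g_n$-part, carries $C^{p}(x_0)$ on the tape — wait: $C^{p}(x_0)$ is not known in polynomial time. To avoid this, instead make the tape component also freeze (stop applying $C$) once the flag is $\1$, and additionally reset the tape to $\0^n$ once the flag is $\1$; then the image is the fixed configuration $\0^{\,q_{k_n}}\,\0^n\,\1$ exactly when the answer is yes, and when the answer is no the flag is $\0$, so the image differs in the flag bit. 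Thus \textbf{BP-Step} with $y=\0^{q_{k_n}}\0^n\1$ decides \textbf{Iter-CVP}. Finally, by the same freezing-and-reset trick, when the answer is yes the whole image is $\0^{q_{k_n}}\0^n\1$, and feeding $x=\0^{q_{k_n}}\0^n\1$ we get $\mmjoblock{f}{\mu}(x)=x$ (everything already frozen, flag reads $\1$ immediately), whereas when the answer is no the flag stays $\0\ne\1$, so $x$ is not a fixed point — but we must also feed $x_0$ into the tape, so more precisely we take the instance configuration to be $x_0$ on the tape and ask about fixed-point-hood of the configuration whose tape-part is $x_0$; here one checks that if the answer is yes then after one step the configuration returns to exactly this form only if $x_0$ is itself a fixed-like state. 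The cleanest fix is: for \textbf{BP-Fixed-Point-Verif}, hardwire $x_0$ into the circuits (as a constant) rather than passing it in the configuration, so the tape component is initialized by its own local function at the first substep; then the target configuration to test is the all-$\0$ tape with flag $\1$ (plus $\0$ on $g_n$), which is a fixed point iff the \textbf{Iter-CVP} answer is yes.

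All of these constructions are computed in polynomial time because $n$ is the unary-encoded input size of \textbf{Iter-CVP}, and by Lemma~\ref{lem:gn} the pair $g_n,\mu_n$ is computed in time $\O(n^4)$; the added components and their if-then-else local functions are of polynomial size. The main obstacle is the bookkeeping in the hardness direction: ensuring that the image of the designated configuration is a \emph{fixed, input-independent} configuration in the yes-case (so the reduction's target $y$ is computable in polynomial time) while differing in at least one coordinate in the no-case, and making the same gadget serve the fixed-point problem without circularity between ``the configuration we test'' and ``the configuration the dynamics reaches''. I expect this to require the freeze-and-reset mechanism described above, together with a careful argument that the number of substeps strictly exceeds the transient-plus-period bound $2^n$ of $C$, which is exactly what Lemma~\ref{lem:gn} provides.
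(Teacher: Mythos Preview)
Your membership argument is fine, and your reductions for \textbf{BP-Step-Bit} and \textbf{BP-Step} work (the freeze-and-reset trick correctly forces the image to the fixed target $\0^{q_{k_n}}\0^n\1$ in the yes-case, while the flag bit alone distinguishes the no-case). The gap is in \textbf{BP-Fixed-Point-Verif}. With your freeze-and-reset rules, the configuration $\0^{q_{k_n}}\0^n\1$ (tape all $\0$, flag $\1$) is a fixed point \emph{unconditionally}: the flag is already $\1$, so the tape stays $\0^n$ and the flag stays $\1$, regardless of the \textbf{Iter-CVP} instance. Your ``hardwire $x_0$'' fix does not repair this, because without a substep counter the local functions have no way to recognise ``the first substep'' and therefore no way to override the frozen behaviour; any rule that triggers on the current state (e.g.\ ``if tape is $\0^n$ jump to $x_0$'') either fails when $\0^n$ lies in the orbit of $x_0$ or still leaves the flag-$\1$ configuration fixed. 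Conversely, testing the configuration with tape $x_0$ and flag $\0$ does not work either: in the yes-case its image has flag $\1$, and in the no-case its tape part is $C^{\ell}(x_0)$, neither of which equals the input in general.

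The missing ingredient is exactly a counter on $\lceil\log_2\ell\rceil$ bits that cycles modulo $\ell$ (the number of substeps). The paper adds such a block $B$, lets the tape $D$ iterate $C$ while $x_B<\ell-1$ and \emph{reset to $\tilde{x}$} at the last substep ($x_B=\ell-1$), and keeps $R$ as the sticky OR-recorder. Starting from $x=\0^{q_{k_n}}\0^{\ell'}\tilde{x}\0$, after $\ell$ substeps the primes, the counter and the tape are all back to their initial values; only $R$ may have changed. Hence $x$ is a fixed point iff the \textbf{Iter-CVP} instance is negative, yielding $\coPSPACE$-hardness (equivalently $\PSPACE$-hardness). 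This single construction simultaneously serves \textbf{BP-Step-Bit} (read $R$), \textbf{BP-Step} (ask whether the image is $\0^{q_{k_n}}\0^{\ell'}\tilde{x}\1$), and \textbf{BP-Fixed-Point-Verif}, which is why the paper does not need your separate freeze-and-reset or hardwiring devices.
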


\begin{proof}
  The problems \textbf{BP-Step-Bit}, \textbf{BP-Step} and \textbf{BP-Fixed-Point-Verif}
  are in $\PSPACE$, with a simple algorithm obtaining $\mmjoblock{f}{\mu}(x)$ by
  computing the least common multiple of o-block sizes and
  then using a pointer for each block throughout
  the computation of that number of substeps
  (each substep evaluates local functions in polynomial time).

  \medskip

  We give a single reduction for the hardness of \textbf{BP-Step-Bit}, \textbf{BP-Step}
  and \textbf{BP-Fixed-Point-Verif},
  where we only need to consider the dynamics of the substeps starting from one configuration $x$.
  Given an instance of \textbf{Iter-CVP} with a circuit $C:\B^n\to\B^n$,
  a configuration $\tilde{x}\in\B^n$ and $i\in\entiers{n}$,
  we apply Lemma~\ref{lem:gn} to construct $g_n,\mu_n$ on automata set $P=\entiers{q_{k_n}}$.
  Automata from $P$ have constant $\0$ local functions,
  and the number of substeps is $\ell=|\varphi(\mu_n)|>2^n$
  thanks to the prime's $\lcm$.
  We define a BAN $f$ by adding:
  \begin{itemize}[nosep]
    \item $\ell'=\lceil\log_2(\ell)\rceil$ automata numbered $B=\{q_{k_n},\dots,q_{k_n}+\ell'-1\}$,
      implementing a counter that increments modulo $\ell$ at each substep,
      and remains fixed when $x_B$ encodes an integer greater or equal to $\ell$
      (case not considered in this proof);
    \item $n$ automata numbered $D=\{q_{k_n}+\ell',\dots,q_{k_n}+\ell'+n-1\}$, whose local functions
      iterate $C:\B^n\to \B^n$ while the counter is smaller than $\ell-1$,
      and go to state $\tilde{x}$ when the counter reaches $\ell-1$, i.e.,~with
      \[
        f_D(x)=
        \begin{cases}
          C(x_D)&\text{if }x_B<\ell-1\text{,}\\
          \tilde{x}&\text{otherwise; and}
        \end{cases}
      \]
    \item $1$ automaton numbered $R=\{q_{k_n}+\ell'+n\}$, whose local function
      \[
        f_R(x)=x_R\vee x_{q_{k_n}+\ell'+i}
      \]
      records whether a state $\1$
      appeared at automaton in relative position $i$ within $D$.
  \end{itemize}
  We also add singletons to $\mu_n$ for each of these additional automata, by setting
  \[
    \mu'=\mu_n\cup\bigcup_{j\in B\cup D\cup R}\{(j)\}.
  \]
  Now, consider the dynamics of substeps in computing the image of configuration
    $x=\0^{q_{k_n}}\0^{\ell'}\tilde{x}\0$.
  During the first $\ell-1$ substeps:
  \begin{itemize}[nosep]
    \item automata $P$ have constant $\0$ local function;
    \item automata $B$ increment a counter from $0$ to $\ell-1$;
    \item automata $D$ iterate circuit $C$ from $\tilde{x}$; and
    \item automaton $R$ records whether the $i$-th bit of $D$ has been in state $\1$ during some iteration.
  \end{itemize}
  During the last substep,
  automata $B$ go back to $\0^n$ because of the modulo, and
  automata $D$ go back to state $\tilde{x}$.
  Since the number of substeps $\ell$ is greater than $2^n$ (Lemma~\ref{lem:gn}),
  the iterations of $C$ search the whole orbit of $\tilde{x}$,
  and at the end of the step automaton $R$ has recorded whether the \textbf{Iter-CVP} instance
  is positive (went to state $\1$) or negative (still in state $\0$).
  The images are respectively
  $y_-=\0^{q_{k_n}}\0^{\ell'}\tilde{x}\0$ or $y_+=\0^{q_{k_n}}\0^{\ell'}\tilde{x}\1$.
  This concludes the reductions, to \textbf{BP-Step-Bit} by asking whether automaton $R$
  (numbered $q_{k_n}+2n$) is in state $\1$,
  to \textbf{BP-Step} by asking whether the image of $x$ is $y_+$,
  and to \textbf{BP-Fixed-Point-Verif} because $y_-=x$ ($\coPSPACE$-hardness).
\end{proof}

As a corollary, the associated functional problem of computing $\mmjoblock{f}{\mu}$
is computable in polynomial space and is $\PSPACE$-hard for polynomial time Turing reductions
(not for many-one reductions, as there is no concept of negative instance for total functional problems).
Deciding whether a given configuration $y$ has a preimage through $\mmjoblock{f}{\mu}$ is also $\PSPACE$-complete
(see \ref{a:proofs} for details).

Now, we study the computational complexity of problems related to the existence
of fixed points and limit cycles in an automata network under block-parallel update schedule.
Again, we need to consider the image of all configurations,
and have no control on neither the start configuration $x$ nor the end configuration $y$
during the dynamics of substeps.
In particular, the counter may be initialized to any value,
and the bit $R$ may already be set to $\1$.
We adapt the previous reductions accordingly.\\[.5em]
\decisionpb{Block-parallel fixed point}{BP-Fixed-Point}
{$(f_i:\B^n\to\B)_{i\in\entiers{n}}$ as circuits, $\mu\in\BPn$.}
{does $\exists x\in\B^n:\mmjoblock{f}{\mu}(x)=x$?}\\[.5em]
\decisionpb{Block-parallel limit cycle of length $k$}{BP-Limit-Cycle-$k$}
{$(f_i:\B^n\to\B)_{i\in\entiers{n}}$ as circuits, $\mu\in\BPn$.}
{does $\exists x\in\B^n:\mmjoblock{f}{\mu}^k(x)=x$?}\\[.5em]
\decisionpb{Block-parallel limit cycle}{BP-Limit-Cycle}
{$(f_i:\B^n\to\B)_{i\in\entiers{n}}$ as circuits, $\mu\in\BPn$, $k\in\N_+$.}
{does $\exists x\in\B^n:\mmjoblock{f}{\mu}^k(x)=x$?}\\[.5em]
On limit cycles we have a family of problems (one for each integer $k$),
and a version where $k$ is part of the input (encoded in binary).
It makes no difference on the complexity.

\begin{theorem}
  \label{thm:fixedpoint}
  {\normalfont \textbf{BP-Fixed-Point}},
  {\normalfont \textbf{BP-Limit-Cycle-$k$}} for any $k\in\N_+$
  and {\normalfont \textbf{BP-Limit-Cycle}} are $\PSPACE$-complete.
\end{theorem}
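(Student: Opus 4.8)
The plan is to adapt the reduction from \textbf{Iter-CVP} used in the proof of Theorem~\ref{thm:image}, modifying the gadget so that it certifies a positive instance through the \emph{existence} of a fixed point (resp.\ limit cycle) rather than through a single known configuration. Membership in $\PSPACE$ is immediate for all three problems: one can nondeterministically guess $x\in\B^n$ (or deterministically cycle through all of $\B^n$, which is a polynomial amount of space) and then verify $\mmjoblock{f}{\mu}^k(x)=x$ using the polynomial-space algorithm from Theorem~\ref{thm:image} for computing one step, iterated $k$ times ($k$ in binary still only needs polynomial space for the outer counter). So the work is in the hardness.

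First I would start, as before, from an \textbf{Iter-CVP} instance $(C:\B^n\to\B^n,\tilde x,i)$ and invoke Lemma~\ref{lem:gn} to get $g_n,\mu_n$ on automata $P=\entiers{q_{k_n}}$ with $\ell=|\varphi(\mu_n)|>2^n$ substeps. The key difference from Theorem~\ref{thm:image} is that now we have no control over the starting configuration: the counter $B$ may start at an arbitrary value, and the record bit $R$ may start at $\1$. The fix is to make the counter-driven logic \emph{self-synchronizing} over one full step: regardless of the initial counter value, after $\ell$ substeps the counter returns to its start value (it increments modulo $\ell$), and when the counter passes through the distinguished value $\ell-1$ the $D$-automata are \emph{reset} to $\tilde x$. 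Thus within any single step, the $D$-block is reset to $\tilde x$ exactly once and then $C$ is iterated for the remaining substeps; since there are $\ell-1>2^n-1$ such iterations after the reset and before the next one, the whole orbit of $\tilde x$ is explored inside every step, from every starting configuration. For a configuration to be a fixed point we then need $x_D$ to be the value obtained after iterating $C$ exactly (counter-position-many) times from $\tilde x$ — but more simply, I would add a clean-up mechanism: make $f_D$ depend on whether the counter value in $x_B$ is out of range, forcing a constant behaviour there, and arrange that the only candidate fixed points are those where $x_B$ equals a fixed canonical value and $x_D=\tilde x$. Concretely: $f_R(x)=x_R\vee x_{q_{k_n}+\ell'+i}$ as before, so $R$ is monotone; then the unique fixed point (if the instance is positive, so that $R$ necessarily ends at $\1$) is $\0^{q_{k_n}}\mathbf{c}\,\tilde x\,\1$ where $\mathbf{c}$ is the canonical counter value, and there is \emph{no} fixed point if the instance is negative, because after one step $R$ would have to be $\0$ yet the orbit search would... — here one must be careful, and the cleanest route is to also force $R$ to be \emph{reset to $\0$} at the counter's distinguished substep, so that after one step $R=\1$ iff the instance is positive, making $\0^{q_{k_n}}\mathbf{c}\,\tilde x\,\1$ a fixed point exactly when the instance is positive, and ruling out all other fixed points by the resetting behaviour of $B$ and $D$.

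For \textbf{BP-Limit-Cycle-$k$} and \textbf{BP-Limit-Cycle} the same construction works with a trivial twist: I would either (a) observe that a limit cycle of length $k$ through a configuration forces that configuration to lie in the image of $\mmjoblocklimit{f}{\mu}$, and tune the gadget so that the only cycling configurations are the fixed point above — giving a cycle of length $k$ iff the instance is positive and $k$ divides $1$, i.e.\ only $k=1$ — or more robustly (b) add a small independent o-block that deterministically runs a clean permutation cycle of length exactly $k$ (e.g.\ a $\mathbb{Z}/k$ counter in its own o-blocks, analogous to case~(1) of the $\equiv_\star$ lemma), \emph{crossed} with the fixed-point gadget, so that the product dynamics has a limit cycle of length $k$ iff the \textbf{Iter-CVP} instance is positive, for every fixed $k\in\N_+$; and for \textbf{BP-Limit-Cycle} with $k$ part of the (binary) input, the same construction parametrized by $k$ works since the gadget for the $\mathbb{Z}/k$ cycle is polynomial-size in $\log k$. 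The $\coPSPACE$-side (equivalently, the fact that these are hard, not merely that positive instances are recognizable) is automatic since $\PSPACE=\coPSPACE$.

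I expect the main obstacle to be case (b)'s bookkeeping — specifically, ensuring that no \emph{spurious} limit cycles of length $k$ appear from configurations where the counter $B$ or the $\mathbb{Z}/k$ gadget start in ``garbage'' states, and that the $D$-reset genuinely happens once per step from \emph{every} starting configuration despite the counter's arbitrary initial offset. The argument that every step contains exactly one reset of $D$ followed by $>2^n$ iterations of $C$, independent of the start, is the crux; once that is established, the rest (monotone $R$, canonical counter value, product with the $\mathbb{Z}/k$ cycle) is routine and parallels the proof of Theorem~\ref{thm:image} and the constructions in Example~\ref{ex:counter}.
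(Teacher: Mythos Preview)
Your membership argument is correct and matches the paper's. The hardness argument, however, has a genuine gap in the treatment of the record bit $R$.

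You propose to \emph{reset $R$ to $\0$} at the distinguished substep and then let it record via the monotone OR. But this makes the final value of $R$ after one step depend only on the \textbf{Iter-CVP} instance and not on the initial value of $R$: in the positive case every starting configuration (with in-range counter and $D$ on the orbit of $\tilde{x}$) ends with $R=\1$, and in the negative case every such configuration ends with $R=\0$. Either way, there \emph{is} a fixed point (just start with $R$ equal to that terminal value; $B$ always returns modulo $\ell$ and $D$ is determined by the counter offset). So your construction does not separate positive from negative instances. The paper's key idea is different: at the distinguished substep one \emph{negates} $R$ (i.e., $f_R(x)=\neg x_R$ when $x_B=\ell-1$, and $f_R(x)=x_R\vee x_{q_{k_n}+\ell'+i}$ otherwise). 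Then over one step $R$ is flipped exactly once, so for $R$ to return to its initial value it must change once more, and the only other change available is the $\0\to\1$ transition of the OR, which requires a $\1$ at position $i$ somewhere in the orbit of $\tilde{x}$. This parity trick is precisely what turns ``image equals a known value'' into ``there exists a fixed point''.

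Your route (b) for limit cycles --- crossing the fixed-point gadget with an independent $\Z/k$ cycle --- does not work either, and not only because it rests on the broken fixed-point gadget. Any deterministic dynamics on a finite set has at least one periodic orbit, so in the negative case the fixed-point gadget still has some limit cycle of length $m\geq 1$; the product with a clean $\Z/k$ cycle then has a periodic point of period $\lcm(m,k)$, and for suitable $k$ (e.g.\ $k$ a multiple of $m$) this gives $\mmjoblock{f}{\mu}^k(x)=x$ on a negative instance. The paper avoids this entirely by a different mechanism: it enlarges the counter $B$ to range over $\entiers{k\ell}$ (with $\ell'=\lceil\log_2(k\ell)\rceil$ bits), so that the counter alone forces every periodic point to have period a multiple of $k$, while $R$ is still flipped exactly \emph{once} per $k\ell$ substeps (at $x_B=\ell-1$). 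Then the same parity argument as above shows that a period-$k$ point exists iff the instance is positive, uniformly for fixed $k$ and for $k$ given in binary.
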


\begin{proof}
  These problems still belong to $\PSPACE$, because they amount to enumerating configurations
  and computing images by $\mmjoblock{f}{\mu}$,
  which can be performed from \textbf{BP-Step}
  (Theorem~\ref{thm:image}).

  \medskip

  We start with the hardness proof for the fixed point existence problem,
  and we will then adapt it to limit cycle existence problems.
  Given an instance $C:\B^n\to\B^n$, $\tilde{x}\in\B^n$, $i\in\entiers{n}$ of \textbf{Iter-CVP},
  we construct the same block-parallel update schedule $\mu'$ as in the proof of Theorem~\ref{thm:image},
  and modify the local functions of automata $B$ and $R$ as follows:
  \begin{itemize}[nosep]
    \item automata $B$ increment a counter modulo $\ell$ at each substep,
      and go to $0$ when the counter is greater than (or equal to) $\ell-1$; 
      and
    \item automaton $R$ records whether a state $\1$ appears at the $i$-th bit of $x_D$,
      and flips when the counter is equal to $\ell-1$, i.e.,\\
        \hspace*{5.5cm}$f_R(x)=
        \begin{cases}
          x_R\vee x_{q_{k_n}+\ell'+i}&\text{if }x_B<\ell-1\text{,}\\
          \neg x_R&\text{otherwise.}
        \end{cases}$\\[-.2em]
  \end{itemize}
  Recall that automata $D$ iterate the circuit when $x_B<\ell-1$ and go to $\tilde{x}$ otherwise,
  and that the number $\ell$ of substeps is larger than $2^n$.

  If the \textbf{Iter-CVP} instance is positive, then configuration
  $x=\0^{q_{k_n}}\0^{\ell'}\tilde{x}\0$ is a fixed point of $\mmjoblock{f}{\mu'}$.
  Indeed, during the $\ell$-th and last substep,
  the primes $P$ are still in state $\0^{q_{k_n}}$, 
  the counter $B$ goes back to $0$ (state $\0^{\ell'}$),
  the circuit $D$ goes back to $\tilde{x}$,
  and automaton $R$ has recorded the $\1$ which is flipped into state $\0$.

  Conversely, if there is a fixed point configuration $x$,
  then the counter must be at most $\ell-1$ because of the modulo $\ell$ increment.
  Furthermore, automata $D$ will encounter one substep during which it goes to $\tilde{x}$,
  hence the resulting configuration on $D$ will be in the orbit of $\tilde{x}$,
  i.e., $x_D$ is in the orbit of $\tilde{x}$.
  Finally, automaton $R$ will also encounter exactly one substep during which it
  is flipped (when $x_B\geq \ell-1$).
  As a consequence, in order to go back to its initial value $x_R$,
  the state of $R$ must be flipped during another substep,
  which can only happen when it is in state $\0$ and automaton $q_{k_n}+\ell'+i$
  is in state $\1$.
  We conclude that the $i$-th bit of a configuration in the orbit of $\tilde{x}$
  is in state $\1$ during some iteration of the circuit $C$,
  meaning that the \textbf{Iter-CVP} instance is positive.
  Remark that in this case, configuration $\0^{q_{k_n}}\0^{\ell}\tilde{x}\0$ is one of the fixed points.

  \medskip

  For the limit cycle existence problems,
  we modify the construction to let the counter go up to $k\ell-1$.
  Precisely:
  \begin{itemize}[nosep]
    \item $\ell'=\lceil\log_2(k\ell)\rceil$ automata $B$ implement a binary counter
      which is incremented at each substep, and goes to $0$ when $x_B\geq k\ell-1$;
    \item $n$ automata $D$ iterate the circuit $C$ if $x_B<\ell-1$,
      else go to state $\tilde{x}$ (no change); and 
    \item $1$ automaton $R$ records whether a state $\1$ appears in the $i$-th bit of $x_D$,
      and flips when the counter is equal to $\ell-1$.
  \end{itemize}
  The reasoning is identical to the case $k=1$, except that the counter needs $k$ times $\ell$ substeps,
  i.e., $k$ steps, in order to go back to its initial value.
  As a consequence, there is no $x$ and $k'<k$ such that $\mmjoblock{f}{\mu}^{k'}(x)=x$,
  and the dynamics has no limit cycle of length smaller than $k$.
  Remark that when the \textbf{Iter-CVP} instance is positive,
  configurations $(\0^{q_{k_n}}B_i\tilde{x}\0)_{i\in\entiers{k}}$
  with $B_i$ the $\ell'$-bits encoding of $i\ell$
  form one of the limit cycles of length $k$.
  Also remark that the encoding of $k$ in binary within the input has no consequence,
  neither on the $\PSPACE$ algorithm, nor on the polynomial time many-one reduction.
\end{proof}

Remark that our construction also applies to the notion of limit cycle $x^0,\dots,x^{p-1}$
where it is furthermore required that all configurations are different
(this corresponds to having the minimum length $p$): the problem is still $\PSPACE$-complete.

\subsection{Reachability and general complexity bounds}
\label{ss:reach}

In this part, we settle the computational complexity of the classical reachability problem,
which is unsurprisingly still $\PSPACE$-hard by reduction from another model of computation
(see \ref{a:proofs} for details).
In light of what precedes, one may be inclined to think that any problem
related to the dynamics of automata networks under block-parallel update schedules is $\PSPACE$-hard.
We prove that this is partly true with a general complexity bound theorem
on subdynamics existing within $\mmjoblock{f}{\mu}$, based on our previous results on fixed points and limit cycles.
However, we will also prove that a Rice-like complexity lower bound analogous
to the main results of~\cite{C-Gamard2021}, i.e., which would state that
any non-trivial question on the dynamics
(on the functional graph of $\mmjoblock{f}{\mu}$)
expressible in first order logics
is $\PSPACE$-hard, does not hold
(unless a collapse of $\PSPACE$ to the first level of the polynomial hierarchy).
Indeed, we will see that deciding the bijectivity ($\forall x,y\in\B^n:\mmjoblock{f}{\mu}(x)=\mmjoblock{f}{\mu}(y)\implies x=y$) is complete for $\coNP$.
We conclude the section with a discussion on reversible dynamics.

From the fixed point and limit cycle theorems in Section~\ref{ss:image},
we now derive that any particular subdynamics is hard to identify within $\mmjoblock{f}{\mu}$
under block-parallel update schedule.
A functional graph is a directed graph of out-degree exactly one,
and we assimilate $\mmjoblock{f}{\mu}$ to its functional graph.
We define a family of problems, one for each functional graph $G$
to find as a subgraph of $\mmjoblock{f}{\mu}$,
and prove that the problem is always $\PSPACE$-hard.
Since $\PSPACE=\coPSPACE$, checking the existence of a subdynamics
is as hard as checking the absence of a subdynamics,
even though the former is a local property whereas the latter is a global property
at the dynamics scale. This is understandable in regard of the fact that
$\PSPACE$ scales everything to the global level
(one can search the whole dynamics in $\PSPACE$),
because verifying that a given set of configurations (a certificate)
gives the subgraph $G$ is difficult (Theorem~\ref{thm:image}).\\[.5em]
\decisionpb{Block-parallel $G$ as subdynamics}{BP-Subdynamics-$G$}
{$(f_i:\B^n\to\B)_{i\in\entiers{n}}$ as circuits, $\mu\in\BPn$.}
{does $G\sqsubset\mmjoblock{f}{\mu}$?}\\[.5em]
Remark that asking whether $G$ appears as a subgraph or as an induced subgraph
makes no difference when $G$ is functional (has out-degree exactly one),
because $\mmjoblock{f}{\mu}$ is also functional:
it is necessarily induced since there is no arc to delete.

\begin{theorem}\label{thm:subdynamics}
  {\normalfont \textbf{BP-Subdynamics-$G$}} is $\PSPACE$-complete for any functional graph $G$.
\end{theorem}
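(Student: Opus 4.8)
The plan has two parts: a routine $\PSPACE$ upper bound, and a polynomial-time many-one reduction from \textbf{Iter-CVP} for hardness, refining the gadgets of Theorems~\ref{thm:image} and~\ref{thm:fixedpoint} with one extra ingredient, a \emph{period-blocking} counter on a constant modulus. For membership, since $G$ is fixed, $N:=|V(G)|$ is a constant; a polynomial-space procedure enumerates all candidate maps $\pi$ from $V(G)$ into $\B^m$ ($m$ the number of automata), keeps the injective ones, and for each of the constantly many arcs $(u,v)\in A(G)$ checks $\mmjoblock{f}{\mu}(\pi(u))=\pi(v)$ with the $\PSPACE$ algorithm underlying \textbf{BP-Step} (Theorem~\ref{thm:image}); as noted before the statement, subgraph equals induced subgraph here, so it suffices that $\pi$ sends every arc of $G$ to an arc of $\mmjoblock{f}{\mu}$. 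If $G$ has no vertex the problem is trivially positive, so assume $G$ is nonempty; then it contains a cycle, of some length $p^\star\geq 1$, which I fix.

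For hardness, I would start from an instance $C:\B^n\to\B^n$, $\tilde{x}\in\B^n$, $i\in\entiers{n}$ of \textbf{Iter-CVP} and, via Lemma~\ref{lem:gn}, build $g_n,\mu_n$ on automata $P$ with constant $\0$ local functions, so one step consists of $\ell=|\varphi(\mu_n)|>2^n$ substeps. To $P$ I add, essentially as in the proof of Theorem~\ref{thm:image} but with the resets placed at $B=0$: a clock $B$ incrementing modulo $\ell$ each substep; a register $D$ reset to $\tilde{x}$ at the (unique per step) substep where $B=0$ and otherwise iterating $C$ on its own content; a recorder $R$ reset to $\0$ when $B=0$ and otherwise accumulating $R\vee(D)_i$. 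Then I add two groups whose updates matter only at the substep where $B=\ell-1$: a counter $Q$ on $\lceil\log_2 q\rceil$ bits, for a fixed integer $q>N$, which increments modulo $q$ when $R=\0$ and stays put otherwise; and $\lceil\log_2(N+1)\rceil$ \emph{shape} automata holding an element of $V(G)$, which apply the hard-coded (finite, since $G$ is fixed) transition function $g$ of $G$ when $R=\1$ and stay put otherwise. All new automata go into singleton o-blocks, $\mu'=\mu_n\cup\bigcup_{j\notin P}\{(j)\}$, so steps still last $\ell$ substeps, and $(f',\mu')$ is computed in time polynomial in the unary-encoded $n$ by Lemma~\ref{lem:gn}. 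With a constant-factor margin of $\ell$ over $2^n$ (e.g.\ invoking Lemma~\ref{lem:gn} with $n+1$, i.e.\ padding $C$ with an inert bit), the $C$-iterations performed between a reset and the substep $B=\ell-1$ scan the whole orbit of $\tilde{x}$, so on every step $R$ holds the true \textbf{Iter-CVP} answer $r\in\{\0,\1\}$ at the substep $B=\ell-1$ whenever the clock starts at $0$; and, crucially, when $r=\0$ the recorder is $\0$ at the end of every step (the reset wipes it, and the negative instance re-deposits nothing).

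It then remains to check the two directions. If $r=\1$, the configurations $x_v$ with $P$ all $\0$, $B=0$, $D=C^{\ell-1}(\tilde{x})$, $R=\1$, $Q$ fixed and the shape automata encoding $v\in V(G)$ satisfy $\mmjoblock{f'}{\mu'}(x_v)=x_{g(v)}$: $P$, $B$, $Q$ are unchanged, $D$ is reset and re-iterates back to $C^{\ell-1}(\tilde{x})$, $R$ is reset to $\0$ then restored to $\1$, and the shape automata apply $g$ (they see $R=\1$ at $B=\ell-1$); hence $v\mapsto x_v$ is an injective morphism and $G\sqsubset\mmjoblock{f'}{\mu'}$. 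If $r=\0$, every configuration in the image of $\mmjoblock{f'}{\mu'}$ has $R=\0$ and its register in the orbit of $\tilde{x}$, so from it $R$ stays $\0$ at every later substep $B=\ell-1$ and $Q$ strictly advances modulo $q$ at every subsequent step; since any periodic configuration lies in that image, its period is a multiple of $q$, hence exceeds $N\geq p^\star$. But an injective morphism of $G$ into $\mmjoblock{f'}{\mu'}$ would map a $p^\star$-cycle of $G$ to an orbit of period exactly $p^\star$, so $G\not\sqsubset\mmjoblock{f'}{\mu'}$, and the reduction is correct. (For $p^\star=1$ this specialises to the $\coPSPACE$-style argument already used for \textbf{BP-Fixed-Point-Verif} and \textbf{BP-Fixed-Point}.)

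The main obstacle is the negative case: one must forbid \emph{every} copy of $G$ anywhere in the exponential-size functional graph $\mmjoblock{f'}{\mu'}$, while only programming a small local gadget and having no control over the initial configuration. This is handled uniformly — independently of the possibly intricate cycle-and-tree structure of $G$ — by the constant-modulus counter $Q$, which annihilates all periods $\leq N$ at once, the key point being that a dishonest value of $R$ cannot survive a single step. Everything else (which substep performs which action relative to the clock $B$, the exact margin on $\ell$ ensuring the orbit of $\tilde{x}$ is fully scanned before $B$ reaches $\ell-1$, and the polynomial-time/polynomial-space bookkeeping) is routine on top of the machinery already set up for Theorems~\ref{thm:image} and~\ref{thm:fixedpoint}.
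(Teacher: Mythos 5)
Your proof is correct, but it takes a genuinely different route from the paper's. For membership you enumerate injective maps $V(G)\to\B^m$ and verify each of the constantly many arcs via the \textbf{BP-Step} routine, where the paper enumerates vertex subsets of size $|V(G)|$ and checks isomorphism of the induced restriction; both work since $G$ is fixed. For hardness, the paper does not go back to \textbf{Iter-CVP} directly: it singles out a decisive cycle of $G$ (a limit cycle of length $k\geq 2$, or a fixed point with a nontrivial hanging tree), reuses the reduction of Theorem~\ref{thm:fixedpoint} for \textbf{BP-Limit-Cycle-$k$} as a black box, adds one automaton, and hard-codes $G$ minus that cycle in the subspace $x_n=\1$, wiring its dangling vertices onto the explicitly known configurations of the potential limit cycle, with all remaining $x_n=\1$ configurations made isolated fixed points; a separate case treats $G$ consisting only of isolated fixed points. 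Your construction instead realizes all of $G$ at once through constant-size shape automata driven by the recorded answer bit, and settles the negative direction globally with the period-blocking counter $Q$ modulo $q>|V(G)|$, which forces every limit cycle to have length a multiple of $q$ and hence excludes any cycle of $G$. This buys uniformity (no case distinction, no choice of a decisive cycle, no need for explicitly known cycle configurations), at the price of redoing the substep bookkeeping rather than reusing Theorem~\ref{thm:fixedpoint} modularly.

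One point deserves care: the clock values encoding integers $\geq\ell$ on the $\lceil\log_2\ell\rceil$ bits of $B$. Your literal rule (increment modulo $\ell$ at every substep) makes such values transient, and the argument then survives after a small repair: a step started from such a value may miss the $B=0$ reset, so not \emph{every} image configuration has $R=\0$ as you claim; but every periodic configuration has a proper clock value and is the image of a proper-clock periodic predecessor, hence is clean, so $Q$ still ticks exactly once per step along any limit cycle and all cycle lengths are multiples of $q$. Beware, however, that importing the junk-value convention of Theorem~\ref{thm:image} verbatim (freezing the counter when $x_B\geq\ell$) would break your negative direction, since frozen-clock configurations would then yield fixed points and short cycles regardless of the \textbf{Iter-CVP} answer; unlike in Theorem~\ref{thm:image}, here the behaviour on all configurations matters.
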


\begin{proof}
  A polynomial space algorithm for \textbf{BP-$G$-Subdynamics} consists in
  enumerating all subsets $S\subseteq\B^n$ of size $|S|=|V(G)|$,
  and test for each whether the restriction of $\mmjoblock{f}{\mu}$ to $S$ is isomorphic to $G$
  (functional graphs are planar hence isomorphism can be decided in logarithmic space~\cite{C-Datta2010}).

  \medskip

  For the $\PSPACE$-hardness, the idea is to choose a fixed point or limit cycle in $G$,
  and make it the decisive element whose existence or not lets $G$ be a subgraph
  of the dynamics or not.
  Since $G$ is a functional graph, it is composed of fixed points and limit cycles,
  with hanging trees rooted into them (the trees are pointing towards their root).
  Let $G(v)$ denote the unique out-neighbor of $v\in V(G)$.

  Let us first assume that $G$ has a limit cycle of length $k\geq 2$,
  or a fixed point with a tree of height greater or equal to $1$ hanging
  (the case where $G$ has only isolated limit cycles is treated thereafter).
  A fixed point is assimilated to a limit cycle of length $k=1$.
  Let $G'$ be the graph $G$ without this limit cycle of size $k$,
  and let $U$ be the vertices of $G'$ without out-neighbor (if $k=1$ then $U\neq\emptyset$).
  We reduce from \textbf{Iter-CVP}, and first compute the $f,\mu$ of size $n$ obtained
  by the reduction from Theorem~\ref{thm:fixedpoint}
  for the problem \textbf{BP-Limit-Cycle-$k$}.
  We have that
  $\mmjoblock{f}{\mu}$ has a limit cycle of length $k$
  on configurations $(\0^{q_{k_n}}B_i\tilde{x}\0)_{i\in\entiers{k}}$
  (or configuration $\0^{q_{k_n}}\0^\ell\tilde{x}\0$ for $k=1$)
  if and only if the \textbf{Iter-CVP} instance is positive.

  We construct $g$ on $n+1$ automata,
  and the update schedule $\mu'$ being the union of $\mu$ with a singleton o-block for the new automaton.
  We assume that $n\geq |V(G)|-k$,
  otherwise we pad $f,\mu$ to that size (with identity local functions for the new automata).
  The idea is that $g$ will consist in a copy of $f$ on the subspace $x_n=\0$,
  and a copy of $G'$ on the subspace $x_n=\1$ where the images of the configurations
  corresponding to the vertices of $U$ will be configurations of the potential limit cycle
  of $\mmjoblock{f}{\mu}$ (in the other subspace $x_n=\0$).
  Other configurations in the subspace $x_n=\1$ will be fixed points.
  Figure~\ref{fig:subdynamics} illustrates the construction.
  Recall that $G$ is fixed,
  and consider a mapping $\alpha:V(G)\to\bool^n$
  such that vertices of the limit cycle of length $k$ are sent
  to the configurations $(\0^{q_{k_n}}B_i\tilde{x}\0)_{i\in\entiers{k}}$ respectively
  (or $\0^{q_{k_n}}\0^\ell\tilde{x}\0$ for $k=1$).
  We define:\\
    \hspace*{3cm}
    $g(x)=\begin{cases}
      f(x_{\entiers{n}})\0 & \text{if } x_n=0\text{,}\\
      \alpha(G(v))\0 & \text{if } x_n=1 \text{ and } \exists v\in U:\alpha(v)=x_{\entiers{n}}\text{,}\\
      \alpha(G(v))\1 & \text{if } x_n=1 \text{ and } \exists v\in G'\setminus U:\alpha(v)=x_{\entiers{n}}\text{,}\\
      x & \text{otherwise.}
    \end{cases}$\\[.4em]
  \begin{figure}
    \centering
    \includegraphics{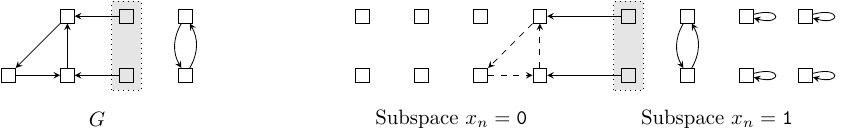}
    \caption{
      Construction of $g$ in the proof of Theorem~\ref{thm:subdynamics}.
      Subspace $x_n=\0$ contains a copy of $f$
      with a potential limit cycle dashed.
      Subspace $x_n=\1$ implements $G'$, and wires configurations of $U$ (grey area)
      to the potential limit cycle in the copy of $f$
      (remaining configurations are fixed points).
    }
    \label{fig:subdynamics}
  \end{figure}
  The obtained dynamics $\mmjoblock{g}{\mu'}$ has one copy of $\mmjoblock{f}{\mu}$
  (in subspace $x_n=\0$), with a copy of $G'$ (in subspace $x_n=\1$)
  which becomes a copy of $G$ if configurations
  $(\0^{q_{k_n}}B_i\tilde{x}\0)_{i\in\entiers{k}}$ (or $\0^{q_{k_n}}\0^\ell\tilde{x}\0$ in the case $k=1$)
  form a limit cycle of length $k$.
  Moreover, it becomes a copy of $G$ only if so by our assumption on the limit cycle or fixed point of $G$,
  because the remaining configurations in subspace $x_n=\1$ are all isolated fixed points.
  This concludes the reduction.

  For the case where $G$ is made of $k$ isolated fixed points,
  we reduce from \textbf{BP-Fixed-Point} and construct an automata network with
  $k$ copies of the dynamics of $f$,
  by adding $\lceil\log_2(k)\rceil$ automata with identity local functions.
\end{proof}

When the property of being a functional graph is dropped,
that is when the out-degree of $G$ is at most one
(otherwise any instance is trivially negative),
problem \textbf{BP-Subdynamics-$G$} is subtler.
Indeed, one can still ask for the existence of fixed points, limit cycles
and any functional subdynamics $\PSPACE$-complete by Theorem~\ref{thm:subdynamics},
but new problems arise, some of which are provably complete only for $\coNP$.
The symmetry of existence versus non existence is broken.
In what follows, we settle that deciding the bijectivity of $\mmjoblock{f}{\mu}$ is $\coNP$-complete,
and then discuss the complexity of decision problems which are subsets of bijective networks,
such as the problem of deciding whether $\mmjoblock{f}{\mu}$ is the identity.
We conclude the section by proving that it is nevertheless $\PSPACE$-complete to decide whether
$\mmjoblock{f}{\mu}$ is a constant map.
These results hint at the subtleties behind a full characterization of the computational complexity
of \textbf{BP-Subdynamics-$G$} for all graphs of out-degree at most one.\\[.5em]
\decisionpb{Block-parallel bijectivity}{BP-Bijectivity}
{$(f_i:\B^n\to\B)_{i\in\entiers{n}}$ as circuits, $\mu\in\BPn$.}
{is $\mmjoblock{f}{\mu}$ bijective?}\\[.5em]
Remark that, because the space of configurations is finite,
injectivity, surjectivity and bijectivity are equivalent properties of $\mmjoblock{f}{\mu}$.

\begin{lemma}\label{lem:bij}
  Let $f:\B^n\to\B^n$ a BAN and $\mu\in\BPn$ a block-parallel update mode.
  Then $\mmjoblock{f}{\mu}$ is bijective if and only if
  $\mmjblock{f}{W}$ is bijective for every block $W$ of $\varphi(\mu)$.
\end{lemma}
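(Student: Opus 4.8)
The plan is to view $\mmjoblock{f}{\mu}$ as the composition of its substep maps along $\varphi(\mu)$, and to exploit the fact that an injective self-map of a finite set is automatically bijective, so that bijectivity of a composition propagates back to each individual factor.

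First I would fix notation. Write $\varphi(\mu)=(W_0,\dots,W_{\ell-1})$ with $\ell=|\varphi(\mu)|$, and set $F_i=\mmjblock{f}{W_i}:\B^n\to\B^n$ for $i\in\entiers{\ell}$, so that by definition $\mmjoblock{f}{\mu}=F_{\ell-1}\circ\dots\circ F_1\circ F_0$. Since each $F_i$ depends only on the underlying set $W_i$, the statement ``$\mmjblock{f}{W}$ is bijective for every block $W$ of $\varphi(\mu)$'' is equivalent to ``$F_i$ is bijective for every $i\in\entiers{\ell}$'', and this is the formulation I would work with.

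The direction $(\Longleftarrow)$ is immediate: if every $F_i$ is a bijection of $\B^n$, then their composition $\mmjoblock{f}{\mu}$ is a bijection. For $(\Longrightarrow)$, I would proceed by induction on the number of factors, using two elementary facts: (i) if $g\circ h$ is injective then $h$ is injective, and (ii) an injective self-map of the finite set $\B^n$ is bijective. Concretely, write $\mmjoblock{f}{\mu}=F_{\ell-1}\circ h$ with $h=F_{\ell-2}\circ\dots\circ F_0$. If $\mmjoblock{f}{\mu}$ is bijective, then by (i) $h$ is injective, hence by (ii) $h$ is bijective, and therefore $F_{\ell-1}=\mmjoblock{f}{\mu}\circ h^{-1}$ is a composition of bijections, hence bijective. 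Now $h$ is a bijective composition of the $\ell-1$ substep maps $F_0,\dots,F_{\ell-2}$, so the induction hypothesis yields that each of these is bijective as well; the base case $\ell=1$ (where $\mmjoblock{f}{\mu}=F_0$) is trivial. Thus every $F_i$, i.e.\ every $\mmjblock{f}{W}$ with $W$ a block of $\varphi(\mu)$, is bijective.

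The only real insight here — and the thing that makes the a priori delicate issue of the ``middle'' factors of the composition disappear entirely — is that on the finite configuration space $\B^n$, injectivity, surjectivity and bijectivity coincide; once this is noticed there is no genuine obstacle, and the argument is a short induction.
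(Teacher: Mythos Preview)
Your proof is correct and essentially matches the paper's: both hinge on the finite-set fact that injectivity implies bijectivity, used to peel the composition apart factor by factor. The paper organizes the non-trivial direction as a contrapositive (take the \emph{first} non-bijective block $W_\ell$, note that the prefix composition is bijective by minimality, and exhibit an explicit collision), whereas you run a direct induction peeling off the \emph{last} factor; the content is the same.
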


\begin{proof}
  The right to left implication is obvious since $\mmjoblock{f}{\mu}$
  is a composition of bijections $\mmjblock{f}{W}$.
  We prove the contrapositive  of the left to right implication,
  assuming the existence of a block $W$ in $\varphi(\mu)$
  such that $\mmjblock{f}{W}$ is not bijective.
  Let $W_\ell$ be the first such block in the sequence $\varphi(\mu)$,
  so there exist $x,y\in\B^n$ such that $x\neq y$ but $\mmjblock{f}{W_\ell}(x)=\mmjblock{f}{W_\ell}(y)=z$.
  By minimality of $\ell$, the composition $g=\mmjblock{f}{W_{\ell-1}}\circ\dots\circ\mmjblock{f}{W_0}$
  is bijective, hence there also exist $x',y'\in\B^n$ with $x'\neq y'$ such that $g(x')=x$ and $g(y')=y$.
  That is, after the $\ell$-th substep the two configurations $x'$ and $y'$ have the same image $z$,
  and we conclude that $\mmjoblock{f}{\mu}(x')=\mmjoblock{f}{\mu}(y')=
  \mmjblock{f}{W_{p-1}}\circ\dots\circ\mmjblock{f}{W_{\ell+1}}(z)$
  therefore $\mmjoblock{f}{\mu}$ is not bijective.
\end{proof}

Lemma~\ref{lem:bij} shows that bijectivity can be decided at the local level of circuits
(not iterated), which can be checked in $\coNP$ and gives Theorem~\ref{thm:bij}.

\begin{theorem}
  \label{thm:bij}
  {\normalfont \textbf{BP-Bijectivity}} is $\coNP$-complete.
\end{theorem}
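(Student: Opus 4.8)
The plan is to establish membership in $\coNP$ and $\coNP$-hardness separately, using Lemma~\ref{lem:bij} as the crucial structural handle for the upper bound. For membership, observe that by Lemma~\ref{lem:bij}, $\mmjoblock{f}{\mu}$ is bijective if and only if $\mmjblock{f}{W}$ is bijective for every block $W$ of $\varphi(\mu)$. The immediate difficulty is that $\varphi(\mu)$ may have exponentially many blocks (its length is the lcm of the o-block sizes, which can be $>2^n$), so we cannot afford to iterate over all blocks of $\varphi(\mu)$. The key observation to circumvent this is that the \emph{set} of distinct blocks appearing in $\varphi(\mu)$ is small: each block $W_i=\{i^k_{i\bmod n_k}\mid k\in\int{s}\}$ is determined by the residues $(i \bmod n_k)_{k}$, and in fact the distinct blocks are exactly indexed by $i\in\entiers{\lcm}$, but more usefully, a $\coNP$ verifier does not need to enumerate them. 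A $\mathsf{coNP}$ algorithm for non-bijectivity guesses a block index $i$ (encoded in binary, polynomially many bits since $i<\lcm<2^{2n^2}$), computes the block $W_i$ in polynomial time from $\mu$ and $i$ by taking each o-block's element at position $i\bmod n_k$, then guesses $x\neq y\in\B^n$ and checks $\mmjblock{f}{W_i}(x)=\mmjblock{f}{W_i}(y)$ — this last check is a single evaluation of the local circuits, hence polynomial time. Thus non-bijectivity is in $\NP$, so \textbf{BP-Bijectivity} is in $\coNP$. I expect this "the block index can be guessed even though there are exponentially many blocks" point to be the main subtlety of the upper bound.

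For $\coNP$-hardness, the plan is to reduce from the complement of an $\NP$-complete problem, the natural choice being \textsc{Tautology} or equivalently reducing from \textsc{Sat} to the complement. Given a Boolean formula $\phi$ on $n$ variables, we want to build $f$ and $\mu$ such that $\mmjoblock{f}{\mu}$ is bijective if and only if $\phi$ is unsatisfiable. The simplest construction: take the parallel update mode $\mu=\mu_{\texttt{par}}$ (so $\varphi(\mu)$ is a single block $\entiers{n}$ and $\mmjoblock{f}{\mu}=f$ the parallel update), and design a BAN $f$ on roughly $n+1$ automata whose parallel dynamics is injective exactly when $\phi$ has no satisfying assignment. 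One clean way: let the first $n$ automata hold a candidate assignment and keep it fixed ($f_i(x)=x_i$), and let one extra automaton $R$ satisfy $f_R(x)=x_R\oplus\phi(x_{\entiers{n}})$ (or $f_R(x)=x_R \vee \phi(x_{\entiers n})$ — but XOR is cleaner for bijectivity). With XOR: $f$ is a bijection on $\B^{n+1}$ always (it is an involution-like map), so instead use $f_R(x)=\phi(x_{\entiers n})$, which makes $f$ non-injective precisely when there exist two assignments agreeing on the variables but differing on $R$... that also fails since $R$ is determined. The correct gadget is $f_R(x) = x_R \wedge \neg\phi(x_{\entiers n})$: then two configurations differing only in $R$ collide iff $\phi(x_{\entiers n})=\1$, so $f$ is bijective iff $\phi$ is unsatisfiable. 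The hard part here is just getting the gadget's logic exactly right; the reduction is clearly polynomial since $\phi$ translates directly into a circuit for $f_R$.

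Putting these together gives both directions. I would write the membership argument first (invoking Lemma~\ref{lem:bij} and the binary-encoded block-index guess), then the hardness reduction from \textsc{Unsat} with the one-extra-automaton gadget under the parallel update mode, checking the two-line equivalence "$f$ injective $\iff$ $\phi$ unsatisfiable". The only place requiring genuine care is the upper bound: one must explicitly note that although $\varphi(\mu)$ has up to $2^{2n^2}$ blocks, each is computable in polynomial time from its index, so a single nondeterministic guess of (block index, colliding pair) certifies non-bijectivity — this is what keeps the problem in $\coNP$ rather than merely in $\PSPACE$.
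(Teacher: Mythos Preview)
Your proposal is correct and matches the paper's approach: the $\coNP$ upper bound via Lemma~\ref{lem:bij} with a nondeterministically guessed block index (in binary) plus a colliding pair is exactly what the paper does, and for hardness the paper simply cites the known $\coNP$-hardness of bijectivity under the parallel schedule, which your explicit gadget $f_R(x)=x_R\wedge\neg\phi(x_{\entiers{n}})$ re-derives cleanly. One small slip: your bound $\lcm<2^{2n^2}$ is borrowed from Lemma~\ref{lem:primes}, which concerns a specific prime-based construction rather than arbitrary $\mu$; for general $\mu\in\BPn$ the o-block sizes sum to $n$, so $|\varphi(\mu)|$ is bounded by Landau's function (or more crudely by $\lcm(1,\dots,n)<4^n$), which still yields a polynomial-size binary index and leaves your argument intact.
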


\begin{proof}
  A $\coNP$ algorithm can be established from Lemma~\ref{lem:bij},
  because it is equivalent to check the bijectivity at all substeps.
  A non-deterministic algorithm can guess a temporality
  $t\in\entiers{|\varphi(\mu)|}$ (in binary) within the substeps,
  two configurations $x,y$,
  and then check in polynomial time that they certify the non-bijectivity of that substep as follows.
  First, construct $W$ the $t$-th block of $\varphi(\mu)$,
  by computing $t$ modulo each o-block size to get the automata from that o-block.
  Second, check that $\mmjblock{f}{W}(x)=\mmjblock{f}{W}(y)$.

  The $\coNP$-hardness is a direct consequence of that complexity lower bound
  for the particular case of the parallel update schedule~\cite[Theorem~5.17]{HDR-Perrot2022}.
\end{proof}

We now turn our attention to the recognition of identity dynamics.\\[.5em]
\decisionpb{Block-parallel identity}{BP-Identity}
{$(f_i:\B^n\to\B)_{i\in\entiers{n}}$ as circuits, $\mu\in\BPn$.}
{does $\mmjoblock{f}{\mu}(x)=x$ for all $x\in\B^n$?}\\[.5em]
This problem is in $\PSPACE$, and is $\coNP$-hard by reduction
from the same problem in the parallel case~\cite[Theorem~5.18]{HDR-Perrot2022}.
However, it is neither obvious to design a $\coNP$-algorithm to solve it,
nor to prove $\PSPACE$-hardness by reduction from \textbf{Iter-CVP}.

\begin{open}\label{open:id}
  {\normalfont \textbf{BP-Identity}} is $\coNP$-hard and in $\PSPACE$.
  For which complexity class is it complete?
\end{open}

A major obstacle to the design of an algorithm, or of a reduction
from \textbf{Iter-CVP} to \textbf{BP-Identity}, lies in the fact
that, by Theorem~\ref{thm:bij}, ``hard'' instances of the latter are bijective networks
(because non-bijective instances can be recognized in our immediate lower bound $\coNP$,
and they are all negative instances of \textbf{BP-Identity}).
A reduction would therefore be related to the lengths of cycles in the dynamics of substeps,
and whether they divide the least common multiple of o-block sizes
(for $x\in\B^n$ such that $f(x)=x$) or not ($f(x)\neq x$).

Nonetheless, we are able to prove another lower bound, 
related to the hardness of computing the number of models of a given propositional formula.
The canonical $\ModPoly$-complete problem takes as input a formula $\psi$
and two integers $k$, $i$ encoded in unary, and consists in deciding whether
the number of models of $\psi$ is congruent to $k$ modulo the $i$-th prime number
(which can be computed in polytime).
It generalizes classes $\ModkPoly{k}$ (such as the parity case $\ModkPoly{2}=\PPoly$),
and it is notable that $\SPoly$ polytime truth-table reduces to $\ModPoly$~\cite{J-Kobler1996}.


\begin{theorem}\label{thm:idmodp}
  {\normalfont \textbf{BP-Identity}} is $\ModPoly$-hard (for polytime many-one reduction).
\end{theorem}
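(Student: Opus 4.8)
The plan is to reduce from the canonical $\ModPoly$-complete problem: given a propositional formula $\psi$ on $n$ variables and unary integers $k,i$, decide whether $\#\psi\equiv k\pmod{p_i}$, where $p_i$ is the $i$-th prime. The key observation is that the construction of Example~\ref{ex:counter} and the $g_n,\mu_n$ of Lemma~\ref{lem:gn} give us, essentially for free, a block-parallel schedule whose number of substeps $\ell=|\varphi(\mu_n)|$ is a product of distinct primes. This is exactly the setting where cycle lengths (divisors of $\ell$) can detect arithmetic congruences. I would engineer an automata network $f$ whose dynamics of substeps, on a suitable subspace, behaves as identity \emph{if and only if} the congruence $\#\psi\equiv k\pmod{p_i}$ holds, while being identity everywhere else by design.

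Concretely, first I would set up $g_n,\mu_n$ with $\ell=\prod_j p_j>2^n$ as in Lemma~\ref{lem:gn}, ensuring $p_i$ divides $\ell$ (it does, since $p_i<n^2$ for $i$ in the relevant range — one can pad $n$ if needed so that the $i$-th prime is among the $p_j$'s). Then I would add: (i) a binary counter $B$ on $\lceil\log_2\ell\rceil$ bits incrementing modulo $\ell$ each substep; (ii) $n$ "input" automata $D$ holding an assignment to the variables of $\psi$, frozen throughout the step; and (iii) a small register $R$, of size $\lceil\log_2 p_i\rceil$, that acts as a counter modulo $p_i$. The idea is: while the big counter $B$ sweeps from $0$ to $\ell-1$ (i.e.\ over all $\ell$ substeps), at certain designated substeps — those whose index, read through the circular $\varphi$ structure, encodes a satisfying assignment when combined with $x_D$ — we increment $R$ by one modulo $p_i$; at the final substep we additionally add $k$ (or subtract $k$) to $R$. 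Because $B$ traverses all $\ell>2^n$ values, every assignment extending $x_D$ is visited, so after one full step $R$ has been incremented exactly (number of models of $\psi$ agreeing with $x_D$ in the shared bits) times plus $k$; if $x_D$ is a don't-care padding, this is just $\#\psi+k$ modulo $p_i$. Then $R$ returns to its initial value after one step if and only if $\#\psi\equiv -k\pmod{p_i}$ (adjust signs so this reads $\equiv k$). Everything else ($P$, $B$, $D$) returns to its start by the usual counter/freeze argument as in Theorem~\ref{thm:fixedpoint}, so $\mmjoblock{f}{\mu'}$ is the identity precisely when the $\ModPoly$ instance is positive — except we must also handle configurations where $B$, $R$ are initialized to garbage values; these I would route to genuine fixed points (or design $R,B$ so that any value not equal to its canonical start is simply held fixed), exactly mirroring the fixed-point construction.

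The main obstacle is the middle bullet: realizing "increment $R$ modulo $p_i$ at the substeps corresponding to satisfying assignments" using only \emph{local} circuits that see the current configuration, given that a substep's identity must be read off the o-block structure of $\mu_n$. The trick is to let the big counter $B$ itself serve as the enumerator of assignments: at substep $t$, the low $n$ bits of $B$ (or $B\bmod 2^n$) encode a candidate assignment $a$, the circuit checks $\psi(a)=\1$ with a local Boolean-circuit evaluation (this is where we use that local functions are given as circuits), and if so increments $R$; since $\ell>2^n$, the residues of $B$ modulo $2^n$ cover every assignment at least once — here I would need $2^n\mid\ell$ or, more carefully, arrange the counting so each assignment is weighted an amount coprime-to-$p_i$-invariantly, e.g.\ by only counting when $\lfloor t/2^n\rfloor=0$, i.e.\ on the first $2^n$ substeps. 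Making this bookkeeping exactly right — so that the total increment to $R$ over one step equals $\#\psi+k \bmod p_i$ with no off-by-one or over-counting, and so that non-canonical initial values of the auxiliary registers all yield fixed points — is the delicate part; the rest is the by-now-standard padding and freeze-at-boundary machinery from Example~\ref{ex:counter} and Theorems~\ref{thm:image}–\ref{thm:fixedpoint}. Finally, $\mathsf{Iter}$-style polynomiality holds because $n$ is unary, so $\ell,q_{k_n}$ and all the added components are of polynomial size.
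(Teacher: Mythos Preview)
Your core idea is the paper's: use the $\ell>2^n$ substeps so that a counter $B$ (incrementing modulo $\ell$) sweeps through all $2^n$ assignments, increment a mod-$p_i$ register $R$ at each substep where the low $n$ bits of $B$ satisfy $\psi$ and offset once by $k$, and conclude that $\mmjoblock{f}{\mu'}$ is the identity iff $\#\psi\equiv k\pmod{p_i}$. Two points where the paper's version is tighter than your sketch: first, there are no $D$ automata at all---the assignment is read directly off $x_B$ when $0\le x_B<2^n$, so your frozen ``input'' block is redundant (and if it really contributed bits to the assignment, different $x_D$ would give different model counts, which would break the reduction); second, you should not ``route garbage to genuine fixed points'' in the sense of sending them elsewhere---for \textbf{BP-Identity} every configuration must map to itself, and the paper achieves this cheaply because $B$ cycling modulo $\ell$ means \emph{any} start $x_B<\ell$ visits each value exactly once and returns, so $R$ accumulates exactly $\#\psi-k\bmod p_i$ regardless of where $B$ or $R$ started, while configurations with $x_B\ge\ell$ are simply held by identity.
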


The proof of this theorem can be found in \ref{a:proofs}.

Our attemps to prove $\PSPACE$-hardness failed, for the following reasons.
To get bijective circuits one could reduce from reversible Turing machines (RTM) and problem
\textbf{Reversible Linear Space Acceptance}~\cite{J-Lange2000}.
A natural strategy would be to simulate a RTM for an exponential number of subteps,
and then simulate it backwards for that same number of substeps,
while ending in the exact same configuration (identity map) if and only if
the simulation did not halt or was not in the orbit of the given input $w$.
The difficulty with this approach is that the dynamics of substeps must not be the identity map
when a \textit{conjunction} of two temporally separated events happens:
first that the simulation has halted, and second that the starting configuration was $w$.
It therefore requires to remember at least one bit of information,
which is subtle in the reversible setting.
Indeed, the constructions of~\cite{J-Lange2000} and~\cite{B-Morita2017} 
consider only starting configurations of the Turing machine in the initial state and with blank tapes.
However, in the context of Boolean automata networks,
any configuration must be considered (hence any configuration of the simulated Turing machine).

Regarding iterated circuits simulating reversible cellular automata
(for which the whole configuration space is usualy considered),
the literature focuses on
decidability issues~\cite{C-Kari2005,J-Sutner2004},
but a recent contribution fits our setting
and we derive the following.
$\FPoPSPACE$ is the class of functions computable in polynomial time
with an oracle in $\PSPACE$.

\begin{theorem}[{\cite[Theorem~5.7]{A-Eppstein2023}}]\label{thm:rca}
  There is a one-dimensional reversible cellular automaton for which simulating
  any given number of iterations, with periodic boundary conditions, is complete for $\FPoPSPACE$
\end{theorem}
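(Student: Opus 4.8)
Since this statement is quoted verbatim from~\cite{A-Eppstein2023}, the plan is to recall why it holds rather than to reprove it from scratch. Membership in $\FPoPSPACE$ is the routine direction: given a circular configuration of length $n$ and an iteration count $t$ written in binary, one simulates the reversible cellular automaton step by step, storing only the current length-$n$ configuration together with a $\lceil\log_2 t\rceil$-bit down-counter; this uses polynomial space and returns a string no longer than the input, so the task lies in $\FPSPACEPoly=\FPoPSPACE$.

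For the hardness one must exhibit a \emph{single} reversible one-dimensional CA $\mathcal{R}$ such that every $F\in\FPoPSPACE$ reduces in polynomial time to the map $(c,t)\mapsto\mathcal{R}^t(c)$. The route I would take: fix a $\PSPACE$ machine $M$ with polynomially bounded output computing $F$; first turn $M$ into a reversible machine using the Lange--McKenzie--Tapp equality of reversible and deterministic space~\cite{J-Lange2000}, arranged (by padding the computation and running an explicit clock) so that on input $w$ it performs a predetermined number $T(|w|)=2^{\O(|w|^{c})}$ of steps and then exhibits $F(w)$ on a dedicated output track. Second, use the universality of reversible partitioned cellular automata~\cite{B-Morita2017} to pick, once and for all, a reversible CA $\mathcal{R}$ simulating reversible Turing machines with constant slowdown. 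Third, define the reduction $w\mapsto(c(w),t(w))$ where $c(w)$ is the initial configuration encoding the (fixed) program of $M$, the input $w$, and the head of the simulated machine, laid out on a circular tape whose length $n$ is polynomial in $|w|$, and $t(w)$ is the explicit binary number equal to the slowdown times $T(|w|)$, which has polynomially many bits. Then $\mathcal{R}^{t(w)}(c(w))$ contains $F(w)$ on the output track, recoverable by a fixed projection computable in polynomial time.

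The main obstacle, and the reason this completeness result was only obtained recently, is making the whole chain work reversibly. Two points need care. The boundary condition: periodic boundaries are harmless for a partitioned (block) CA because invertibility of the local permutation already forces invertibility of the global map on every finite cyclic configuration; one only has to place endmarkers so that the simulated tape head never wraps around, which a polynomially long circle allows. The computation itself: a $\PSPACE$ machine need not be reversible, and in the cellular-automaton setting \emph{every} configuration must be legal, so unlike the constructions of~\cite{J-Lange2000,B-Morita2017} one cannot restrict to blank-tape initial configurations; the reversible simulation must therefore cope with arbitrary starting contents while still surfacing the answer at a time known in advance. Reconciling space-efficient reversible simulation (configuration-graph traversal) with Bennett-style clocking so that the output is present at exactly step $t(w)$ is the technical crux, and it is precisely this that~\cite{A-Eppstein2023} supplies; combined with the easy membership argument it yields $\FPoPSPACE$-completeness.
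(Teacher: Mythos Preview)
The paper does not prove this theorem at all: it is stated as an external result and attributed to~\cite[Theorem~5.7]{A-Eppstein2023}, with no argument supplied. Your sketch is therefore not comparable to any proof in the paper; it is an expository outline of Eppstein's result, going well beyond what the authors provide. As such it is entirely appropriate that you flag the statement as a citation and offer only a high-level rationale, and your identification of the technical crux (reversible simulation that must behave correctly from \emph{every} configuration, not just blank-tape ones, while surfacing the output at a predetermined step) matches the paper's own remarks preceding the theorem.
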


\begin{corollary}
    \label{cor:fpspacecomplete}
  Given $(f_i:\B^n\to\B)_{i\in\entiers{n}}$ as circuits, $\mu\in\BPn$ such that $\mmjoblock{f}{\mu}$ is bijective,
  $x\in\B^n$ and $t\in\entiers{|\varphi(\mu)|}$ in binary,
  computing the configuration at the $t$-th substep is complete for $\FPoPSPACE$.
\end{corollary}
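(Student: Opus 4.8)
\medskip\noindent
\textbf{Proof plan.}
The plan is to treat membership and hardness separately; bijectivity is only needed for the hardness direction.

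\smallskip\noindent
\emph{Membership in $\FPoPSPACE$.} This would be argued without using the bijectivity hypothesis. Write $N$ for the number of automata of the input network. For each $j\in\entiers{N}$ I would query the oracle on the decision problem ``is bit $j$ of the configuration reached from $x$ after the first $t$ substeps of $\mmjoblock{f}{\mu}$ equal to $\1$?''. This lies in $\PSPACE$: maintain a binary counter $r$ running from $0$ to $t-1$ (the integer $t$ may be exponential, but it is given in binary, so the counter needs only polynomial space), and for each value of $r$ reconstruct the block $W$ of automata updated at substep $r$ by reducing $r$ modulo each o-block length to select its active element, then apply $\mmjblock{f}{W}$ by evaluating the circuits; after $t$ substeps report bit $j$. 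The loop runs for at most exponentially many iterations but uses only polynomial space. Collecting the $N$ answers reconstitutes the output configuration, so the function is in $\FPoPSPACE$.

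\smallskip\noindent
\emph{$\FPoPSPACE$-hardness.} I would reduce from the $\FPoPSPACE$-complete problem of Theorem~\ref{thm:rca}: given a configuration $w$ of a fixed one-dimensional reversible cellular automaton $\mathcal{A}$ on a ring of $m$ cells together with an integer $T$ in binary, output the configuration of $\mathcal{A}$ after $T$ iterations. Let $R:\B^{cm}\to\B^{cm}$ be the global transition of $\mathcal{A}$ on the ring of length $m$, where $c$ is the constant number of bits per cell; $R$ is a bijection and, $\mathcal{A}$ being fixed, is computed by a circuit of size $\O(m)$. First I would set $n=\max(2,\lceil\log_2(T+1)\rceil)$ and invoke Lemma~\ref{lem:gn}, but with the constant $\0$ local functions of $g_n$ replaced by identity local functions, keeping the update schedule $\mu_n$; this preserves $|\varphi(\mu_n)|=\prod_{i=1}^{k_n}p_i>2^n\geq T+1$. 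Then I would enlarge the network with $cm$ automata forming a block $D$ governed by $f_D(x)=R(x_D)$, and take $\mu=\mu_n\cup\bigcup_{j\in D}\{(j)\}$, which leaves $|\varphi(\mu)|$ unchanged. The instance is completed with any configuration $x$ whose restriction to $D$ encodes $w$, and with $t=T$ (admissible since $t<|\varphi(\mu)|$); reading off and decoding the $D$-part of the configuration at the $t$-th substep recovers the answer.

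\smallskip\noindent
\emph{Correctness and cost of the reduction.} Every block $W$ of $\varphi(\mu)$ updates exactly one automaton of each prime o-block, on which $\mmjblock{f}{W}$ acts as the identity, together with all of $D$, on which it acts as $x_D\mapsto R(x_D)$ — a bijection reading only $x_D$. Hence $\mmjblock{f}{W}$ is bijective, and by Lemma~\ref{lem:bij} so is $\mmjoblock{f}{\mu}$, so the reduction respects the promise of the statement. Moreover, since $f_D$ reads only $x_D$ and every substep updates all of $D$ simultaneously, after $t=T$ substeps the $D$-part of $x$ has become $R^T(w)$, the $\mathcal{A}$-configuration after $T$ steps. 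The reduction runs in polynomial time: $n=\O(\log T)$, the prime gadget is produced in time $\O(n^4)$ by Lemma~\ref{lem:gn}, and the circuit for $R$ has size $\O(m)$.

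\smallskip\noindent
\emph{Main obstacle.} The crux is the tension between two requirements: $|\varphi(\mu)|$ must be exponential so that the (at most exponentially many) iterations $T$ fit among the substeps, while $\mmjoblock{f}{\mu}$ must be bijective to land inside the promised class. The clock network $g_n$ of Definition~\ref{def:gn} uses constant $\0$ local functions and is therefore \emph{not} bijective even for a single block. The resolution is to observe that its prime-length o-blocks are present only to inflate the least common multiple of o-block sizes, so their constant functions may be replaced by identities without affecting $|\varphi(\mu_n)|$; after this change, Lemma~\ref{lem:bij} applies. The remaining ingredients — encoding one step of a fixed reversible cellular automaton as a circuit, and placing its cells in singleton o-blocks so that one substep coincides with one CA step — are routine.
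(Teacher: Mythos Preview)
Your proposal is correct and follows essentially the same approach as the paper for the hardness direction: reduce from Theorem~\ref{thm:rca}, use the prime gadget of Lemma~\ref{lem:gn} with the constant local functions replaced by identities to obtain an exponential number of substeps without destroying bijectivity, and encode one CA step per substep on a block of singleton o-blocks. Your treatment is in fact more explicit than the paper's: you spell out the membership argument (which the paper omits entirely), you justify bijectivity via Lemma~\ref{lem:bij} rather than leaving it at ``because the CA is reversible, padding with identity'', and you quantify the size parameters to verify the reduction is polynomial time.
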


The proof of this corollary can be found in \ref{a:proofs}.

Intuitively, the dynamics of substeps embeds complexity.
The relationship to the complexity of computing the configuration after the whole step
composed of $|\varphi(\mu)|$ substeps, \emph{i.e.}~the image through $\mmjoblock{f}{\mu}$,
is not obvious.

\medskip

Being a constant map is another global property of the dynamics,
which turns out to be $\PSPACE$-complete to recognize for BANs
under block-parallel update schedules.\\[.5em]
\decisionpb{Block-parallel constant}{BP-Constant}
{$(f_i:\B^n\to\B)_{i\in\entiers{n}}$ as circuits, $\mu\in\BPn$.}
{does there exist $y\in\B^n$ such that $\mmjoblock{f}{\mu}(x)=y$ for all $x\in\B^n$?}

\begin{theorem}
  \label{thm:cst}
  {\normalfont \textbf{BP-Constant}} is $\PSPACE$-complete.
\end{theorem}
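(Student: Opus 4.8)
The plan is to reduce from \textbf{Iter-CVP}, after first disposing of membership. For membership, observe that $\mmjoblock{f}{\mu}(x)$ can be evaluated in polynomial space by iterating over the $|\varphi(\mu)|$ substeps exactly as in the algorithm of Theorem~\ref{thm:image}; hence one computes $y=\mmjoblock{f}{\mu}(\0^n)$ and then, reusing space, checks for every $x\in\B^n$ whether $\mmjoblock{f}{\mu}(x)=y$, accepting iff all checks pass. So \textbf{BP-Constant} is in $\PSPACE$.

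For the hardness, given an \textbf{Iter-CVP} instance $C:\B^n\to\B^n$, $\tilde x\in\B^n$, $i\in\entiers{n}$, I would start from $g_n,\mu_n$ of Lemma~\ref{lem:gn}, which provide $\ell=|\varphi(\mu_n)|>2^n$ substeps, and extend $f$ (and $\mu'$ by singleton o-blocks) with three groups of fresh automata: a \emph{counter} $B$ that, as in Example~\ref{ex:counter}, deterministically reaches a fixed value, so that its coordinates are constant in $\mmjoblock{f}{\mu'}$ whatever the input; a \emph{simulation register} $D$ (on $n$ bits, plus a few bits of internal step-counter) together with a \emph{recorder bit} $R$; and a block of \emph{payload} automata $P'$. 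The local functions are crafted so that during the $\ell$ substeps the counter first drives a reset phase that forces $D$ to the state $\tilde x$ and $R$ to $\0$, then a run phase of more than $2^n$ substeps during which $D$ iterates $C$ and $R\leftarrow R\vee D_i$, and finally the payload obeys $P'\leftarrow\0$ whenever $R=\1$ and $P'\leftarrow P'$ otherwise. If the instance is positive then, since the run phase outlasts the orbit of $\tilde x$, the recorder reaches $\1$, which collapses the payload to $\0$ and leaves every coordinate at its canonical value, so $\mmjoblock{f}{\mu'}$ is the constant map; if the instance is negative then $R$ stays $\0$, the payload keeps its input value, and two inputs differing on $P'$ have distinct images, so $\mmjoblock{f}{\mu'}$ is not constant. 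Thus $\mmjoblock{f}{\mu'}$ is constant iff the instance is positive, and the reduction runs in polynomial time by Lemma~\ref{lem:gn}. (If it turns out cleaner, one can instead arrange $\mmjoblock{f}{\mu'}$ to be constant iff the instance is \emph{negative} and invoke $\PSPACE=\coPSPACE$.)

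The delicate point — and what I expect to be the main obstacle — is realizing the reset phase uniformly for \emph{every} input configuration. Unlike an iterated circuit, a block-parallel step offers no substep that the local functions can recognize as ``the first one'': the counter may start at any value, so one must design its dynamics (and the threshold separating the reset and run phases) so that, from any initial encoding, the counter necessarily sweeps through the reset range and still leaves at least $2^n$ substeps for the run phase; otherwise inputs whose counter starts ``past'' the reset range would let $D$ iterate $C$ from an arbitrary state, recording the wrong orbit and breaking constancy in the positive case. Getting this multi-phase counter right — possibly by enlarging $\ell$ via a few extra primes in Lemma~\ref{lem:gn}, and by carefully handling out-of-range counter encodings — is the technical crux; the remainder of the argument is bookkeeping about which coordinates of $\mmjoblock{f}{\mu'}$ are constant.
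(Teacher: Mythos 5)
Your membership argument is fine, and the skeleton of your reduction (saturating counter $B$, a register $D$ iterating $C$, a recorder $R$, reduction from \textbf{Iter-CVP}) is the same as the paper's. The gap is exactly the point you defer to the end: handling configurations whose counter does not start at the reset value, and the fix you propose there cannot work. In the positive case the map $\mmjoblock{f}{\mu'}$ must be constant, so \emph{every} coordinate of the image, including the counter's own bits, must be independent of the input; hence the counter's substep map, iterated $|\varphi(\mu')|$ times, must be a constant map, which means its functional graph converges to a single sink state reached within the step. But the sink itself (and any value close to it) is a legal initial counter value, and its trajectory never performs ``sweep through the reset range, then run for at least $2^n$ substeps'' --- no matter how many extra primes you use to enlarge $\ell$. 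For such inputs $D$ ends at $C^t(x_D)$ for an input-dependent $t$ and an arbitrary $x_D$, so the map is not constant even when the instance is positive, and your intended equivalence fails. (Your payload also has a timing issue: if $R$ first becomes $\1$ at the last substep, $P'$ is never erased; and since $D$ keeps iterating to the end, its final value is an extra input-sensitive coordinate you never neutralize.)

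The paper resolves this with a different mechanism, which is the idea missing from your proposal: it does \emph{not} try to reset-and-simulate uniformly on all inputs. Configurations with $x_B\neq 0$ are simply driven to a fixed canonical image $\0^{q_{k_n}}\1^{\ell'}\0^n\1$ --- once the saturating counter passes $\ell-1$ the local functions force $D\to\0^n$ and then $R\to\1$, and the counter sticks at all $\1$s. Configurations with $x_B=0$ are the only ones that simulate: at the first substep $D$ is reset to $C(\tilde x)$ and $R$ to $\tilde x_i$ (so stale values of $D$ and $R$ are harmless), the orbit of $\tilde x$ is explored for more than $2^n$ substeps, $D$ is wiped to $\0^n$ at the last substep, and $R$ ends holding the answer bit. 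Thus the only coordinate that can depend on the input is $R$: it is $\1$ on all ``bad'' inputs and equals the \textbf{Iter-CVP} answer on the ``good'' ones, so $\mmjoblock{f}{\mu'}$ is constant if and only if the instance is positive; no payload register is needed, the non-constancy witness in the negative case being $R$ itself. Your construction can be repaired by adopting this ``send bad inputs directly to the positive-case image, and wipe the simulation register at the end'' design, but that is the technical crux of the proof, not bookkeeping.
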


\begin{proof}
  To decide \textbf{BP-Constant}, one can simply enumerate all configurations
  and compute their image (Theorem~\ref{thm:image}) while checking that it always gives the same result.

  \medskip

  For the $\PSPACE$-hardness proof, we reduce from \textbf{Iter-CVP}.
  Given a circuit $C:\B^n\to\B^n$, a configuration $\tilde{x}$ and $i\in\entiers{n}$,
  we apply Lemma~\ref{lem:gn} to construct $g_n,\mu_n$ on automata set $P=\entiers{q_{k_n}}$.
  Automata from $P$ have constant $\0$ local functions,
  and the number of substeps is $\ell=|\varphi(\mu_n)|>2^n$.
  We add (Figure~\ref{fig:cst} illustrates the obtained dynamics):
  \begin{itemize}[nosep]
    \item $\ell'=\lceil\log_2(\ell)\rceil$ automata numbered $B=\{q_{k_n},\dots,q_{k_n}+\ell'-1\}$,
      implementing a $\ell'$-bits binary counter that increments at each substep,
      and sets all automata from $B$ in state $\1$ when the counter is greater or equal to $\ell-1$;
    \item $n$ automata numbered $D=\{q_{k_n}+\ell',\dots,q_{k_n}+\ell'+n-1\}$, whose local functions
      are given below; and 
    \item $1$ automaton numbered $R=\{q_{k_n}+\ell'+n\}$, whose local function is given below.
  \end{itemize}
      \[
        f_D(x)=
        \begin{cases}
          C(\tilde{x})&\text{if }x_B=0\\
          C(x_D)&\text{if }0<x_B<\ell-1\\
          \0^n&\text{otherwise}
        \end{cases}
        \qquad
        f_R(x)=
        \begin{cases}
          \tilde{x}_i&\text{if }x_B=0\\
          x_R\vee x_{q_{k_n}+\ell'+i}&\text{if }0<x_B<\ell\\
          \1&\text{otherwise}
        \end{cases}
      \]
  We also add singletons to $\mu_n$ for these additional automata, via
    $\mu'=\mu_n\cup\bigcup_{j\in B\cup D\cup R}\{(j)\}$.

  \begin{figure}
    \centering
    \includegraphics[width=\textwidth]{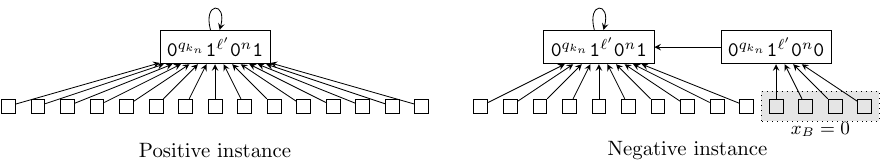}
    \caption{
      Illustration of the dynamics obtained for the reduction
      to \textbf{BP-Constant} in the proof of Theorem~\ref{thm:cst}.
      Configurations $x$ with the counter automata $B$ initialized to $x_B=0$
      either go to $\0^{q_{k_n}}\1^{\ell'}\0^n\1$
      (left, positive instance),
      or to $\0^{q_{k_n}}\1^{\ell'}\0^n\0$ (right, negative instance).
      Only the bit of automata $R$ changes.
    }
    \label{fig:cst}
  \end{figure}

  For any configuration $x$ with a counter not initialized to $0$,
  i.e.,~with $x_B\neq 0$, the counter will reach and remain in the all $\1$ state
  before the last substep, therefore automata from $D$ will be updated to $\0^n$
  and automaton $R$ will be updated to $\1$.
  We conclude that $\mmjoblock{f}{\mu'}(x)=\0^{q_{k_n}}\1^{\ell'}\0^n\1$.
  For configurations $x$ with $x_B=0$, substeps proceed as follows:
  \begin{itemize}[nosep]
    \item automata $B$ count until $\ell-1$ at the penultimate substep
      (recall that $\ell=|\varphi(\mu_n)|=|\varphi(\mu'_n)|$),
      which finally brings them all in state $\1$ during the last substep;
    \item automata $D$ iterate the circuit $C$,
      starting from $C(\tilde{x})$ during the first substep; and 
    \item automaton $R$ records whether a $\1$ appears or not in the whole orbit of $\tilde{x}$
      (recall that $\ell=|\varphi(\mu'_n)|>2^n$),
      starting from $\tilde{x}$ itself during the first substep
      (even though $x_D\neq\tilde{x}$)
      and without encountering the ``$\1$ otherwise'' case.
  \end{itemize}
  We conclude that the image of $x$ on automata $P$ is $\0^{q_{k_n}}$,
  on $B$ is $\1^{\ell'}$, on $D$ is $\0^n$,
  and on $R$ it depends whether the \textbf{Iter-CVP} instance is positive
  (automaton $R$ in state $\1$) or negative (automaton $R$ in state $\0$).
  This completes the reduction: the image is always $\0^{q_{k_n}}\1^{\ell'}\0^n\1$
  if and only if the \textbf{Iter-CVP} instance is positive.
\end{proof}


\section{Conclusion and perspectives}
\label{s:conclusion}

This article presents a theoretical study of block-parallel update modes
under two different angles.

The first part of this article focused on combinatorial aspects, in particular 
counting and enumerating not only the general set of block-parallel update mode,
but also the classes for two equivalence relations. These relations weren't chosen
arbitrarily, since the first one gives us the minimal number of update modes required
to generate every possible distinct underlying dynamical system, and the second one
generates dynamical systems that all have different sets of limit cycles (limit dynamics).
Regarding the enumeration algorithms we presented, one of the first questions that
comes to mind would be about their complexity. The three algorithms we mentioned seem to belong
to \textsf{EnumP}~\cite{J-Creignou2019}, but this requires a formal analysis.
In particular, the question
of which subclasses of \textsf{EnumP} they belong to still needs to be addressed.

The second half of this article didn't answer this question, rather focusing on the
computational complexity of classical decision problems involving Boolean automata networks.
While an automaton cannot be updated twice in a block-sequential update mode, hence limiting the number
of substeps, update repetitions are allowed in block-parallel update modes, which leads to a much
higher ceiling for the number of substeps. This provides a greater expressiveness, but also
higher complexity costs.
Mainly, computing a single transition goes from being feasible in polynomial time with all
block-sequential schedules~\cite{C-Perrotin2023} to $\PSPACE$-hard in this context
(Theorem~\ref{thm:image}). This raises the complexity of classical problems related to the
existence of preimages, fixed points, limit cycles, and the recognition of constant dynamics
from $\NP$-complete (existence problems) or $\coNP$-complete (global dynamical properties)
for block-sequential schedules to $\PSPACE$-complete for block-parallel schedules.

One might be tempted to draw the following conjecture from these results.
\begin{conjecture}[false]\label{conj:pspace}
  If a problem is $\NP$-hard or $\coNP$-hard and in $\PSPACE$
  for block-sequential update schedules
  then it is $\PSPACE$-complete for block-parallel update schedules.
\end{conjecture}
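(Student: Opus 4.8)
The conjecture is flagged false, so the plan is to \emph{refute} it by exhibiting one counterexample, namely the recognition of bijective dynamics. The point is that this problem satisfies the premise of Conjecture~\ref{conj:pspace} for block-sequential update schedules, while its block-parallel version violates the conclusion: by Theorem~\ref{thm:bij} it is only $\coNP$-complete.

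First I would pin down the block-sequential version: on input a BAN $f$ given by circuits and a block-sequential schedule $\mu=(W_0,\dots,W_{\ell-1})$ (so $\ell\leq n$), decide whether $\mmjblock{f}{W_{\ell-1}}\circ\dots\circ\mmjblock{f}{W_0}$ is bijective. By the same argument as Lemma~\ref{lem:bij} (right-to-left is a composition of bijections; for the converse, pick the first non-bijective block, pull a collision back through the bijective prefix and push it forward through the suffix), this global map is bijective if and only if each $\mmjblock{f}{W_i}$ is bijective. Hence non-bijectivity is certified by a block index $i\leq n$ together with two distinct configurations having the same image under $\mmjblock{f}{W_i}$, which is checkable in polynomial time; so the problem lies in $\coNP$, a fortiori in $\PSPACE$. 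For $\coNP$-hardness it suffices to recall that the parallel update schedule is a block-sequential one (a single block) and that bijectivity recognition under the parallel schedule is already $\coNP$-hard~\cite[Theorem~5.17]{HDR-Perrot2022}. Thus the block-sequential bijectivity problem is $\coNP$-hard and in $\PSPACE$, so it does meet the premise of Conjecture~\ref{conj:pspace}.

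Then I would invoke Theorem~\ref{thm:bij}: the block-parallel version, \textbf{BP-Bijectivity}, is $\coNP$-complete. If Conjecture~\ref{conj:pspace} held, \textbf{BP-Bijectivity} would also be $\PSPACE$-complete, hence $\coNP=\PSPACE$, collapsing the polynomial hierarchy to its first level. Under the usual separation assumption (already $\coNP\neq\PSPACE$ suffices) this is impossible, so the conjecture is false. One may add that \textbf{BP-Identity} (Open problem~\ref{open:id}) and \textbf{BP-Subdynamics-$G$} for graphs $G$ of out-degree at most one that are not functional are further candidate or genuine witnesses that the naive ``lifting'' heuristic fails.

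The step requiring the most care is not a calculation but the bookkeeping of what counts as ``the same problem'' in the statement of the conjecture: the input of such a problem implicitly carries the update schedule, so I would phrase bijectivity recognition \emph{uniformly} — circuits for $f$ together with an update mode $\mu$ — and then check that restricting $\mu$ to block-sequential modes yields exactly the problem shown $\coNP$-complete above, while restricting it to block-parallel modes yields exactly the problem of Theorem~\ref{thm:bij}. Everything else is immediate.
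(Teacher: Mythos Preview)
Your proposal is correct and follows exactly the paper's own refutation: it uses bijectivity recognition as the counterexample, relying on Theorem~\ref{thm:bij} (\textbf{BP-Bijectivity} is $\coNP$-complete) together with the $\coNP$-hardness of the parallel/block-sequential case, and notes the implicit assumption $\coNP\neq\PSPACE$. Your write-up is in fact more detailed than the paper's one-paragraph justification, but the idea is identical.
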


This conjecture is however disproven by the recognition of bijective dynamics. Since a
single substep is enough to identify the absence of bijectivity, the increase in the 
number of substeps did not affect the complexity of the problem, and it stays
$\coNP$-complete for block-parallel schedules.
The complexity of recognizing identity dynamics under block-parallel schedules is still
an open problem.

After determining the complexity of recognizing preimages, image points or
fixed points, the next logical step would be the complexity of counting them.

An important remark is that, while these proofs were written for Boolean
automata networks, they also apply to multi-valued automata networks.

As mentioned earlier, the possibility of updating the same automaton multiple times in one step
allows for a greater variety of possible dynamics. Especially, they can break the fixed point
invariance property that holds for every block-sequential update mode,
but may fail for block-parallel update mode with such repetitions. It would thus be pertinent to
characterize what in these repetitions triggers the creation of new fixed points.
More generally, it would be interesting to study the following problem: given an AN $f$ ,
to which extent is $f$ block-parallel sensible or robust?
In~\cite{J-Aracena2009,J-Aracena2012,J-Aracena2013}, the authors addressed this
question on block-sequential Boolean ANs by developing the concept of update digraphs,
capturing conditions of dynamical equivalence at the syntactical level.
This concept is however unapplicable to update schedules where automata repetitions appear.
Creating an equivalent of update digraphs that allows for update repetitions would be an 
important step in the understanding of updating sensitivity and robustness of ANs.
Another approach would be to study specifically how interaction cycles behave when paired with
block-parallel update modes.

\paragraph{Acknowledgements}

This work received support from ANR-18-CE40-0002 FANs, STIC AmSud CAMA 
22-STIC-02 (Campus France MEAE) and HORIZON-MSCA-2022-SE-01-101131549 
ACANCOS projects.

\bibliographystyle{plainurl}


\newpage
\appendix

\section{Omitted proofs}
\label{a:proofs}

\begin{proof}[Continuation of the proof of Theorem~\ref{theorem:BPn0}]
  Formula~\ref{BPn0_eq1} is a sum for each partition of $n$ (sum 
  on $i$), of all the ways to fill all the matrices ($n!$) up to permutation
  within each column ($m(i,j)!$ for each of the $j$ columns of $M_j$).

  Formula~\ref{BPn0_eq2} is a sum for each partition of $n$ (sum on $i$),
  of the product for each column of the matrices (products on $j$ and $\ell$),
  of the choice of elements (among the remaining ones) to fill the column
  (regardless of their order within the column).

  Formula~\ref{BPn0_eq3} is a sum for each partition of $n$ (sum on $i$),
  of the product for each matrix (product on $j$), of the choice of elements
  (among the remaining ones) to fill this matrix, multiplied by the number of 
  ways to fill the columns of the matrix (product on $\ell$) with these elements
  (regardless of their order within each column).

  The equality between Formulas~\ref{BPn0_eq1} and~\ref{BPn0_eq2} is obtained 
  by developing the binomial coefficients as follows: 
  ${\binom{x}{y}} = \frac{x!}{y!\cdot(x-y)!}$, 
  and by observing that the products of $\frac{x!}{(x-y)!}$ telescope.
  Indeed, denoting 
  $a(j,\ell) = (n - \sum_{k = 1}^{j-1} k \cdot m(i,k) - \ell \cdot m(i,j))!$,
  we have
  \[
    \prod_{j = 1}^{d(i)} 
    	\prod_{\ell = 1}^{j}
				\frac
					{(n - \sum_{k = 1}^{j - 1} k \cdot m(i,k) - (\ell - 1) \cdot m(i,j))!}
					{(n - \sum_{k = 1}^{j - 1} k \cdot m(i,k) - \ell \cdot m(i,j))!}
    \;=\;
    \prod_{j = 1}^{d(i)} 
    	\prod_{\ell = 1}^{j}
    		\frac
					{a(j,\ell-1)}
					{a(j,\ell)}
    \;=\;
    \frac{n!}{0!}
    \;=\;
    n!
  \]
  because $a(1,0) = n!$, then $a(1,j) = a(2,0)$, $a(2,j) = a(3,0)$, ..., until 
  $a(d(i),j) = 0!$.\medskip

  The equality between Formulas~\ref{BPn0_eq2} and~\ref{BPn0_eq3} is obtained by 
  repeated uses of the identity
  ${\binom{x}{z}}{\binom{x-z}{y}} = {\binom{x}{z+y}}{\binom{z+y}{y}}$, 
  which gives by induction on $j$:
  \begin{align}
  	\label{eq:binom23}
    \prod_{\ell = 1}^{j} 
    	\binom{x - (\ell - 1) \cdot y}{y}
    \;=\;
    \binom{x}{j \cdot y} \cdot \prod_{\ell = 1}^{j} 
    	\binom{(j - \ell + 1) \cdot y}{y}\text{.}
  \end{align}
  Indeed, $j=1$ is trivial and, using the induction hypothesis on $j$ then the 
  identity we get:
  \begin{align*}
    \prod_{\ell = 1}^{j+1} 
    	\binom{x - (\ell - 1) \cdot y}{y}
    & \;=\; 
    	\binom{x - j \cdot y}{y} \cdot \prod_{\ell = 1}^{j} 
	    	\binom{x - (\ell - 1) \cdot y}{y}\\
    & \;=\; 
    	\binom{x - j \cdot y}{y} \cdot \binom{x}{j \cdot y} 
    	\cdot \prod_{\ell = 1}^{j} 
    		\binom{(j - \ell + 1) \cdot y}{y}\\
    & \;=\;
    	\binom{x}{(j + 1) \cdot y} \cdot \binom{(j + 1) \cdot y}{y} 
    	\cdot \prod_{\ell = 1}^{j} 
    		\binom{(j - \ell + 1) \cdot y}{y}\\
    & \;=\;
	    \binom{x}{(j + 1) \cdot y} \cdot 
	    \prod_{\ell = 0}^{j} 
	    	\binom{(j - \ell + 1) \cdot y}{y}\\
    & \;=\;
    	\binom{x}{(j + 1) \cdot y} \cdot \prod_{\ell = 1}^{j + 1} 
    		\binom{(j + 1 - \ell + 1) \cdot y}{y}\text{.}
  \end{align*}
  As a result, Formula~\ref{BPn0_eq3} is obtained from Formula~\ref{BPn0_eq2} 
  by applying Equation~\ref{eq:binom23} for each $j$ with 
  $x = n - \sum_{k = 1}^{j-1} k \cdot m(i,k)$ and $y = m(i,j)$.\medskip

\end{proof}

\begin{proof}[Proof of Lemma~\ref{l:egf}]
	We will start from the exponential generating function by finding the
	coefficient of $x^n$ and proving that it is equal to $\frac{|\BPn^0|}{n!}$, 
	and thus that the associated sequence is $(|\BPn^0|)_{n\in\N}$.

	\begin{multline*}
		\prod_{j \geq 1} 
			\sum_{k \geq 0} 
				\left( 
					\frac{x^k}{k!} 
				\right)^j
		\;=\;
			\left( 
				\sum_{k \geq 0} 
					\frac{x^k}{k!} 
			\right) 
			\times 
			\left( 
				\sum_{k \geq 0} 
					\frac{x^{2k}}{(k!)^2} 
			\right) 
			\times 
			\left( 
				\sum_{k \geq 0} 
					\frac{x^{3k}}{(k!)^3} 
			\right) 
			\times  
			\cdots \\
		\;=\; 
			\underbrace{
				\left( 1 + x + \frac{x^2}{2!} + \cdots \right)
			}_{j = 1} 
			\times
			\underbrace{
				\left( 1 + x^2 + \frac{x^4}{(2!)^2} + \cdots \right)
			}_{j = 2}
			\times
			\underbrace{
				\left( 1 + x^3 + \frac{x^6}{(2!)^3} + \cdots \right)
			}_{j = 3}
			\times
			\cdots
			\text{.}
	\end{multline*}
	Each term of the distributed sum is obtained by associating a $k \in \N$ to 
	each $j \in \N_+$, and by doing the product of the $\frac{1}{(k!)^j} \cdot 
	x^{jk}$.
	Thus, if $\N^{\N_+}$ is the set of maps from $\N_+$ to  $\N$, we have:
	\[
		\prod_{j \geq 1} 
			\sum_{k \geq 0} \left( 
				\frac{x^k}{k!} 
			\right)^j 
		\;=\;
			\sum_{m \in \N^{\N_+}} 
				\left( 
					\prod_{j \geq 1} 
						\frac{1}{(m(j)!)^j} 
				\right)
				\cdot 
				x^{\sum_{j \geq 1} j \cdot m(j)}\text{.}
	\]
	From here, to get the coefficient of $x^n$, we need to do the sum only on the
	maps $m$ such that $\sum_{j\geq1}j\cdot m(j) = n$, which just so happen to be
	the partitions of $n$, with $m(j)$ being the multiplicity of $j$ in the
	partition.
	Thus, the coefficient of $x^n$ is
	\[
  	\sum_{i = 1}^{p(n)} 
  		\prod_{j \geq 1} 
  			\frac{1}{(m(i,j)!)^j} 
	  \;=\; 
	  	\sum_{i = 1}^{p(n)} 
	  		\frac{1}{\prod_{j \geq 1}^{d(i)} (m(i,j)!)^j} 
	  \;=\; 
	  	\frac{|\BPn^0|}{n!}\text{.}
	\]
\end{proof}

\begin{proof}[Proof of correction of Algorithm~\ref{algo:BPiso}]
  We first argue that Algorithm~\ref{algo:BPiso} enumerates the correct number
  of block-parallel update modes, and then that any pair $\mu, \mu'$
  enumerated is such that $\mu \not \equiv_\star \mu'$.

	For ease of reading in the rest of this proof, we will denote $a[j]$ as $a_j$
	and $m(i, j)$ as $m_{ij}$.
	From the placement of $min_j$ described above (forced to be within the first 
	$a_j$ columns of matrix $M_j$), the difference with the previous algorithm
	is that, instead of having $\prod_{\ell = 1}^{j} \binom{(j - \ell + 1) \cdot
	m_{ij}}{m_{ij}}$ ways of filling matrix $M_j$, we only have the following
	number of ways (recall that $M_j$ has $j$ columns and $m_{ij}$ rows):
	\[
		\sum_{k = 1}^{a_j} 
			\left( 
				\prod_{\ell = 1}^{k - 1} 
					\binom{(j - \ell + 1) \cdot m_{ij} - 1}{m_{ij}} 
			\right) 
			\cdot 
			\binom{(j - k + 1) \cdot m_{ij} - 1}{m_{ij} - 1} 
			\cdot 
			\left(
				\prod_{\ell = k + 1}^{j}
					\binom{(j - \ell + 1) \cdot m_{ij}}{m_{ij}}
			\right)\text{.}
	\]
	Indeed, the formula above sums, for each choice of a column $k$ from $1$ to 
	$a_j$ where $min_j$ will be placed, the number of ways to place some elements 
	within columns $1$ to $k-1$ (first product on $\ell$), times the number of 
	ways to choose some elements that will accompany $min_j$ within column $k$ 
	(middle binomial coefficient), times the number of ways to place some other 
	elements within the remaining columns $k+1$ to $j$ (second product on $\ell$).
	Now, we have 
	$\binom{(j - k + 1) \cdot m_{ij}-1}{m_{ij} - 1} 
		= 
			\frac{m_{ij}}{(j - k + 1) \cdot m_{ij}} 
			\cdot 
			\binom{(j - k + 1) \cdot m_{ij}}{m_{ij}} 
		= 
			\frac{1}{(j - k + 1)} \cdot \binom{(j - k + 1) 
			\cdot 
			m_{ij}}{m_{ij}}$. 
	We also have
	$\binom{(j - \ell + 1) \cdot m_{ij} - 1}{m_{ij}}
		=
			\frac{j - l}{j - (l - 1)} 
			\cdot
			\binom{(j - \ell + 1) \cdot m_{ij}}{m_{ij}}$.
	This means that the sum of the possible ways to choose the content
	of matrix $M_j$ can be rewritten as follows:
	\begin{multline*}
		\sum_{k = 1}^{a_j}
			\left(
				\frac{1}{(j - k + 1)} \cdot \prod_{\ell = 1}^{k - 1}
					\frac{j - \ell}{j - (\ell - 1)} \cdot \prod_{\ell = 1}^j 
						\binom{(j - \ell + 1) \cdot m_{ij}}{m_{ij}}
			\right)\\ 
		\begin{split}
			\;=\; & 
				\sum_{k = 1}^{a_j}
					\left(
						\frac{1}{(j - k + 1)} \cdot \frac{j - k + 1}{j} \cdot 
						\prod_{\ell = 1}^j
							\binom{(j - \ell + 1) \cdot m_{ij}}{m_{ij}}
					\right)\\
			\;=\; & 
				\sum_{k = 1}^{a_j} 
					\left(
						\frac{1}{j} \cdot \prod_{\ell = 1}^j 
							\binom{(j - \ell + 1) \cdot m_{ij}}{m_{ij}}
					\right)\\
		  \;=\; & 
		  	\frac{a_j}{j} \cdot \prod_{\ell = 1}^j
		  		\binom{(j - \ell + 1) \cdot m_{ij}}{m_{ij}}\text{.}
		\end{split}
	\end{multline*}
	Hence in total, the algorithm enumerates the following number of update modes:
	\[
		\sum_{i = 1}^{p(n)}
			\prod_{j = 1}^{d(i)}
				\left(
					\binom{n - \sum_{k = 1}^{j - 1} k \cdot m_{ik}}{j \cdot m_{ij}}
					\cdot
					\frac{a_j}{j} \cdot \prod_{\ell = 1}^{j}
						\binom{(j - \ell + 1) \cdot m_{ij}}{m_{ij}}
    		\right)\text{.}
  \]
	In order to prove that this number is equal to $|\BPn^\star|$, we need to 
	prove that $\prod_{j = 1}^{d(i)} \frac{a_j}{j} = \frac{1}{\lcm(i)}$.
	Denoting $L(j) = \lcm(\{k \in \entiers{j, d(i)} \mid m_{ik} > 0\})$, we prove 
	by induction that at the end of each step of the \textbf{for} loop from lines 
	5-10, we have:
	\[
		\prod_{k = j}^{d(i)} 
			\frac{a_k}{k} 
		= 
			\frac{1}{L(j)}\text{,}
	\] 
	and the claim follows (when $j = 1$, we get $L(j) = \lcm(i)$).
	At the first step, $j = d(i)$, and 
	\[
		\frac{a_{d(i)}}{d(i)} 
		= 
			\frac{\gcd(\{d(i), 1\})}{d(i)} 
		= 
			\frac{1}{L(j)}\text{.}
	\]
	We assume as induction hypothesis that for a given $j$, we have
	$\prod_{k = j}^{d(i)}\frac{a_k}{k} = \frac{1}{L(j)}$.
	There are two possible cases for $j - 1$:
	\begin{itemize}[nosep]
	\item If $m_{i(j-1)} = 0$, then 
		\[
			a_{j-1} = j-1 \text{ and }
			\prod_{k = j - 1}^{d(i)}
				\frac{a_k}{k} 
			= 
				\frac{j - 1}{(j - 1)\cdot L(j)} 
			= 
				\frac{1}{L(j-1)}\text{.}
		\]
	\item Otherwise, 
		\[
			\prod_{k = j - 1}^{d(i)}
				\frac{a_k}{k} 
			= 
				\frac{\gcd(\{j - 1, L(j)\})}{(j - 1) \cdot L(j)}\text{.}
		\]
		And since $\frac{a \cdot b}{\gcd(\{a, b\})} = \lcm(\{a, b\})$,
		we have
		\[
			\prod_{k = j - 1}^{d(i)}
				\frac{a_k}{k} 
			= 
				\frac{1}{\lcm(\{j-1, L(j)\})} 
			=
		    \frac{1}{L(j-1)}\text{.}
		\]
	\end{itemize}
	We conclude that at the end of the loop, we have $\prod_{j = 1}^{d(i)}
		\frac{a_j}{j} = \frac{1}{\lcm(i)}$, and thus that the algorithm enumerates 
                the following number of update modes (cf.~Remark~\ref{remark:BPnstar}):
	\[
		\sum_{i = 1}^{p(n)} 
			\prod_{j = 1}^{d(i)}
				\left(
					\binom{n - \sum_{k = 1}^{j-1} k \cdot m_{ik}}{j \cdot m_{ij}}
					\cdot
					\prod_{\ell = 1}^{j}
						\binom{(j - \ell + 1) \cdot m_{ij}}{m_{ij}}
				\right) 
			\cdot 
			\frac{1}{\lcm(i)} 
		= 
			|\BPn^\star|\text{.}
	\]\medskip

We now need to prove that the algorithm does not enumerate two equivalent
update modes (in the sense of $\equiv_\star$).
Algorithm~\ref{algo:BPiso} is heavily based on the algorithm from the previous section, in
such a way that, for a given input, the output of Algorithm~\ref{algo:BPiso} 
will be a subset of that of the aforementioned algorithm.
Said algorithm enumerates $\BPn^0$, which implies that every update
mode enumerated by it has a different image by $\varphi$. 
This means that every block-parallel update mode enumerated by 
Algorithm~\ref{algo:BPiso} also has a different image by $\varphi$, and that the 
algorithm does not enumerate two equivalent update modes with a shift of $0$.

We now prove by contradiction that the algorithm does not enumerate two 
equivalent update modes with a non-zero shift. 
Let $\mu, \mu' \in \BPn$ be two update modes, both enumerated by 
Algorithm~\ref{algo:BPiso}, such that $\mu \equiv_\star \mu'$ with a non-zero 
shift. 
Then, there is $k \in \entiers[1]{|\varphi(\mu')| - 1}$ such that 
$\varphi(\mu) = \sigma^k(\varphi(\mu'))$, and $\mu, \mu'$ are both generated
from the same partition $i$, with $|\varphi(\mu)| = |\varphi(\mu')| = \lcm(i)$.
Moreover, for each $j \in \entiers{d(i)}$ the matrix $M_j$ must contain
the same elements $A$ in both $\mu$ and $\mu'$ (so that they are repeated every 
$j$ blocks in both $\varphi(\mu)$ and $\varphi(\mu')$), hence in particular 
$min_j$ is the same in both enumerations.

We will prove by induction that every $j \in \entiers{d(i)}$ such that 
$m(i, j) > 0$ divides $k$, and therefore $k = 0$ (equivalently $k = \lcm(i)$), 
leading to a contradiction.
For the base case $j = d(i)$, we have $a_{d(i)} = 1$ (first iteration of the 
\textbf{for} loop lines 5-10), hence the call of \texttt{EnumBlockIsoAux} for 
any set $A$ passes the condition of line 35 and $min_{d(i)}$ is immediately 
chosen to belong to $C_1$ (the first column of $M_{d(i)}$).
When converted to block-sequential update modes, it means that $min_{d(i)}$
appears in all blocks indexed by $d(i) \cdot t$ with $t \in \N$, hence $k$ must 
be a multiple of $d(i)$ so that $\sigma^k$ maps blocks containing $min_{d(i)}$
to blocks containing $min_{d(i)}$.

As induction hypothesis, assume that for a given $j$, every $\ell \in 
\llbracket j, d(i)\rrbracket$ such that $m_{i\ell} > 0$ divides $k$. 
We will prove that $j'$, the biggest number in the partition $i$ 
(\emph{i.e.}~with $m_{ij'} > 0$) that is smaller than $j$, also divides $k$.
In the matrix $M_{j'}$, the minimum $min_{j'}$ is forced to appear within the 
$a_{j'}$ first columns.
This means that block indexes where it appears in both $\varphi(\mu)$ and 
$\varphi(\mu')$ can be written respectively as $j'\cdot t + b$ and 
$j'\cdot t + b'$ respectively, with $t \in \N$ and $b , b'\in 
\llbracket 1, a_{j'}\rrbracket$.
As a consequence, an automaton from $M_{j'}$ that is at the position $b$ in
$\varphi(\mu)$ is at a position of the form $j' \cdot t + b'$ in 
$\varphi(\mu')$. 
It follows that $b + k = j'\cdot t + b'$, which can be rewritten as 
$k = t \cdot j' + b' - b = t \cdot j' + d$, with $t \in \N$ and 
$d = b'- b \in \llbracket - a_{j'} + 1, a_{j'} - 1 \rrbracket$.
Moreover, we know by induction hypothesis that every number in the partition $i$
that is greater than $j'$ divides $k$, making $k$ a common multiple of these
numbers. 
We deduce that their lowest common multiple also divides $k$. 
Given that $a_{j'}$ is the $\gcd$ of $j'$ and said $\lcm$ (lines 7-8), it means 
that $a_{j'}$ divides both $j'$ and $k$, which implies that it also divides $d$.
Since $d$ is in $\llbracket -a_{j'} + 1, a_{j'} - 1 \rrbracket$, we have $d = 0$
and thus, $j'$ divides $k$. This concludes the induction.

If every number of the partition $i$ divides $k$ and $k \in \entiers{\lcm(i)}$,
then $k=0$, leading to a contradiction. 
This concludes the proof of correctness.
\end{proof}

\begin{proof}[Proof of Lemma~\ref{lem:primes}]
  By the prime number theorem, there are approximately $\frac{N}{\ln(N)}$ primes lower than $N$.
  As a consequence, distinct prime integers $p_1,p_2,\dots,p_{k_n}$ with $k_n=\lfloor\frac{n^2}{\ln(n^2)}\rfloor$
  can be computed in time $\O(n^2)$ using Atkin sieve algorithm.
  Since $2 \leq p_i < n^2$, we have $2^{k_n} \leq \prod_{i=1}^{k_n} p_i < n^{2k_n}$.
  It holds that $2^{k_n} = 2^{\lfloor\frac{n^2}{2\ln(n)}\rfloor} > 2^n$, and
  $n^{2k_n} \leq n^{\frac{n^2}{\ln(n)}}$ with
  \[
    \log_2\left(n^\frac{n^2}{\ln(n)}\right) =
    \frac{\frac{n^2}{\ln(n)}}{\log_n(2)} =
    \frac{n^2}{\ln(2)}
  \]
  meaning that $n^{2k_n} \leq 2^{\frac{n^2}{\ln(2)}} < 2^{2n^2}$.
\end{proof}

\noindent
\decisionpb{Block-parallel preimage}{BP-Preimage}
{$(f_i:\B^n\to\B)_{i\in\entiers{n}}$ as circuits, $\mu\in\BPn$, $y\in\B^n$.}
{does $\exists x\in\B^n:\mmjoblock{f}{\mu}(x)=y$?} 

\begin{theorem}\label{thm:preimage}
  {\normalfont \textbf{BP-Preimage}} is $\PSPACE$-complete.
\end{theorem}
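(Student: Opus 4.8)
The plan is to prove membership in $\PSPACE$ by brute force, and $\PSPACE$‑hardness by recycling, with a single cosmetic change, the reduction of Theorem~\ref{thm:cst}. For membership I would enumerate every configuration $x\in\B^n$ and, for each, compute $\mmjoblock{f}{\mu}(x)$ with the polynomial‑space procedure already used for \textbf{BP-Step} in Theorem~\ref{thm:image} (compute $\ell=\lcm$ of the o-block sizes, then run $\ell$ substeps keeping one pointer per o-block, each substep evaluating the circuits in polynomial time), accepting as soon as some image equals $y$. Reusing the working space across the enumeration keeps the whole algorithm in polynomial space, so \textbf{BP-Preimage}$\,\in\PSPACE$.

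For hardness I would reduce from \textbf{Iter-CVP}. Given $C:\B^n\to\B^n$, $\tilde{x}\in\B^n$ and $i\in\entiers{n}$, apply Lemma~\ref{lem:gn} to obtain $g_n,\mu_n$ on $P=\entiers{q_{k_n}}$ with $\ell=|\varphi(\mu_n)|>2^n$ substeps, and add exactly the automata of the proof of Theorem~\ref{thm:cst}: a saturating $\ell'$-bit counter $B$ (incremented each substep, pinned to $\1^{\ell'}$ once its value reaches $\ell-1$); $n$ automata $D$ with $f_D(x)=C(\tilde{x})$ if $x_B=0$, $C(x_D)$ if $0<x_B<\ell-1$, and $\0^n$ otherwise; and one automaton $R$. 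The only difference with Theorem~\ref{thm:cst} is that the last branch of $f_R$ outputs $\0$ rather than $\1$:
\[
  f_R(x)=
  \begin{cases}
    \tilde{x}_i & \text{if }x_B=0\text{,}\\
    x_R\vee x_{q_{k_n}+\ell'+i} & \text{if }0<x_B<\ell\text{,}\\
    \0 & \text{otherwise.}
  \end{cases}
\]
Set $\mu'=\mu_n\cup\bigcup_{j\in B\cup D\cup R}\{(j)\}$ and let the target be $y=\0^{q_{k_n}}\1^{\ell'}\0^n\1$. As in Example~\ref{ex:counter}, $f,\mu'$ are computable in time polynomial in $n$, which is given in unary, so the reduction is polynomial.

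The correctness proof would reuse the case analysis of Theorem~\ref{thm:cst} on the initial counter value $x_B$. For every $x$ with $x_B\neq0$ the counter saturates before the step ends, so $D$ is flushed to $\0^n$ and, by the modified branch, $R$ is flushed to $\0$; hence $\mmjoblock{f}{\mu'}(x)=\0^{q_{k_n}}\1^{\ell'}\0^n\0\neq y$. For every $x$ with $x_B=0$ the counter runs through $0,1,\dots,\ell-1$ without ever entering the saturating branch, $D$ is set to $C(\tilde{x})$ at the first substep and then iterates $C$, and $R$ is set to $\tilde{x}_i$ at the first substep and then records bit $i$ along the orbit of $\tilde{x}$; since $\ell-1\geq 2^n$ the whole orbit is visited, so $\mmjoblock{f}{\mu'}(x)=\0^{q_{k_n}}\1^{\ell'}\0^n\beta$ with $\beta=\1$ exactly when the \textbf{Iter-CVP} instance is positive. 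Therefore $y$ has a preimage if and only if the instance is positive: $x=\0^{q_{k_n}}\0^{\ell'}\tilde{x}\0$ witnesses it in the positive case, and in the negative case every image ends in bit $\0$, so no configuration maps to $y$.

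The main obstacle is the one that already forced the adaptations between Theorems~\ref{thm:image}, \ref{thm:fixedpoint} and~\ref{thm:cst}: a preimage question gives no control over the starting configuration, so the construction must guarantee that \emph{no} configuration — in particular none with a ``garbage'' counter value $\geq\ell$, which exists whenever $\ell$ is not a power of two — accidentally maps to $y$ in the negative case. The saturating counter is exactly what makes this work, collapsing every configuration with $x_B\neq0$ onto a single image whose last bit is pinned to $\0$ while leaving the $x_B=0$ configurations free to evaluate \textbf{Iter-CVP} into the last bit; carefully checking these two facts (notably that the last substep still lies in the recording range $0<x_B<\ell$, so the accumulated disjunction survives to the image) is the technical heart of the argument.
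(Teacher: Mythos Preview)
Your argument is correct and arrives at the same result as the paper, but by a different adaptation. The paper reuses the \emph{modular} counter of Theorem~\ref{thm:image} (increment modulo $\ell$, freeze when $x_B\geq\ell$) and sets the target to $y=\0^{q_{k_n}}\0^{\ell'}\0^n\1$: since the counter returns to its initial value after $\ell$ substeps, $y_B=\0^{\ell'}$ forces $x_B=0$ directly, and then the iteration of $C$ from $\tilde{x}$ and the recording by $R$ proceed exactly as in your analysis. This needs only two branches for $f_R$ (reset to $\tilde{x}_i$ when $x_B=0$, record otherwise), and the ``garbage'' configurations with $x_B\geq\ell$ are discarded simply because their counter is frozen and hence never equals $\0^{\ell'}$.

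Your route instead recycles the \emph{saturating} counter of Theorem~\ref{thm:cst}, keeps the target at $y_B=\1^{\ell'}$, and isolates the answer by flushing $R$ to $\0$ once the counter has overshot. This is perfectly valid, but it silently uses one extra fact: the flush only fires when $x_B\geq\ell$, so you need the saturated value $\1^{\ell'}=2^{\ell'}-1$ to satisfy $2^{\ell'}-1\geq\ell$, i.e., $\ell$ must not be a power of two. That holds here because $\ell=\prod_i p_i$ is a product of at least two distinct primes (Lemma~\ref{lem:primes} gives $k_n\geq 2$ for $n\geq 2$), but it is worth stating explicitly. With that remark added, both reductions are equally clean; the paper's has one fewer case in $f_R$ and avoids the power-of-two check, while yours has the virtue of reusing Theorem~\ref{thm:cst} almost verbatim.
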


The difficulty in this reduction is that we need to take into account the image of every configuration $x$.
We modify the preceding construction by setting automata $D$ to $\tilde{x}$
when the counter $B$ encodes $0$.\\[.5em]

\begin{proof}
  The algorithm for \textbf{BP-Preimage} computes the image of each configuration
  (enumerated in polynomial space with a simple counter) using the same procedure as \textbf{BP-Step}
  (Theorem~\ref{thm:image}),
  and decides whether there is some $x$ such that $\mmjoblock{f}{\mu}(x)=y$.

  \medskip

  Given an instance $C:\B^n\to\B^n$, $\tilde{x}\in\B^n$, $i\in\entiers{n}$ of \textbf{Iter-CVP},
  we construct the same block-parallel update schedule $\mu'$ as in the proof of Theorem~\ref{thm:image},
  and modify the local functions of automata $D$ and $R$ as follows:
  \[
    f_D(x)=
    \begin{cases}
      C(\tilde{x})&\text{if }x_B=0\\
      C(x_D)&\text{if }0<x_B<\ell-1\\
      \0^n&\text{otherwise}
    \end{cases}
    \qquad
    f_R(x)=
    \begin{cases}
      \tilde{x}_i&\text{if }x_B=0\\
      x_R\vee x_{q_{k_n}+\ell'+i}&\text{otherwise}
    \end{cases}
  \]
  The purpose is that $D$ iterates the circuit from $\tilde{x}$ when the counter is initialized to $0$,
  and that $R$ records whether the $i$-th bit of $D$ has been in state $\1$
  (including the initial substep).
  We set $y=\0^{q_{k_n}}\0^\ell\0^n\1$.

  If the \textbf{Iter-CVP} instance is positive,
  then we have $\mmjoblock{f}{\mu'}(\0^{q_{k_n}}\0^\ell\0^n\0)=y$
  (automata $B$ go back to $\0^{q_{k_n}}$,
  automata $D$ iterate circuit $C$ from $\tilde{x}$ and end in state $\0^n$,
  and automaton $R$ has recorded that the $i$-th bit of $D$ has been to state $\1$).

  Conversely, if there is a configuration $x$ such that $\mmjoblock{f}{\mu'}(x)=y$,
  then the automata from the counter $B$ must have started in state $x_B=\0^{q_{k_n}}$,
  because of the increment modulo $\ell$ which is the number of substeps.
  We deduce that $D$ iterate circuit $C$ for the whole orbit of $\tilde{x}$ and end in state $\0^n$,
  and that automaton $R$ records the answer to the \textbf{Iter-CVP} instance.
  Since it it ends in state $y_R=\1$ by our assumption that $\mmjoblock{f}{\mu'}(x)=y$,
  we conclude that it is positive.
\end{proof}

\noindent
\decisionpb{Block-parallel reachability}{BP-Reachability}
{$(f_i:\B^n\to\B)_{i\in\entiers{n}}$ as circuits, $\mu\in\BPn$, $x,y\in\B^n$.}
{does $\exists t\in\N:\mmjoblock{f}{\mu}^t(x)=y$?}

\begin{theorem}
  \label{thm:reach}
  {\normalfont \textbf{BP-Reachability}} is $\PSPACE$-complete.
\end{theorem}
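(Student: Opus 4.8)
The plan is to prove membership in $\PSPACE$ by a direct orbit-traversal argument, and $\PSPACE$-hardness by exploiting the fact that the parallel update schedule $\mu_{\texttt{par}}\in\BP_n$ (for which $\mmjoblock{f}{\mu_{\texttt{par}}}=f$) is a particular block-parallel schedule, so that it already suffices to prove hardness in the parallel case. This is why the result is "unsurprising": reachability is folklore-hard for iterated circuits, and block-parallel schedules only generalise them.

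For membership, given $f$, $\mu$, $x$ and $y$ I would iterate the one-step map $\mmjoblock{f}{\mu}$ starting from $x$, storing only the current configuration ($n$ bits) and a binary counter $t$ up to $2^n$. Each step is computed in polynomial space by the procedure underlying Theorem~\ref{thm:image} (compute the least common multiple of the o-block sizes, then run that many substeps using one pointer per o-block, each substep being a polynomial-time evaluation of the local circuits). The algorithm accepts as soon as the current configuration equals $y$, and rejects once $t$ exceeds $2^n$, since by then the orbit has entered its cycle and $y$ will never appear if it has not already; this uses only polynomial space.

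For hardness, I would reduce from \textbf{Iter-CVP} using $\mu=\mu_{\texttt{par}}$, so that only the parallel dynamics is involved. From an instance $C:\B^n\to\B^n$, $\tilde{x}\in\B^n$, $i\in\entiers{n}$, build $f$ on $n+1$ automata: components $\entiers{n}$ simulate $C$ while a flag automaton $R$ (component $n$) records whether bit $i$ has ever been in state $\1$, and once $R=\1$ the whole network freezes at the canonical target configuration $\0^n\1$:
\[
  f(x)=
  \begin{cases}
    C(x_{\entiers{n}})\,(x_R\vee x_i) & \text{if } x_R=\0\text{,}\\
    \0^n\1 & \text{if } x_R=\1\text{.}
  \end{cases}
\]
Set the source to $x=\tilde{x}\,\0$ and the target to $y=\0^n\1$. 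If the \textbf{Iter-CVP} instance is positive, then bit $i$ of some iterate $C^t(\tilde{x})$ equals $\1$, so $R$ turns $\1$ after finitely many parallel steps and $y$ is reached at the next step; if it is negative, then bit $i$ stays $\0$ along the whole orbit of $\tilde{x}$, $R$ stays $\0$ forever, the freeze rule never triggers, and $y$ (whose last bit is $\1$) is never reached. This is a polynomial-time many-one reduction producing a legitimate instance of \textbf{BP-Reachability}, since $\mu_{\texttt{par}}\in\BP_n$. Equivalently, one may reduce from the acceptance problem of a deterministic linear-bounded automaton, simulated step by step under $\mu_{\texttt{par}}$, with $y$ a unique cleaned-up accepting configuration.

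The only delicate point — the "main obstacle" — is that reachability asks for a fixed target $y$, whereas the orbit of $\tilde{x}$ under $C$ can be arbitrarily long and tangled, so $y$ cannot simply be placed inside that orbit. The freeze-to-canonical-target gadget above is precisely what collapses a "has bit $i$ ever been $\1$" question into a single reachable configuration, and it is also the ingredient that makes even the parallel version $\PSPACE$-hard.
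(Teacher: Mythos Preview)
Your proof is correct. Membership is handled exactly as in the paper. For hardness, however, you take a genuinely different route: the paper reduces from the reachability problem for \emph{reaction systems} (known $\PSPACE$-complete from~\cite{J-Dennunzio2019}), observing that a reaction system is nothing but a BAN under the parallel schedule, so the reduction is a direct syntactic translation of reactions into local functions with no gadgetry at all. You instead reduce from \textbf{Iter-CVP}, which requires your freeze-to-canonical-target trick to turn an ``eventually bit $i$ is $\1$'' question into a fixed target $y$; this is a bit more work but has the advantage of being entirely self-contained within the problems already introduced in the paper, without invoking an external hardness result. Both approaches exploit the same core observation, namely that $\mu_{\texttt{par}}\in\BPn$ already suffices, so the block-parallel machinery plays no role in the hardness direction.
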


\begin{proof}
  The problem belongs to $\PSPACE$, because is can naively be solved by simulating
  the dynamics of $\mmjoblock{f}{\mu}$ starting from configuration $x$, for $2^n$ time steps.

  \medskip

  Reachability problems in cellular automata and related models are known to be $\PSPACE$-complete
  on finite configurations~\cite{J-Sutner1995}.
  We reduce from the reachability problem for reaction systems,
  which can be seen as a particular case of Boolean automata networks,
  and is also known to be $\PSPACE$-complete~\cite{J-Dennunzio2019}.
  Given a reaction system $(S,A)$ where $S$ is a finite set of entities,
  and $A$ is a set of reactions of the form $(R,I,P)$
  where $R$ are the reactants, $I$ the inhibitors and $P$ the products,
  we construct the BAN of size $n=|S|$ with local functions:
  \[
    \forall i\in\entiers{n}: f_i(x)=\bigvee_{\substack{(R,I,P)\in A\\\text{such that }i\in P}}
    \left( \bigwedge_{j\in R}x_j \wedge \bigwedge_{k\in I}\neg x_k \right).
  \]
  A configuration $x\in\B^n$ of the BAN corresponds to a state of the reaction system with
  each automaton indicating the presence or absence of its corresponding entity.
  The parallel evolution of $f$ (under $\mu_\texttt{par}$) is in direct correspondance with the evolution
  of the reaction system.
\end{proof}

\begin{proof}[Proof of theorem~\ref{thm:idmodp}]
  Given a formula $\psi$ on $n$ variables, $m$ and $i$ in unary, 
  we apply Lemma~\ref{lem:gn}
  to construct $g_n,\mu_n$ on automata set $P=\entiers{q_{k_n}}$.
  Automata from $P$ have identity local functions,
  and the number of substeps is $\ell=|\varphi(\mu_n)|>2^n$.
  Let $p_i$ be the $i$-th prime number.
  We add:
  \begin{itemize}[nosep]
    \item $\ell'=\lceil\log_2(\ell)\rceil$ automata numbered $B=\{q_{k_n},\dots,q_{k_n}+\ell'-1\}$,
      implementing a $\ell'$ bits binary counter that increments
      modulo $\ell$ at each substep, except for configurations with a counter
      greater of equal to $\ell$ which are left unchanged.
    \item $\ell''=\lceil\log_2(p_i)\rceil$ automata numbered $R=\{q_{k_n}+\ell',\dots,q_{k_n}+\ell'+\ell''-1\}$,
      whose local functions are:
      \[
        f_R(x)=\begin{cases}
          x_R - m + 1 \mod p_i &\text{if } x_B=0 \text{ and } x_B \text{ satisfies } \psi\\
          x_R - m \mod p_i &\text{if } x_B=0 \text{ and } x_B \text{ does not satisfy } \psi\\
          x_R + 1 \mod p_i &\text{if } 0<x_B<2^n \text{ and } x_B \text{ satisfies } \psi\\
          x_R &\text{otherwise.}
        \end{cases}
      \]
  \end{itemize}
  We also add singletons to $\mu_n$ for each of these additional automata,
  with $\mu'=\mu_n\cup\bigcup_{j\in B\cup R}\{(j)\}$.
  The resulting dynamics of $\mmjoblock{f}{\mu'}$ proceeds as follows.

  Configurations $x$ such that $x_B\geq\ell$ verify $\mmjoblock{f}{\mu'}(x)=x$,
  because all local functions are identities in this case.
  For configurations $x$ such that $x_B<\ell$,
  during the dynamics of substeps from $x$ to $\mmjoblock{f}{\mu'}(x)$,
  the counter $x_B$ takes exactly once the values from $0$ to $\ell-1$,
  with $\mmjoblock{f}{\mu'}(x)_B=x_B$ (it goes back to its initial value).
  Meanwhile, at each substep with $x_B<2^n$,
  the record of automata $R$ is incremented if and only if $x_B$ satisfies $\psi$,
  with a substraction of $m$  when $x_B=0$.
  Since $\ell>2^n$ each valuation of $\psi$ is checked exactly once,
  and $x_R$ gets added the number of models of $\psi$ minus $m$, modulo $p_i$
  (when $2^n\leq x_B<\ell$ automata $R$ are left unchanged).
  Consequently, we have $\mmjoblock{f}{\mu'}(x)_R=x_R$ if and only if it has been incremented
  $m$ times modulo $p_i$, i.e., $f,\mu'$
  is a positive instance of \textbf{BP-Identity} if and only if
  $\psi$, $m$, $i$ is a positive instance of \textbf{Mod-SAT}
  (the number of models of $\psi$ is congruent to $k$ modulo $p_i$).
\end{proof}

\begin{proof}[Proof of corollary~\ref{cor:fpspacecomplete}]
  For a fixed reversible cellular automaton (of any dimension),
  given a configuration of size $n$ and a time $t$,
  one can compute in polynomial time a block-parallel update schedule $\mu$
  and circuits for the local functions of a Boolean automata network
  of large enough size (to encode the CA's state space in binary),
  such that:
  \begin{itemize}[nosep]
    \item $|\varphi(\mu)|>t$ (by Lemma~\ref{lem:gn};
      these automata are left aside with identity local functions),
    \item one substep of $\mmjoblock{f}{\mu}$ simulates
      one step of the CA; and
    \item $\mmjoblock{f}{\mu}$ is bijective (because the CA is reversible, padding with identity).
  \end{itemize}
  This gives a functional Turing many-one reduction from Theorem~\ref{thm:rca}.
\end{proof}

\end{document}